\newtheorem{prop}{Proposition}
\newtheorem{lemma}{Lemma}
\newtheorem{definition}{Definition}
\newcommand*{\argmin}{\operatornamewithlimits{argmin}\limits}
\def\BibTeX{{\rm B\kern-.05em{\sc i\kern-.025em b}\kern-.08em
    T\kern-.1667em\lower.7ex\hbox{E}\kern-.125emX}}
\newcommand{\rme}{\mathrm{e}}
\newcommand{\Ind}{\mathds{1}}
\newcommand{\Prob}{\mathrm{P}}
\newcommand{\new}{\textcolor{blue}}
\newcommand{\setN}{\mathcal{N}}
\newcommand{\Tree}{\mathbb{T}}
\newcommand{\voted}{\mathrm{v}}
\begin{document}

\title{Carnot: A highly Scalable and Responsive BFT Consensus protocol}

\author{Mohammad M. Jalalzai\thanks{moh@status.im}}
\author{Alexander Mozeika }%\thanks{alexander.mozeika@status.im}
\author{Marcin P. Pawlowski}%\thanks{marcin@status.im}
\author{Ganesh Narayanaswamy}%\thanks{ganesh@status.im}
\affil{Status Research \& Development GmbH, Baarerstrasse 10, Zug, Switzerland}

%\author{
%\hspace{5cm}
%\IEEEauthorblockN{Mohammad M. %Jalalzai\IEEEauthorrefmark{1}\IEEEauthorrefmark{2}
%} 

%\IEEEauthorblockA{ 
%\IEEEauthorrefmark{1}School of Engineering, The University of British Columbia, Kelowna, Canada }

%\IEEEauthorblockA{ \IEEEauthorrefmark{2}Blockchain@UBC, The University of British Columbia, Vancouver, Canada }

%\IEEEauthorblockA{ %\hspace{6cm}
%{\{m.jalalzai, jianyu.niu, chen.feng, fangyu.gai\}@ubc.ca} }
%}
%\end{comment}

\maketitle
\thispagestyle{plain}
\pagestyle{plain}
\begin{abstract}
%We present Carnot, a leader-based Byzantine Fault Tolerant (BFT)  consensus protocol that is  responsive and operates under the partially synchronous model. Responsive  BFT consensus protocols operate at the speed of wire and have instant finality.  One of the main bottlenecks of scaling  these protocols is the authenticator computational complexity. On a  network with $N$ nodes, the Carnot operates with the \new{number of authenticators verified and/or aggregated to be} \old{authenticator verification complexity of}  $O(\log (N))$ per node.
%This is a significant improvement when compared  to the current $O(N)$ state-of-the-art. This property of  Carnot allows it to scale to tens to hundreds of thousands of nodes. To the best of our knowledge, this is the first partially synchronous responsive BFT protocol that scales to such a large number of nodes  and achieves instant finality. We envisage that Carnot is a significant step towards building a bridge between the classical BFT consensus and the blockchains.
We present Carnot, a leader-based Byzantine Fault Tolerant (BFT)  consensus protocol that is  responsive and operates under the partially synchronous model. 
Responsive BFT consensus protocols exhibit wire-speed operation and deliver instantaneous finality, thereby addressing a fundamental need in distributed systems. A key challenge in scaling these protocols has been the computational complexity associated with authenticator verification. We demonstrate that Carnot effectively addresses this bottleneck by adeptly streamlining the verification and aggregation of O(log(N)) authenticators per node. This notable advancement marks a substantial improvement over the prevailing O(N) state-of-the-art approaches. Leveraging this inherent property, Carnot demonstrates its capacity to seamlessly scale to networks comprising tens to hundreds of thousands of nodes. We envision Carnot as a critical stride towards bridging the gap between classical BFT consensus mechanisms and blockchain technology.

\end{abstract}

\begin{IEEEkeywords}
BFT, Blockchain, Consensus, Latency, Performance, Security.
\end{IEEEkeywords}

\section{Introduction}
Recently there has been a lot of interest in using BFT-based consensus protocols in blockchains \cite{Hot-stuff, Algorand, Hentschel2020FlowSC, libra-BFT, Casper,ethereum}. The main reason for this attention  is the higher performance of these protocols,  i.e. high throughput, low latency, and the absence of a need for expensive mining to reach a consensus on a value. The most recent noticeable event related to the latter was the Ethereum's merge \cite{ethereumMerge}, where Ethereum has merged its main chain, which used PoW consensus, into the PoS chain that runs a BFT-based consensus thereby reducing the cost of gas by  $99.5 \%$. 

BFT is the ability of a computer system to tolerate   arbitrarily, i.e. Byzantine\cite{Lamport:1982:BGP:357172.357176},   faults of its components during the operation. State machine replication  (SMR) refers to the ability of a system to replicate its state across $N$ nodes in a deterministic manner. Hence, the BFT SMR system will provide SMR services despite  efforts of (at most $M$) Byzantine nodes to break the system. As is common, we are interested in a partially synchronous communication model \cite{Dwork:1988:CPP:42282.42283} where there is a known maximum bound on message delivery after some unknown \textit{global stabilization time} (GST).

BFT-based consensus algorithms can be divided into two categories. The first category  comprises of  responsive, but not scalable  algorithms\cite{Fast-HotStuff,2018hotstuff,Castro:1999:PBF:296806.296824,SBFT}, while the  second category is comprised of scalable  but not responsive algorithms which also, suffer from chain reorganization, and the lack of instant finality \cite{Ethereum-Gasper,Casper}. Therefore, there is a need for a consensus protocol that is responsive, has instant finality, is not prone to chain reorganization, and scales to tens of thousands and potentially hundreds of thousands of nodes. {In the next three paragraphs}, we delve into  essential features that are sought after, and highlight \ {all} repercussions that arise from their absence\new{,} in  {a} consensus protocol.

\textbf{Responsiveness.}
A distinguishing characteristic of partially synchronous BFT consensus is its  \emph{responsiveness}. The latter means that, given a non-Byzantine leader, the protocol operates at the speed of the network, rather than being bound by a maximum message delay. In other words, during normal execution, the protocol is event-driven and does not require waiting for a round completion. On one hand  on a partially synchronous communication model, optimistic responsiveness holds greater relevance as it is achieved in optimistic scenarios, typically after the global stabilization time (GST). On the other hand, protocols that lack responsiveness often introduce a block (or slot)  time parameter, which dictates that the protocol advances to the next round only at the end of the designated time interval. Unfortunately, this reliance on time not only slows down the protocol but also creates vulnerabilities to timing-based attacks. A recent example of such an attack is the one that affected the Ethereum Gasper, which exploited the protocol's time dependency \cite{Eth-PoS-Timing-Attack}, resulting in a loss of approximately $20$ million USD. Furthermore, responsiveness significantly influences another vital property of a consensus protocol known as instant-finality.

\textbf{ Finality.} In blockchains, \emph{finality} refers to the confirmation that a committed block will not be revoked. It is also called safety property in distributed systems consensus.  The time taken  from the moment a client submits a transaction to the network and then receives affirmation of its finality/safety is called \emph{latency}. Therefore, a low latency protocol exhibits the instant-finality property. Instant finality improves the client's experience and paves the way for many blockchain use cases that weren't possible due to the low latency requirements. Moreover, finality prevents frequent chain reorganization. Chain reorganization also opens attack vectors when Proof-of-Stake (PoS) is implemented \cite{TwoAttacksOnEthereum}. Changing the algorithm to address these attacks  is very  expensive and make\new{s} protocol complex \cite{EthereumSingleSlotFinality}.

\textbf{Elastic scalability.} Elastic scalability refers to a consensus protocol's ability to adapt to changes in the number of nodes in a distributed network. Carnot BFT is the first  protocol to possess this property, making it ideal for blockchain networks that expand over time. While BFT consensus was originally designed for small networks, interest in scaling has grown with the development of blockchain technology. Several scalable BFT consensus protocols have been proposed, including HotStuff \cite{2018hotstuff}, Fast-HotStuff \cite{Fast-HotStuff}, Hermes \cite{jalalzai2020hermes}, and SBFT \cite{SBFT}, but they are limited to a few hundred nodes. The Thundrella \cite{Thundrella} and Ethereum Gasper \cite{Ethereum-Gasper} take a different approach, using BFT consensus within a small committee chosen from a larger pool of nodes. However,  the former relies on proof-of-work as a fallback solution, and the latter  has latency issues and a high risk of chain reorganization. This leads to the following question.

\textbf{
Is it possible to  build a  protocol that has  elastic scalability, is  responsive and achieves  Instant finality? 
}
In this paper, we present Carnot  as the first, to the best of our knowledge, responsive  BFT-based consensus protocol with elastic scalability. Carnot can scale to accommodate a large number of nodes whereas it can still provide optimistic responsiveness, which helps it to achieve instant finality. 
Moreover, it also prevents attacks on PoS due to chain reorganization.

Authenticator complexity is one of the main bottlenecks in the consensus protocols \cite{2018hotstuff, EthereumSingleSlotFinality}.  To avoid chain reorganization and achieve responsiveness and finality, a protocol has to make sure the majority of nodes have attested  the proposal. However,  this verification requires at least  $O(N)$ signatures to be aggregated and verified. Although using a peer-to-peer network significantly reduces the cost of sending  authenticators,  the verification cost remains a problem. 

To address this issue, responsive pipelined BFT-based protocols \cite{2018hotstuff,FastHotStuff-IEEE,No-Commit-Proofs, Marlin} have successfully reduced the authenticator verification cost to  $O(N)$. These protocols typically operate by having the leader collect votes  from the previous block and construct a quorum certificate,   $QC$, from these votes, which is then included in the current block. Upon receiving the proposal, each node needs to verify the $QC$, which serves as proof that more than $2N/3$ nodes in the network have voted. While this approach works effectively for networks with tens or a few hundred nodes, it becomes  inefficient as the network size grows to thousands or tens of thousands of nodes due to the aggregation and verification of votes from such a large number of nodes.

Given  that the authenticator verification and/or aggregation  cost, while maintaining responsiveness and finality, is the  main bottleneck in scaling  consensus protocols, we provide an intuitive solution to this problem. To this end, we designed a protocol where a node can verify that  more than  $2N/3$ of nodes, where $N$ is in     thousands ( even tens {or hundreds} of thousands), have voted by verifying only a small number of authenticators. Further details  of this  protocol  will be discussed below. 

Assuming that each node in the network is assigned to a committee, with committees forming  binary tree overlay  (see  Figure \ref{fig:comm-tree}), and each node has successfully verified a block, proposed by the leader, the voting process begins at the  leaves of this tree. Members of leaf committees send their votes to the members of  parent committees. Then members of  a parent committee collect votes from at least two-thirds of their child committee  and forward  the votes to  their parent. This  process continues until  votes reach the root committee. Next members of the root committee send their votes, along with $QC$ representing at least two-thirds of  votes from the members of the root committee's children, to attest to the proposal \footnote{A member of the root committee will also forward any additional distinct vote it has received after casting its vote. More about this will be discussed later.}. Finally, the leader  proposes a block and includes  proof of votes from the root committee and its children in the form of $QC$ (from the parent block) appended to the current block.

The described above process of verifying  $QC$, constructed from signatures of the root committee and its child committees\footnote{The count of committees situated at the upper levels of the binary tree overlay, whose members' signatures are encompassed within the $QC$, can be parameterized to offer a customizable approach.}, ensures that the majority of nodes in the network have indeed  voted for the block. By collecting and aggregating votes from these committees, the protocol establishes strong evidence that more than two-thirds of the network participants have endorsed the proposed block. This low-cost verification step provides confidence in the validity of the block and strengthens the consensus achieved by the protocol.

The  {results of } section \ref{section:failures}  {suggest} that committee size{s} grow logarithmically  {with respect to the } total number of nodes, $N$, in the network. {The latter} allows  {for a} node {in a parent committee} to verify signatures from {its} child committees {(see Figure \ref{fig:comm-tree-connect})}  {by collecting} only $O(\log N)$ signatures. In a similar manner, the number of signatures that  need to be aggregated in the top three committees is also $O(\log N)$. 
Hence the protocol is designed to scale  {by  reducing the cost}  of {the} signature verification and aggregation. This scalability ensures  the ability {of the protocol} to uphold {the} responsiveness and finality {properties}.

%Development of blockchains  interests in scaling the BFT consensus has b

%But there is one important problem. PBFT-based protocols that employ Proof-of-Stake (PoS)  cannot scale as Proof-of-Work (PoW) protocols. 

%In a partially synchronous model, a responsive protocol operates with the speed of the wire.

%The scalability of these protocols \cite{Castro:1999:PBF:296806.296824, Aardvark, BFT-SMART, RBFT, jalalzai2020hermes, Fast-HotStuff,2018hotstuff} can reach up to a couple of hundred nodes. 
\section{Related Work} \label{Section: Related Work}
Currently, there is  no other, to the best of our knowledge, consensus protocol  that combines elastic scalability, finality and responsiveness. While there are consensus protocols that exhibit subsets of these properties, achieving all three in a single protocol remains a challenge.

On one hand there are protocols  which are responsive and provide finality, such as HotStuff \cite{2018hotstuff}, Fast-HotStuff \cite{FastHotStuff-IEEE}, Wendy \cite{No-Commit-Proofs}, SBFT \cite{SBFT}, \cite{Proteus1}, Window based BFT \cite{Jalal-Window}, PBFT \cite{Castro:1999:PBF:296806.296824}, BFT Smart \cite{BFT-SMART}, Aardvark \cite{Aardvark}, \cite{Prime} and Zyzzyva \cite{Kotla:2008:ZSB:1400214.1400236}, but the complexity of their authenticators varies  from    $O(N)$ to $O(N^2)$. Another group of consensus protocols prioritizes finality  over responsiveness \cite{Casper,tendermint,POE}. The latter $O(N)$ authenticator complexity.

On the other hand, there are consensus protocols that offer scalability but compromise on finality and responsiveness. However, it should be noted that to the best of our knowledge, there is no known protocol that provides \emph{elastic} scalability. Nevertheless, there are PoS protocols that can accommodate a large number of nodes. One  such  protocol is the Gasper used in Ethereum  \cite{Ethereum-Gasper}. The latter achieves scalability by operating on a small subset of  nodes randomly selected from  the set of \emph{all} nodes. However, it relies on the maximum block generation time interval (or slot time). This time dependency introduces delays between rounds or views, which hinders the attainment of finality and increases the likelihood of forks or chain reorganizations. Consequently, the Gasper protocol provides a suboptimal user experience, becomes susceptible to timing and PoS attacks, and requires additional complexity to address these vulnerabilities.  The latter makes  the protocol more challenging to maintain and to understand  its complexities\cite{EthereumSingleSlotFinality}.

 To summarise, while there are consensus protocols that excel in certain aspects, finding a protocol that combines elastic scalability, finality, and responsiveness in an optimal manner remains a challenge. Here, however, we are proposing  a new consensus protocol, the Carnot, that aims to accommodate all of these properties within a single protocol. By leveraging innovative design principles and mechanisms, Carnot seeks to achieve elastic scalability, providing the ability to adapt to changing network sizes, while maintaining finality and responsiveness. Through extensive research and analysis, we aim to address the limitations of existing protocols and pave the way for a more efficient and robust consensus mechanism in blockchain networks.

%\section{Introduction}\label{Section:Introduction}

\section{System Model and Preliminaries}\label{Section: System Model} 
\subsection{System Model}
We consider a system  of $N$  nodes  where \emph{at most}  $M<\frac{1}{3}N$ nodes may be Byzantine.  $M$ is parameterized to achieve the desired committee size and failure probability. While honest (or correct)  nodes follow the protocol, the Byzantine nodes  may deviate from the latter in an arbitrary manner. We assume a \emph{partial synchrony} model \cite{Dwork:1988:CPP:42282.42283}, i.e. there exists a known bound $\Delta$ on message transmission delay, applicable after an unknown asynchronous period called \textit{Global Stabilization Time} (GST). In practice, the system can only make progress if the $\Delta$ bound  persists for a sufficiently long time, and we simplify the discussion by assuming  that this $\Delta$ bound  is \emph{everlasting}. Furthermore, we assume that \emph{all} messages exchanged in the system are \emph{signed}, and adversaries are computationally bound and cannot forge signatures (or message digests) with more than negligible probability.  Finally, we assume that  Carnot uses   \emph{static adversary} model, i.e.  nodes ``decide'' if they are correct  or corrupted at the beginning of the protocol and do not change this label during its operation.

\subsection{Preliminaries}

\noindent \textbf{View and View Number.}
%A view is identified by a monotonically increasing view number. 
%During each view, a leader is responsible for proposing a proposal. A deterministic function (pseudo-random function) is used
%to select a leader(s) and/ or the committee(s) formation for  a specific view number. 
%Therefore, the leader (or leader) of each view is known to all nodes. A random Beacon \cite{hanke2018dfinity, Random-Oracles-in-Constantinople} is used to generate a random seed for the Pseudo-random function.
A view is identified by a monotonically increasing view number. During each view, a leader is responsible for  submitting a proposal, and a deterministic (pseudo-random) function is used to select a leader for that view. This process ensures fairness and randomness in the leader selection process. The pseudo-random function  is based on a random seed generated by a random Beacon.
%This approach is used in various distributed systems, including the DFINITY blockchain \cite{hanke2018dfinity}, which uses a Threshold Relay consensus algorithm to achieve high scalability and security. The random Beacon is used to generate a unique and unpredictable seed for the leader selection process, making it resistant to manipulation and attack.
 The deterministic function is used for leader selection in every view, but during a failure, the random Beacon is also used to generate a new overlay structure to maintain system availability.
%Overall, the use of a monotonically increasing view number, a deterministic function based on a pseudo-random function, and a random Beacon seed provides a robust and secure leader selection process in distributed systems.

%%%% This part of algorithm optimized and not needed. Need to be re-written.
\begin{comment}

\noindent \textbf{Validation Window}
Is the subtree $T_w$ who's votes a node in the parent subtree needs to receive in order to vote. 
\begin{figure}
    \centering
    \includegraphics[width=8cm,height=4.5cm]{images/Validation-Window.eps}
      %\vspace{-0.1cm}
    \caption{\textbf{A node $j$ in committee $1$ receive 
votes from Committee $3$ containing threshold signatures from committees $3$ and $7$, it can simply verify one Aggregated 
signature for committee $6$ and one Aggregated signature for 
committee $7$ before forwarding its votes upward.}}
    \label{fig:Conflicting blocks}
   %%%\vspace{-0.5cm}
\end{figure}
\end{comment}

%\noindent \textbf{
%Cryptographic Building Blocks.} Carnot uses $k$-out-of-$t$ threshold  signatures \cite{Practical-Threshold-Signatures,Efficient-Threshold}. In threshold signatures nodes share the same public key but hold a distinct private key. 
 %Each node $i$ signs a message $m$, creating a share. The shares from $k$ nodes are collected by a collector node to build a threshold signature. As long as $t$ number of nodes is correct (honest) out of $k$ nodes, the signature of message $m$ can successfully be verified.

\noindent 
\textbf{Quorum Certificate (QC), Aggregated QC and $\textbf{high\_qc}$.}  Quorum Certificate (QC) is a cryptographic proof that attests \ agreement (or support)  on a specific proposal (or timeout message) within a group of nodes. It is created by aggregating  signatures of participating nodes. In the Carnot consensus protocol, there are two types of latest QC, referred to as the local $high\_qc$ and the global $high\_qc$. The local $high\_qc$ represents the most recent QC that a node has observed (or received). It serves as a  reference  which is used by the node to determine the validity and ranking of blocks in its local view.  In contrast to local $high\_qc$, the global $high\_qc$ represents the most recent QC present in the Aggregated QC ($Aggregated\_{qc}$). The latter is a compressed representation of the vector that combines multiple block attestation QCs, providing a collective overview of the consensus state after unhappy path (failure).
In addition to block attestation QCs, Carnot also utilizes timeout QCs.  The latter is a QC specifically associated with timeout messages  and it serves as a proof of the agreement (or acceptance) of the timeout event within  a  group of nodes thereby providing a mechanism to handle failures in the consensus protocol.

{\textbf{Cryptographic scheme selection note.} The selection of the signature aggregation scheme for the implementation of AggQC is a crucial decision. It will directly impact the execution time of the protocol, due to the overhead of signatures aggregation and verification. In addition, the decision should take into account the expected size of a committee. In \cite{li2023performance} it was shown that EdDSA should be favored over BLS for larger deployments (more than 40 validators). The same reasoning is expected for deciding on a source of randomness for the network. A random beacon construction, in addition to the security of the scheme, will also impact the execution time of the protocol. A fully decentralized and multi-party random beacon will increase the security of the network at the expense of longer randomness generation time, especially for large-scale deployments.
Therefore, we leave these decisions open for particular instances of the protocol.}

%A Quorum Certificate (QC) is a type of signature that results from aggregating or thresholding the signatures of a group of nodes. Its purpose is to provide attestation proof for a specific proposal or timeout message. When a block receives a QC, it becomes certified and is ranked according to its view number. Aggregated QC (AggQC) is a vector that results from concatenating multiple block attestation QCs. In Carnot, there are two types of latest QC, or $high\_qc$: the local $high\_qc$, which is the most recent QC seen by a node, and the global high\_qc, which is the most recent QC in the AggQC. QC in Carnot is also used to point to the parent block. Additionally, Carnot has a timeout QC, which is a QC for timeout messages.

\noindent \textbf{Block and Block Tree.}
The transactions received from clients are batched together by  the leader into a block.  The block has a field of the type $QC$. The leader adds  $QC$ of the parent-block, i.e. the preceding block,  of the block  being proposed  to this field. In this way the block points to its parent block using  the $QC$  of the parent block. We note that every block, except the genesis block, must specify its parent block and include  $QC$  of the parent block. In this way all blocks are chained together. 
As there may be forks during failure, each node maintains a  tree,  referred to as the  $blockTree$ , of received blocks. However,  blocks that are committed  are also linearly ordered and hence hold a specific position in the blockchain.

% Two blocks $B_{v+1}$ and $B_{v+2}$ are conflicting if  neither one of them ($B_{v+1}$ and $B_{v+2}$) is predecessor \textcolor{black} or ancestor to the other one. But conflicting blocks
 %have a common predecessor block ($B_{v}$ in this case).
 %The parent block $B_{v}$ is the vertex of the fork, as shown in Figure \ref{fig:Conflicting blocks}.

\noindent \textbf{Chain and Direct Chain.} If  block $B$ is  {stacked atop}   block $B^\prime$ (such that $B.parent=B^\prime$), then these two blocks  form the one-chain. If another block $B^*$ is {placed atop} block $B$, then $B^\prime$, $B$, and $B^*$  form the  two-chain and so on. 
Chains grow in two ways. The first is continuous growth when blocks from consecutive views are directly connected one after the other. For instance, if $B.curView = B'.curView+1$ and $B.parent=B'$, we establish a direct link between $B$ and $B'$, forming a one-direct chain. If $B.curView = B'.curView+1$, $B.parent=B'$, and $B^*.curView = B.curView+1$ with $B^*.parent=B$, a two-direct link is formed among $B'$, $B$, and $B^*$.

However, sometimes blocks might not be generated due to leader or network issues. In these cases, the chain can jump ahead. For example, if $B.curView = B'.curView+k$ ($k>1$) with $B.parent=B'$, there's no direct link between $B$ and $B'$, reflecting a gap in the chain.
\begin{comment}
    
\vspace{0.5cm}

 \begin{figure}
    \centering
    \includegraphics[width=5cm,height=2cm]{images/Conflicting-Blocks.eps}
      %\vspace{-0.1cm}
    \caption{\textbf{A simple case of conflicting blocks}. Block $B_{v-1}$ is fork vertex, and blocks $B_{v}$ and $B_{v+1}$ are two conflicting blocks.}
    \label{fig:Conflicting blocks}
   %%%\vspace{-0.5cm}
\end{figure}

\begin{tabular}{l c p{.25\textwidth}}
\toprule
Symbol       &                  & Meaning \\
\midrule
$T$     &$:\Leftrightarrow$& for overlay tree of committees \\

$C^{L}_{p}$     &$:\Leftrightarrow$& for a committee number $p$ in Tree level $L$ \\
$C_r, C_l$      &$:\Leftrightarrow$& for  left and right child committee \\
$C$   &$:\Leftrightarrow$& for a committee of nodes \\
%$C_{leaf}$  &$:\Leftrightarrow$& for leaf committee \\

$p$  &$:\Leftrightarrow$& for committee number in a specific tree level \\
$L$ &$:\Leftrightarrow$& for tree level \\
\bottomrule
\end{tabular}

\end{comment}

\section{Overlay Tree Formation and Message Dissemination }
\label{Section: Overlay-Formation-And-Msg-Dessimination}
\begin{figure}[!t]
%\vspace*{-7mm}
\setlength{\unitlength}{1mm}
%\begin{center}{
%\hspace*{10mm}
\begin{picture}(100,75)
\put(7,0){\includegraphics[height=75\unitlength]{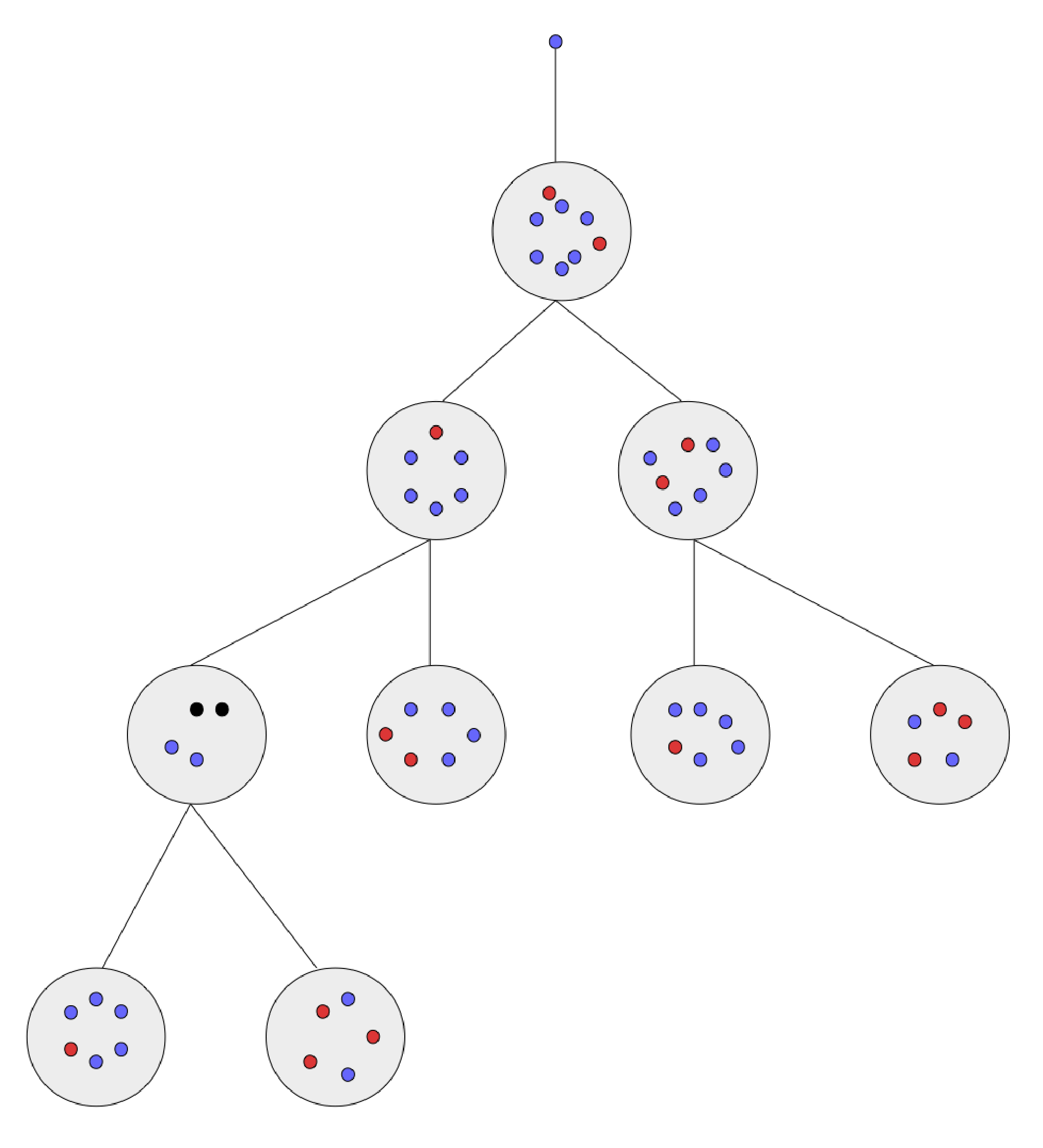}}
\put(37,71){$C_0$}%level 0
\put(33,59){$C_1$}%level 1
\put(25,43){$C_2$} \put(42,43){$C_3$} %level 2
\put(10,26){$C_4$} \put(26,26){$C_5$} \put(43,26){$C_6$}  \put(59,26){$C_7$} %level 3
\put(3,6){$C_8$}  \put(19,6){$C_9$} %level 4
\end{picture}
%}\end{center}
%\vspace*{-6mm}
\caption{The $N$ nodes in the set $\setN$, represented by dots,   are assigned in a \emph{random} and \emph{unbiased} way  into the \emph{odd} number, $K$,  of committees $C_\mu$, represented by circles, in a such a way that $\setN=\bigcup_{\mu=1}^{K} C_{\mu}$. The \emph{leader} {, represented by the committee $C_0\in \setN$,} is a special node connected to the root committee $C_1$. 
}
\label{fig:comm-tree} 
\end{figure}
%

 %\section{Message Dissemination}
The Carnot presented here is a highly scalable variant of the family of responsive pipelined consensus protocols~\cite{Hot-stuff,Fast-HotStuff, No-Commit-Proofs}.  It uses the underlying peer-to-peer network layer for communication among nodes 
 %Carnot's peer-to-peer communication framework API  provides three primitives, namely broadcast(data), multicast(message) and wait-for($\mu$). $\mu = 2c+1$ votes or $new\_view$ messages when a leader collects the messages. Whereas, $\mu = \frac{4c}{3}+1$ when a node collects votes or $new\_view$ messages from its child committees.
 % During each round, the leader  proposes a block to all nodes by broadcasting. Whereas during the second phase, the leader uses the primitive \textsc{wait-for($2c+1$)} to collect more than two-thirds of votes or $new_view$ messages from distinct nodes in the network(root committee and its child committees). nodes use \textsc{Multicast(message)} to multicast  a message (vote, QC, New View message.)  to their parent committee. A node uses \textsc{wait-for($\frac{4c}{3}+1$)} to collect the  votes, QCs, or $new_view$ messages from members of its child committees.
% Carnot's p2p communication framework provides reliable propagation of messages among nodes.
% Both of these primitives help to achieve the following properties.
 %\begin{definition}[Reliable Dissemination]
 %After the GST, and when the leader is correct, all the correct nodes deliver the proposal sent by the leader.
 %\end{definition}
 %\begin{definition}[Fulfillment]
 %After the GST, and when the current and the previous leaders are correct, the votes collected by the current leader is $2f+1$ votes. 
 %\end{definition}
  , but in order  to scale, we use an  overlay tree of committees (see Figure \ref{fig:comm-tree}) for  voting (or $new\_view$\footnote{$new\_view$ message is used by nodes in the network to enter a new view when failure is detected. More details are given in subsection \ref{Subsection: Detailed Description}} message collection). The tree of committees is defined as follows
   \begin{definition}[Committee Tree]
   \label{def:comm-tree}
  For the set of  nodes $\setN=\bigcup_{\mu=1}^{K} C_{\mu}$, where $N=\vert \setN\vert$, the committee tree $\Tree$   is a connected, undirected, and acyclic graph $\mathbb{T} = (V, E)$, where $V=\{C_{0},C_{1},\ldots,C_{K}\}$ is the set of vertices and $E$,  {with} $\vert E\vert=K$, is the set of edges. Here the committee $C_0$ contains exactly one node, i.e.  $\vert C_0\vert=1$, and $C_0\in \setN$.  
\end{definition}
  In this overlay structure,  which is independent of the network layer each  vertex in the tree is a committee of 
nodes.  The latter is used, instead of a single node,  to improve robustness. 
\textbf{The novel quorum system of the Carnot} helps the protocol  to scale while maintaining  responsiveness and instant finality. However, this design choice results in a probabilistic nature of the network  and of the consensus layer. A more detailed discussion of this will be presented later, but now we define  properties that must be satisfied by the tree  of committees and peer-to-peer network.

\begin{definition}[Robust Child Committees]
%Robustness of child committees
\label{def:robust-child-comm}
The child committees $C_{\nu}$ and $C_{\mu}$ which share the same parent in the committee tree $\Tree$  are robust if at least $\frac{2(N_{\nu}+N_{\mu})}{3}$ nodes, where  $N_{\nu}=\vert C_{\nu}\vert$ and $N_{\mu}=\vert C_{\mu}\vert$, are correct in these committees. 
\end{definition}

%\old{\begin{definition}[Robust Committee Tree]
%\label{def:robust-comm-tree}
%A tree of committees $\Tree$ is robust if all child committees sharing the same parent are robust.
%\end{definition}}

\begin{definition}[Probabilistic Reliable Dissemination]
  After the GST, and when the leader is correct and the peer-to-peer network is robust, all the correct nodes deliver the proposal sent by the leader with high probability (w.h.p).

\end{definition}

%\old{ \begin{definition}
 %[Probabilistic Fulfillment]
 %\label{def:PF}
 %After the GST, and when the current and  previous leaders are correct, and the overlay tree $\Tree$ is robust then the number of votes collected by the current leader  from the root committee and its child committees $\bigcup_{\mu=1}^{3} C_{\mu}$  is  $\frac{2 \sum_{\mu=1}^{3} N_{\mu}}{3}+1$, where $N_{\mu}=\vert C_{\mu}\vert$ (w.h.p).
 %\end{definition} }
%xxx

For liveness, we assume that  peer-to-peer network  {underlying the} Carnot  satisfies  {properties of} the probabilistic reliable dissemination. In practice,  reliability of the  peer-to-peer networks will depend on the number of peers  a node is forwarding the received message to. The  details of how the data is received and reshared in the peer-to-peer overlay network are provided in \cite{WakuSpec,WakuPaper}.
 
\subsection{Overlay Formation}
The formation of the overlay structure (see Figure \ref{fig:comm-tree}) takes a set of nodes $\setN$, size of the committee $n$, that is calculated for a \emph{given} probability of failure $\delta$ as described in the Algorithm \ref{alg:comm}, and a random input $\xi$. We assume that values of these parameters  are the same for all nodes participating in the overlay formation process. The latter is straightforward, as presented in the Algorithm~\ref{alg:form-tree}, and can be described as follows. First, the set of nodes $\setN$ is permuted with the random input $\xi$. This procedure must be deterministic and all permutations must be equally likely. To this end we are using the Fisher-Yates shuffle as described in \cite{knuth1997art}. 

Next, using parameter $n$, we would like to divide the set $\setN$ into a set of committees of size $n$. In particular $\setN = \bigcup_{\mu=1}^{K} C_{\mu}$, where $C_{\mu} \subseteq\setN$ is the committee $\mu$, and $K$ is the number of committees. We note that $  C_{\mu}\bigcap C_{\nu}=\emptyset$ for all  $\mu\neq\nu$. The number of committees $K = \lfloor N / n \rfloor$, but $n$ might not split the set $\setN$ into subsets of equal sizes and a reminder (of nodes) is expected. To mitigate this we are adding an extra node to the last $r$ committees, where $r$ is the reminder, i.e. each committee indexed by $\mu \in <K-r, K>$ has $n+1$ nodes, while the rest have $n$ nodes in each committee. We note that by  having (almost) equal committee sizes we are ensuring that likelihood of one type of failure (see Figure \ref{fig:failures}), which depends on committee sizes (see section \ref{section:failures}),  is not decreased be increasing likelihoods of other type of failures. Also  having   uneven  committee sizes can increase the network latency, e.g. a root committee with $n+r$ nodes would require more resources, depending on the $n$ to $r$ ratio, to move forward the execution of the consensus.    %set of trailing nodes is added to the root committee, which is the first of the committees $C_{1}$. We are achieving this by moving the offset of the set $\setN$ by the reminder, and treating the set of nodes with indexes $<1, c+offset>$ of the nodes in set $\setN$ as part of the first committee, where $offset = N \mod c$ is the reminder. The following committees are constructed from the nodes with $index \in <offset + i*c+1, offset + i*2*c>$, where $i$ iterates up through all committees up to $cLen$.

Finally, the set of committees $\{C_1, \ldots, C_K\}$ is mapped onto a binary tree where the index $\mu$ of the committee $C_{\mu}$ mapped onto one of its  vertices. The latter is implemented via a simple scheme where the parent vertex, labelled by $\mu\new{,}$ has the left and right child vertices labelled, respectively, by $2\mu$ and $2\mu+1$ Thus, the root committee is $C_{1}$, its left child is $C_{2}$ and its right child is $C_{3}$. Children of $C_{2}$ are formed with $C_{4}$ and $C_{5}$, and so on (see Figure \ref{fig:comm-tree}). %Then, the level of the tree $L=\lfloor \log_{2}(i) \rfloor$, where $i$ is the index of the committee 
 %in the committees vector. The committee number in specific tree level $p = i - 2^{L}$. Hence, we define the position of a committee on a specific tree level $C_{p = i - 2^{L}}^{L=\lfloor \log_{2}(i) \rfloor}$ as a function of $i$.

We note that the generation of the overlay tree and node assignment to committees is deterministic. Therefore, verification and interpretation of the overlay construction and node membership does not require additional synchronization and is given by common view of the algorithm inputs. 

\begin{algorithm}[tb]
\caption{Overlay Tree Formation}
\label{alg:form-tree}
\begin{algorithmic}[1]
\small
\Require $\setN$ is set of nodes, $n$ is size of the committee, $\xi$ is random value
\Ensure an overlay tree is formed
\State $K \gets \lfloor N / n \rfloor$ \Comment{$K$ is the number of committees}
\State $r \gets N \mod n$ \Comment{$r$ is a reminder of nodes}
\State $nodes \gets \Call{Shuffle}{\setN, \xi}$ \Comment{Set of nodes $\setN$ is shuffled with a random seed $\xi$}
%\State $C_{1} \gets \Call{Assign}{nodes, 1, offset + c}$ \Comment{To the root committee $C_{1}$ we assign nodes with indexes from $1$ up to $offset + c$}
\For{$K >= \mu >= 1$}
\If{$r > 0$}
    \State $C_{\mu} \gets \Call{Assign}{nodes, n+1}$ \Comment{Assign $n+1$ nodes to $r$ last committees}
    \State $r \gets r - 1$
\Else
    \State $C_{\mu} \gets \Call{Assign}{nodes, n}$ \Comment{Assign $n$ nodes to the rest of the committees}
\EndIf

\State $V = \{C_1, \ldots, C_K\}$

\EndFor
\For{$1 <= \mu <= K$}
    \State $E \gets \Call{AddLeftChildEdgeIfExist}{C_{\mu}, C_{2\mu}}$ \Comment{Add a left child edge from $C_{\mu}$ to $C_{2\mu}$ if $C_{2\mu}$ exists}
    \State $E \gets \Call{AddRightChildEdgeIfExist}{C_{\mu}, C_{2\mu+1}}$ \Comment{Add a right child edge from $C_{\mu}$ to $C_{2\mu+1}$ if $C_{2\mu+1}$ exists}
\EndFor

\State \Return $\mathbb{T} = (V, E)$ \Comment{Return the overlay tree}
\end{algorithmic}
\end{algorithm}

% \subsection{Proposal Dissemination}
 %We propose two proposal dissemination mechanisms. The first proposal involve communication among parent and child committees. A direct communication channel between each node in a parent and the child committee or using Waku (P2P overlay). 

\begin{algorithm}[tb]
\caption{Propose Block}
\label{alg:propose-block}
\begin{algorithmic}[1]
\small
\Require $view$: current view, $quorum$: set of messages from the view
\Ensure broadcast the proposed block to the network
\State $\text{assert } \textit{ is\_leader(id)}$
\State $qc \gets \text{null}$
\State $quorum \gets \text{list}(quorum)$
\If {$quorum[0]$ is  of  type (Vote)}
    \State $\text{assert } \textit{len(quorum) } \geq \textit{ leader\_super\_majority\_threshold()}$
    \State $vote \gets quorum[0]$
    \State $qc \gets \textit{build\_qc}(vote.view, \textit{safe\_blocks}[vote.block], \text{null})$
\ElsIf {$quorum[0]$ is  of type($new\_view$)}
    \State $\text{assert } \textit{len(quorum) } \geq \textit{ leader\_super\_majority\_threshold()}$
    \State $new\_view \gets quorum[0]$
    \State $Aggqc \gets \textit{build\_Aggqc}(new\_view.view, \text{null}, quorum)$
\EndIf
\State $block \gets \text{Block}(view,qc,\textit{[]txs})$
\State $\Call{broadcast}{block}$
\end{algorithmic}
\end{algorithm}

\begin{algorithm}
\caption{Approve Block}
\label{alg:approve-block}
\begin{algorithmic}[1]
 \small   
\Require $block$, $votes$
\Ensure send vote, increment voted view and view number.
    \State \textbf{assert} $block.id() \in$ \texttt{safe\_blocks}
    \State \textbf{assert} $\operatorname{length}(votes) = $ \texttt{ super\_majority\_threshold}( )
    \State \textbf{assert all} \texttt{ is\_member\_of\_child\_committee}( $vote.voter$) \textbf{for} $vote$ \textbf{in} $votes$
    \State \textbf{assert all} $vote.block = block.id()$ \textbf{for} $vote$ \textbf{in} $votes$
    \State \textbf{assert} $block.view > \text{ highest\_voted\_view}$
    \If{\texttt{ is\_member\_of\_root\_committee}( )}
        \State $qc = \text{ $build\_qc$}(block.view, block, \text{votes})$
    \Else
        \State $qc = \text{None}$
    \EndIf
    \State $vote: \text{Vote} = \text{Vote}(block=block.id(), voter=\text{self.id}, view=block.view, qc=qc)$
    \If{\texttt{ is\_member\_of\_root\_committee}( )}
        \State \textbf{send}($vote$, \texttt{ leader}(block.view + 1))
    \Else
        \State \textbf{send}($vote$, \texttt{ parent\_committee}( ))
    \EndIf
    \State $\text{ increment\_voted\_view}(block.view)$ \Comment{to avoid voting again for this view.}
    \State $\text{ increment\_view\_qc}(block.qc)$
\end{algorithmic}
\end{algorithm}

\begin{algorithm}
\caption{Timeout Detected}
\label{alg:timeout-detected}
\begin{algorithmic}[1]
\Require{$msgs$: is a set of timeout messages received by the root committee}
\Ensure{The root committee broadcasts a timeout QC message to all nodes, indicating that a new view needs to be created}
\State \Comment{Root committee detected that supermajority of root + its children has timed out}
%\State \Comment{The view has failed and this information is sent to all participants along with the information}
%\State \Comment{necessary to reconstruct the new overlay}
\State \textbf{assert} $len(msgs) == leader\_super\_majority\_threshold()$
\State \textbf{assert} $all(msg.view \geq current\_view \text{ for } msg \text{ in } msgs)$
\State \textbf{assert} $len(set(msg.view \text{ for } msg \text{ in } msgs)) == 1$
\State \textbf{assert} $is\_member\_of\_root\_committee(self.id)$
\State $timeout\_qc \gets 
\Call{build\_timeout\_qc}{msgs, self.id}$
\State \Call{broadcast}{$timeout\_qc$} \Comment{we broadcast so all nodes can get ready for voting on a new view}
\State \Comment{Note that $receive\_timeout\_qc$ should be called for root nodes as well}
\end{algorithmic}
\end{algorithm}

\begin{algorithm}
\caption{Receive Timeout QC}
\label{alg:receive-timeout-qc}
\begin{algorithmic}[1]
\Require{$timeout\_qc$ is a timeout QC message received by a node}
\Ensure{The node updates its state based on the received timeout QC message and proceeds to the next step in the protocol}
\State \textbf{assert} $timeout\_qc.view \geq current\_view$
\State $new\_high\_qc \gets timeout\_qc.high\_qc$
\State \Call{$update\_high\_qc$}{$new\_high\_qc$}
\State \Call{$update\_timeout\_qc$}{$timeout\_qc$}
\State \Comment{Update our current view and go ahead with the next step}
\State \Call{$update\_current\_view\_from\_timeout\_qc$}{$timeout\_qc$}
\State \Call{$rebuild\_overlay\_from\_timeout\_qc$}{$timeout\_qc$}
\end{algorithmic}    
\end{algorithm}

\begin{algorithm} 
\caption{Approve $new\_view$}
\label{alg: approve-newview}
\begin{algorithmic}[1]
\Require{$timeout\_qc$ is a timeout QC message and $new\_views$ is a set of new view messages received by a node}
\Ensure{The node approves the new view, updates its state, and broadcasts the new view message to relevant nodes}
%\Comment{newView.view == last_timeout_view_qc.view for member of root committee and its children because they have already created the timeout_qc. For other nodes newView.view > last_timeout_view_qc.view.}
\If{$last\_view\_timeout\_qc \neq None$}
\State \textbf{assert} $\forall new\_view \in new\_views$, $new\_view.view > last\_view\_timeout\_qc.view$
\EndIf
\State \textbf{assert} $\forall$ $new\_view \in new\_views$, $new\_view.timeout\_qc.view = timeout\_qc.view$
\State \textbf{assert} $len(new\_views) = overlay.super\_majority\_threshold()$
\State \textbf{assert} $\forall$ $new\_view \in new\_views,$\
\hspace{1cm} $is\_member\_of\_child\_committee( new\_view.sender)$
\State \Comment{The new view should be for the view successive to the timeout}
\State \textbf{assert} $\forall$ $new\_view \in new\_views, timeout\_qc.view + 1 = new\_view.view$
\State $view \gets timeout\_qc.view + 1$
\State \textbf{assert} $highest\_voted\_view < view$

    \State \Comment{Get the highest QC from the new views}
    \State $messages\_high\_qc \gets \{new\_view.high\_qc \mid new\_view \in new\_views\}$
    \State $high\_qc \gets \max([timeout\_qc.high\_qc, *messages\_high\_qc], key=\lambda qc: qc.view)$
    \State \Call{update\_high\_qc}{$high\_qc$}
    \State $timeout\_msg \gets$ $new\_view$(view=view, high\_qc =local\_high\_qc , sender=id, timeout\_qc=timeout\_qc)

    \If{is\_member\_of\_root\_committee(self.id)}
        \State $AggregatedQC \gets buildAggQC(new\_views)$
        \State $timeout\_msg.Aggregated\_qc = AggregatedQC$
        \State \Call{send}{$timeout\_msg$, overlay.leader(current\_view + 1)}
    \Else
        \State \Call{send}{$timeout\_msg$, *overlay.parent\_committee(id)}
    \EndIf

    \State \Comment{This checks if a node has already incremented its voted view by local\_timeout. If not then it should do it now to avoid voting in this view.}
    \State highest\_voted\_view $\gets \max\{\text{highest\_voted\_view}, view\}$
\end{algorithmic}
\end{algorithm}

\section{{The} Carnot {protocol}}
\label{Section: Carnot}
{The protocol}  is designed to operate in rounds, where each round corresponds to a view in the system. At the beginning of each round, a leader is chosen for  {this} round, and the leader proposes a batch of transactions (in the form of a block) that the nodes will attempt to agree on.  First, we will provide an overview of   message types {used in Carnot} followed by  a description of the protocol.

\subsection{Message Types}
When node casts a vote %\old{$\langle view, block.Id(), voter \rangle$} 
for a proposed block, it creates {the \emph{Vote}} object {$\langle view, block.Id(), voter \rangle$} that includes the view number, the hash value of the proposed block {,}  computed {by the}  $block.Id()$ {function}, and  identifier of the voting node. The vote is then signed over the $view$ and $block.Id()$ fields.

{The} quorum certificate (QC), represented by $\langle view, block.Id(), voters \rangle$,  is a collection of votes for the same view and block hash ($block.Id()$), along with aggregated signatures of the voters over the view and $block.Id()$ fields. The set of voters in the QC is represented as a bit array with a single $1$ bit in the same index as the corresponding node identifier in the array of node IDs.

 {The} timeout message  {, given by} $\langle view, high\_qc, sender \rangle$ is used to inform other nodes {of a} {about the} timeout event.  {The} timeout  {QC} $timeout\_qc$ is a  collection of timeout messages, signed over the view of the timeout message and $high_{qc}.view$.

The new view message, given by   $\langle view, high\_qc, sender, timeout\_qc, {Aggregated\_qc} \rangle${,} is signed over the current view and the view of the $high\_qc$. {$Aggregated\_qc$ is only used by the root committee to forward the $Aggregated\_qc$ of its children committees to the next leader, similar to  forwarding votes in happy path\footnote{Alternatively, the members of the children committees of the root committee broadcast their $new\_view$ messages to the root committee and the next leader. Each member of the root committee also sends their $new\_view$ message to the next leader upon receipt of the $super\_majority\_threshold$ of the $new\_view$ messages from its children committee members.}.} 

 {The} $Aggregated\_{qc}$ $\langle view, qc\_views, senders, high\_qc \rangle$  {is} a collection of $new\_view$ messages. {The next leader builds the aggregated signature of the  aggregated $QC$} by aggregating the signatures of $new\_view$ messages. 

After receiving  the proposal the following steps occur\new{:}
\subsection{ {The} Happy Path Overview}
\begin{itemize}
    \item Each node in the network independently verifies the validity of a new block.

    \item If there is a parent committee, nodes in the leaf  committee send their votes to the {nodes}  {in} the parent committee.%, who then vote on the validity of the block.
    \item  {Each node} {in the} parent committee waits to receive  two-thirds of the votes from both child committees before voting on the block's validity themselves {(see Figure \ref{fig:comm-tree-connect})}. Once they have reached the required threshold, each node from the parent committee casts their vote on the block's validity to their parent committee(if there is any).
    \item Nodes at  the root committee cast their vote for the leader of the next view. If the root committee has children committees, then it will also forward the quorum certificate of children committees in its vote to the next leader.
    \item Once, a member of the root committee has cast its vote, it forwards any additional distinct vote to the leader of the next view.
    \item The leader of the next view builds a QC from more than two-thirds of the votes (from the root committee and its children combined) it received from the root committee. It adds the QC to the block and proposes it to the network.
\end{itemize}

\subsection{ {The} Unhappy Path Overview}
{The protocol}  switches to an unhappy path when a failure is detected. {The unhappy and happy paths are very similar. A}  failure is detected by  {inspecting the }  $timeout\_qc$ {for the current view or a higher view. We note that} this step is analogous to the block proposal in the happy path. Whereas, {the} $new\_view$ message moves through the overlay similar to the votes in the happy path. This upward movement of  $new\_view$ messages  {ensures that} nodes agree to begin the next view and also {to} move the most recent $QC$ (global $high\_qc$) upward. The  {latter} will be used by the next leader to propose the next block pointing to the $high\_qc.block$ as its parent.
More details about the unhappy path are given in the next subsection  {,} but here we provide a brief overview.
\begin{itemize}
    \item Upon timeout, all  nodes  including the members of root and its children committee{s,} stop participating in the consensus.
    \item Member of the root committee and its children send their timeout messages to the root committee.
    \item {The nodes in the} root committee  build a $timeout\_qc$ and broadcast it to the network.
    \item Each node recalculates the new overlay.
    \item {The} nodes in  leaf {committees} of the new overlay send their $new\_view$ message{s} to the  {nodes} of their parent committee.
     \item Upon receipt of   two-thirds of $new\_view$ messages from child committees, each node in the parent committee updates its $high\_qc$ if $new\_view.qc.view>high\_qc.view$. The{n each} node  builds  $new\_view$ message with updated $high\_qc$ and forwards it to the  {nodes in} its parent committee.  {The latter ensures that the} latest $high\_qc$ moves upward in {the} overlay.

    \item The root committee forwards {to the leader} its $new\_view$ messages {together} with the $Aggregated\_{qc}$ from two-thirds of $new\_views$ {of} its children. 

    \item The next leader builds an aggregated $QC$ from {the} $new\_view$ messages it {has received}.  Then {it} adds this $Aggregated\_{qc}$ to the block and propose{s} th{is} block.
\end{itemize}

\begin{figure}[!t]
%\vspace*{-7mm}
\setlength{\unitlength}{1mm}
%\begin{center}{
%\hspace*{10mm}
\begin{picture}(100,75)
\put(7,0){\includegraphics[height=75\unitlength]{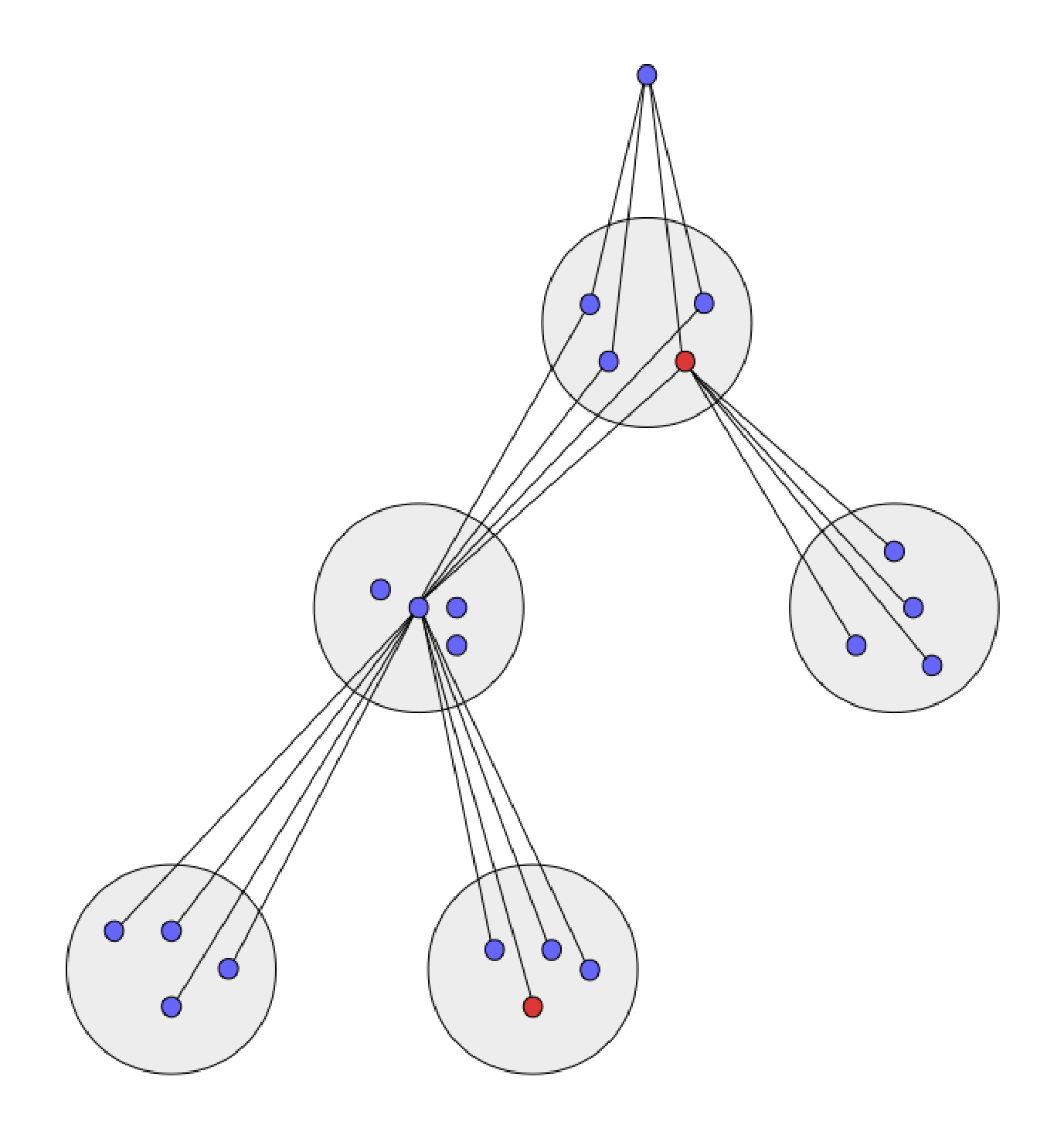}}
\put(43,69){$C_0$}%level 0
\put(37,53){$C_1$}%level 1
\put(21,33){$C_2$} 
\put(34,36){$i$}
\put(53,33){$C_3$} %level 2
%\put(10,26){$C_4$} \put(26,26){$C_5$} \put(43,26){$C_6$}  \put(59,26){$C_7$} %level 3
\put(5,10){$C_4$}  \put(29,10){$C_5$} %level 4
\end{picture}
%}\end{center}
%\vspace*{-6mm}
\caption{{The node $i$ in the  committee $C_2$ collects votes from its children committees, i.e. the committees   $C_4$ and $C_5$, then it casts its own vote to nodes in the parent committee  $C_1$.}}
\label{fig:comm-tree-connect} 
\end{figure}
%

%In addition to the $QC$ and block, the algorithm uses vote, timeout message and new view message.

\subsection{Detailed {d}escription {of the protocol}}
\label{Subsection: Detailed Description}
%As a block proposer, during the \textsc{happy path} the leader has to convince other nodes that the proposed block is carrying the latest $QC$. In a happy case, $block.view==block.qc.view+1$ is proof that the $QC$ is latest. 
As a block proposer, the role of  the leader is to convince other nodes that  the proposed block  carries the latest QC  during the happy path. In this scenario if the view of the proposed block is one more than the view of the QC, i.e. $block.view == block.qc.view + 1$, then  {this} is considered  {to be the} proof that  {this} QC is the latest.

During  an unhappy path, a  leader has to make sure that  the proposed block is extending the last committed block, i.e. the latter is not averted by the proposed block.  To satisfy this condition the leader has to provide {a} proof in the form of $Aggregated\_{qc}$. The global $high\_qc$ in {the} $Aggregated\_{qc}$ is the $QC$ of the  {last} committed block in the network or a $QC$ {of a} more recent {, i.e.} with a higher view number{,} than the {last} committed block. In {the safety proof (see section \ref{Section: Proof-of-correctness})} we show how the global $high\_qc$ in {the} $Aggregated\_{qc}$ is the $QC$ of the  {last} committed block  {by} the network or a $QC$ formed after the $QC$ of the  {last} committed block. {Let us now consider the happy and unhappy paths.}

\subsubsection{{The} happy path}  {The} Algorithm \ref{alg:propose-block} is executed by  {the} leader who either receives a set of votes or a set of $new\_view$ messages. In the context of  {this} algorithm a set of votes  {indicates} that the  view {,} during which nodes received a proposal (also called  block) and have voted {,} was successful. If  the leader receives a set of votes {then} it checks if the number of votes  is greater than or equal to the $leader\_super\_majority\_threshold$ \footnote{More than two-third members of the root committee and its child committees.}. If the {latter is true then}  leader builds a QC  {, which is added to the} block along with any transactions for the current view, and proposes the block to the network by broadcasting it.

%When a $new\_view$ message is received by the nodes in the system, it indicates that the previous view has failed and a new leader must be elected. The new leader must then include the quorum from the new view message in the next block to be added to the blockchain.

%To minimize the amount of data that needs to be transmitted during this process, the protocol only requires the verification of the highest quality certificate ($high\_qc$) within the quorum and verifying the aggregated signature of the quorum. Therefore, the Aggregated QC (AggQC) must only include the $high\_qc$ and the aggregated signature from all timeout messages.
 %In addition to including the $high\_qc$ and the aggregated signature from all timeout messages, the Aggregated QC (AggQC) also includes an  array containing $high\_qc$.view from each timeout message as an element of that array. This is because timeout messages are signed over the timeout.view and timeout.high\_qc.view fields, so including high\_qc.view in the AggQC ensures that all necessary information is included for verification purposes. 

 {The} Algorithm \ref{alg:approve-block} is called when  a node receives a block from the leader and  a set of votes for the same block from {the nodes in}  its children committee. The vote set is empty if the committee  to   which the node  {belongs} has no children. The goal of  {the} algorithm is to send  a vote for {a} given block and {to} increment the voted view (to avoid duplicate voting) and view number. The block received has to be  safe,  i.e. $block.id() \in safe\_blocks$. A block is safe if {the} $block.view == block.qc.view+1$ or {the} $block.view == block.AggQC.view+1$ and $block.qc == AggQC.high\_qc$. This means {that} during {the} happy path a block is safe if $block.view == block.qc.view+1$.  {The latter} condition shows that $block.qc$ is the latest $qc$ formed before the block which is known by the majority of nodes or more recent than the recent $qc$ known by the majority of nodes. 
 
 The  {Algorithm \ref{alg:approve-block}} also performs other checks including {one which makes}  sure {that the votes  received by a node are from its children and that the }  number of votes received is at least {the} two-thirds of {the total number of its children, i.e. exceeding the }  $super\_majority\_threshold$. If checks {, i.e. the } lines $1-5$ {in the algorithm,} are successful then the node will send its vote to the parent committee. If  the node is a member of the root committee, then it will build  $QC$ from {the} votes collected from  {its} children committee{s} add {to it}  its own vote and forward it to the leader of the next view. The leader of the next view will build {a} $QC$ from {more than} {the} two third{s of} votes from {the nodes in } the root and its children committees ($leader\_super\_majority\_threshold$). Then the leader assigns  $QC$ to the $block.qc$ and proposes the block as {described} in {the} Algorithm \ref{alg:propose-block}.

 \subsubsection{{The} unhappy path}
 To ensure   low authenticator complexity, {the} Carnot differs from traditional BFT-based protocols in its approach to the unhappy path. Instead of waiting for a large number of nodes to fail, {the} Carnot detects failure when  {the} \emph{$leader\_super\_majority\_threshold$}  of nodes in the root  and its children committees fail. 

In the unhappy path, {the} nodes that are not  {in} the root {and}  its children committees stop voting after the timeout. {The nodes in}  the root   {and} its children committee{s} send timeout messages to each  {node in} the root committee {which} indicat{es} that a new view needs to be created. {Then nodes in the }  root committee build {the} timeout $QC$, {thereby} reducing the overall complexity and ensuring that the protocol can continue even in the face of node failures.

The  Algorithm {\ref{alg:timeout-detected}}  is executed by the root committee when it detects that a $leader\_super\_majority$ of nodes in the current view from  {the} root and its children committees have timed out. The purpose of  {this} algorithm is to initiate the creation of  timeout $QC$ and {to} notify all nodes in the  {network} that a timeout has occurred. The algorithm takes as input   {the} set of timeout messages received by the root committee. These timeout messages are sent by nodes{,} from the root committee and its children committees{,} when they  detect that the view (round) has failed to be executed in a timely manner.  The algorithm then performs several checks to ensure {that} the timeout messages are valid. It checks that {they} have a view number {that is} greater than or equal to the current view {and} that all  messages have the same view number. The algorithm then builds  {the} timeout QC message. This  message is  broadcasted to all nodes in the network. The purpose of  {the latter} is to notify all nodes that a new view needs to be created and to get them ready for sending their new view messages upward{, i.e. } to the parent committee members.  {The} new view messages are used  by a node to update its $high\_qc$ and {to} send the $high\_qc$ to the {nodes in}  its parent committee. In this way {the} $high\_qc$ moves to the top {of the committee tree} {and reaches} the next leader without  {the} broadcast{ing} and verification   {of}  $O(N)$ messages.
Upon receipt of {the} $timeout\_qc$ {message} a node updates its $high\_qc$, increments {the} view number, and rebuilds the overlay  {(see the} Algorithm \ref{alg:receive-timeout-qc}{)}.

{The} Algorithm \ref{alg: approve-newview} is executed by a node after  {the} Algorithm \ref{alg:receive-timeout-qc}.  {This} algorithm uses   {the} set of new view messages $new\_views$ and  {the}  timeout QC message previously received as {an} input from the set of $timeout\_qc$ {messages} used in {the} Algorithm \ref{alg:receive-timeout-qc}. 
The $new\_view$ messages  received are from  {the nodes in} a child committee{s}  {and hence} the set of $new\_view$ messages will be empty for {a node}  {in a } leaf committee.
 The  {purpose} of  Algorithm \ref{alg: approve-newview}  is to approve the new view and  make sure {that} the next leader receives {the} global $high\_qc$.  {The latter} enable{s} the leader to build a block that points to the block of the global $high\_qc$ as its parent. {Furthermore,} the algorithm  {ensures} that the view of the $new\_view.view=timeout\_qc.view+1$\new{,} where  $timeout\_qc$ is the latest $timeout\_qc$ {of the } node.
%If it has, then the algorithm verifies that all the new view messages received have a view greater than the timeout QC view.
%Next, the algorithm verifies that all the new view messages have the same view as the timeout QC message. 
{Next}  {it} checks that the number of new view messages is greater than or equal to the supermajority threshold. {Also} the algorithm verifies that all  senders of the new view messages are  {the nodes from}  children committees 
  {and} gets the highest QC message from the new views and updates the local $high\_qc$ value. Finally, the algorithm creates  {the} new view message and broadcasts it to the nodes  {in} parent committee. If  the node is a member of the root committee {then} it sends  {this} message to the leader of the next view {, but} otherwise it sends the {new view} message to all {nodes in }  parent committee{s}. After broadcasting the new view message, the algorithm checks if a node has already incremented its voted view by {the} $local\_timeout$. If not, it increments the voted view to avoid voting (forwarding $new\_view$ messages) in this view again.

\section{The Carnot protocol: Proof of Correctness\label{Section: Proof-of-correctness}}
{Any} consensus protocol must satisfy two main properties{,} namely {the} safety and  liveness. 
{The  Carnot protocol offers a \emph{probabilistic}   guarantee of these properties as follows.}

\begin{definition}[Safety]
%Hermes

{If a single (correct) node in a blockchain consensus protocol commits a block at a specific position in the blockchain (blockchain height) and no other block will ever be committed at that position, even in the presence of  Byzantine nodes,  then this protocol is considered to be safe (w.h.p).} 
%\textcolor{red}{block tree should be used}
\end{definition}

\begin{definition}[Liveness]
%Hermes
A protocol is alive if {even in the presence of Byzantine nodes} it guarantees eventual progress (w.h.p). This means {that} honest nodes will always reach an agreement and the protocol will eventually terminate (w.h.p).
\end{definition}

%\begin{definition}[$Cvoted(b,v,j)$]
%We define the predicated $Cvoted(b,v,j)$ is true if a node $j \in C^{L}_{p}$ receives  $(\frac{4c}{3})+1$ votes from $C^{L+1}_{2p}$ and $C^{L+1}_{2p+1}$ combined. 
%\end{definition}

%\begin{definition}[$Cverified(b,v,j)$]
%We define the predicated Cverified(b',v,j) is true if a node $j \in C_i$ receives a $QC$ containing $2c+1$ votes from $C^{0}_{0}$, $C^{1}_{0}$and $C^{1}_{1}$ combined. $Cverified(b,v,j)$ can only be true (w.h.p) if the probabilistic fulfillment for the view $v$ and block $b$ is successful.
%\end{definition}

%\begin{definition}[$Commit(b,v,j)$]
%A proposal $b$ is committed by a node $j$ if $Cverified(b,v,j)$ is true for the proposal  $b$ and $Cverified(b',v+1,j)$ is true for $b'$ ( child proposal of $b$) in the chain. When a proposal is committed then node $j$, executes the commands.

%We define the predicated Cverified(m,v,j) is true if a node $j \in C_i$ receives a $QC$ containing $(\frac{4c}{3})+1$ votes from $C'$, $C'_{i_{l}}$ and $C'_{i_{j}}$ combined. 
%\end{definition}
%\new{AM: We need one paragraph here explaining  how the safety and liveness are proved via the Lemmas 1-4. The Lemmas 1-3 prove the safety and Lemma 4 proves liviness. }
{ The safety and liveness  of the Carnot protocol are proved, respectively, in  the Lemmas \ref{Lemma: Two-third-votes}-\ref{Lemma: Forking-Lemma} and Lemma \ref{Lemma:Liveness correct primary}.}
   {In} Lemma \ref{Lemma: Two-third-votes} we  {show} that the existence of a valid $QC$ (Quorum Certificate) for a given block $b$  {guarantees (w.h.p.) } that more than  {$2/3$} of {nodes in} the network have   vote{d} {for} $b$. This  insight from {the} Lemma \ref{Lemma: Two-third-votes} {is essential and is a cornerstone}  {of the safety proof}.   {The} Lemma {\ref{Lemma: Two-third-votes} is then used in the Lemma \ref{Lemma: Conflicting-QCs}}  to establish that the network will not produce  {$QC$} for two \emph{distinct} blocks within the same view.  {The latter} implies that  at most {one} block will  {be} approv{ed} { within a single view}. {Next the Lemmas \ref{Lemma: Two-third-votes} and \ref{Lemma: Conflicting-QCs} are used in the proof of Lemma \ref{Lemma: Forking-Lemma}.}  This lemma substantiates  that {committing} two conflicting blocks is an impossibility.  Therefore, once a block  {is committed then this block } and its lineage of ancestors  {can not be revoked}. 
{Finally, the Lemma \ref{Lemma:Liveness correct primary} guarantees that the protocol makes progress.}

\begin{lemma}
 {Suppose that the} $QC$ certificate for a proposal  was built {and the number of Byzantine nodes  is at most $M<\frac{1}{3}N$} then  at least  $\frac{2N}{3}+1$ nodes in the {committee tree $\Tree$, generated by the Algorithm \ref{alg:form-tree},} have voted for this proposal (w.h.p). 
\label{Lemma: Two-third-votes}
 \end{lemma}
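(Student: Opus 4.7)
The plan is to combine a probabilistic ``good assignment'' event with a deterministic downward induction on the committee tree. The probabilistic part controls the Byzantine share inside every committee; the deterministic part then uses the fact that a correct node only casts its vote after seeing a super-majority of votes from each of its two child committees, so the existence of one correct signer at the top of the tree forces a super-majority of voters in every committee below it.

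First I would define the ``good'' event $\mathcal{E}$: every committee $C_{\mu}$ in $\Tree$ contains strictly fewer than $|C_{\mu}|/3$ Byzantine nodes. Because the Fisher--Yates shuffle in Algorithm~\ref{alg:form-tree} induces a uniformly random partition of $\setN$ into committees of size $n$ (or $n+1$), the number of Byzantine nodes falling in a fixed committee is a hypergeometric random variable with mean at most $Mn/N < n/3$. A standard Chernoff/Hoeffding tail bound gives a failure probability that is exponentially small in $n$; taking a union bound over the $K=\lfloor N/n\rfloor$ committees and appealing to the committee-size analysis foreshadowed for Section~\ref{section:failures} (which forces $n=\Omega(\log N)$ for a target $\delta$), we obtain $\Pr[\mathcal{E}] \ge 1-\delta$. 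All subsequent reasoning is carried out conditional on $\mathcal{E}$.

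Next I would unwind what a valid top-level $QC$ entails. Per Algorithm~\ref{alg:propose-block}, the leader's $QC$ attests that a super-majority of $C_1\cup C_2 \cup C_3$ has voted, and every vote from a node in $C_1$ is itself conditioned (by Algorithm~\ref{alg:approve-block}) on having collected more than two-thirds of the votes from each of $C_2$ and $C_3$. On $\mathcal{E}$, $C_1$ has strictly more than two-thirds correct members, so any set of $> 2|C_1|/3$ signers from $C_1$ must contain at least one correct node $v\in C_1$. That single correct witness $v$ already certifies that strictly more than $2|C_2|/3$ nodes of $C_2$ and more than $2|C_3|/3$ nodes of $C_3$ have cast votes for the proposal.

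Now comes the core step: an induction on the depth of $\Tree$. The induction hypothesis is ``if strictly more than $2|C_\mu|/3$ members of $C_\mu$ voted, then the same holds for every descendant committee of $C_\mu$.'' Under $\mathcal{E}$, a super-majority of voters in $C_\mu$ contains at least one correct node $u$; by Algorithm~\ref{alg:approve-block}, $u$ voted only after witnessing strictly more than $2|C_{2\mu}|/3$ votes in $C_{2\mu}$ and strictly more than $2|C_{2\mu+1}|/3$ votes in $C_{2\mu+1}$, which discharges the hypothesis for both children and completes the induction to the leaves. Summing the per-committee bound gives
\[
\#\{\text{voters}\} \;>\; \sum_{\mu=1}^{K} \frac{2|C_\mu|}{3} \;=\; \frac{2N}{3},
\]
so at least $\lfloor 2N/3 \rfloor + 1$ nodes voted, conditional on $\mathcal{E}$, i.e.\ with probability at least $1-\delta$.

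The main obstacle I expect is the probabilistic half: the bound must hold \emph{simultaneously} for every one of the $K$ committees, and the committees are not independent (the partition is drawn without replacement). The clean way around this is to use the negative-association structure of a uniform partition, which preserves Chernoff-style tail bounds, and then calibrate $n$ (following the forthcoming Section~\ref{section:failures} analysis) so that the union-bounded failure probability is below the target $\delta$. A minor secondary nuisance is that the last $r$ committees contain $n+1$ rather than $n$ members; since $r<K$, this only changes the constants in the Chernoff bound and does not affect the asymptotics.
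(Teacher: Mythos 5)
Your high-level plan — condition on a probabilistic ``good assignment'' event, then run a deterministic top-down induction through the tree using Algorithm~\ref{alg:approve-block} — is the same template the paper follows, and the probabilistic half is fine (your per-committee event is strictly stronger than the paper's pairwise robustness of Definition~\ref{def:robust-child-comm}, and the paper shows $\delta(E_2(1/3))\le\delta(E_1(1/3))$, so you are only paying a harmless penalty). The gap is in the deterministic half, and it is substantive.

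Algorithm~\ref{alg:approve-block} (line~2) checks that the \emph{total} number of votes a node has collected from its two child committees reaches $\texttt{super\_majority\_threshold}()$, i.e.\ a $2/3$ threshold over $|C_{2\mu}|+|C_{2\mu+1}|$ combined. It does \emph{not} impose a separate $2/3$ threshold inside each child. This is why Definition~\ref{def:robust-child-comm} is stated for the \emph{merged} sibling pair, and why the paper's induction and final sum run over sibling pairs: a correct voter in $C_\mu$ certifies $N^\voted_{2\mu}+N^\voted_{2\mu+1}\ge \frac{2(N_{2\mu}+N_{2\mu+1})}{3}$, nothing more. Your induction step asserts instead that the correct witness saw strictly more than $2|C_{2\mu}|/3$ votes in $C_{2\mu}$ \emph{and} strictly more than $2|C_{2\mu+1}|/3$ in $C_{2\mu+1}$; that does not follow. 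For example, with equal-size children of size $n$, the combined $>4n/3$ condition can be met with all $n$ votes from one child and only slightly more than $n/3$ from the other. In that case your per-committee bound fails for the under-represented child, and your closing sum $\sum_{\mu}2|C_\mu|/3$ is not supported: the inequality you would actually have is the pairwise one, $\vert C^\voted_{2\mu}\cup C^\voted_{2\mu+1}\vert\ge \frac{2(N_{2\mu}+N_{2\mu+1})}{3}$, which is what the paper sums (together with the top three committees handled once by the leader's $QC$). Your good event $\mathcal{E}$ is strong enough to guarantee that any sibling-pair set of voters exceeding the combined $2/3$ threshold contains at least one correct node (since the union has $<(N_{2\mu}+N_{2\mu+1})/3$ Byzantine nodes), so the induction \emph{does} go through — but only if you track pair-level counts, not per-committee counts. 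As written, both the induction hypothesis and the final inequality need to be reorganized around sibling pairs for the proof to close.
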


 \begin{proof}
{
Let $C_{\mu}$ be a committee in  the committee tree $\Tree$ (see Definition  \ref{def:comm-tree}) and  assume that { all child committees in} $\Tree$  {are} robust (see  Definition  {\ref{def:robust-child-comm}}). We split the proof of the lemma into two parts.}
   First, we prove that if a node {$j \in C_\mu$}  has voted for {a} block $b$ in the view $v$, then at least ${\frac{2(N_{\mu_1}+N_{\mu_2})}{3}}${, where $N_\mu=\vert C_\mu\vert$,} nodes in each pair of  siblings  {$C_{\mu_1}$ and $C_{\mu_1}$} in  a  sub-tree of {the} committee  {$C_\mu$}  {have} voted for the block $b$ during the view $v$. Second{,} we prove that at least  {$\frac{2N}{3}+1$} nodes {in the network} have voted for {the} block $b$. 
   
   {It is clear that in }  a tree with only   {one} committee, i.e. $\Tree=C_1$, at least   { $\frac{2N_1}{3}+1$ } of nodes  {in the committee $C_1$} has to vote {in order} to build a $QC$  for {the} block $b$. The same is true for a tree with three committees{:}  the root committee {$C_1$} forward{s the} votes from both of its child committees, i.e. the  committees $C_2$ and $C_3$, to the next leader {(see Figure \ref{fig:comm-tree})}. For any tree $\Tree$ with more than three committees  {the first part can be proven} by induction. 

\textbf{{The} base case:} When   node $j$ receives  {the} $QC$  {of} a block proposal $b$ with the view $v$ {this implies} that {at least $\frac{2\sum_{\mu=1}^3N_\mu}{3}+1$, where $N_\mu=\vert C_\mu\vert$, } of %$(\frac{6c}{3})+1=2c+1$ 
nodes from {the}
root committee   and its child{ren $\cup_{\mu=1}^3C_\mu$}  have voted for $b$. {Let us define the set $C^{\voted}_\mu\subseteq C_\mu$ of nodes which voted for proposal $b$ then } the set of nodes from both child committees that have voted {is given by $C^\voted_{2} \cup C^\voted_{3}\subseteq C_{2} \cup C_{3}$. We note that} ${N^\voted_{2} + N^\voted_{3}\geq \frac{2(N_{2}+N_{3})}{3}}${, where  $N^\voted_\mu=\vert C^\voted_\mu\vert$.}
Similarly  honest node  {$i \in C^\voted_{2}$}  if it has verified the block $b$ and has received  {at least $\frac{2(N_{4}+N_{5})}{3}$} votes from its child committees  $C_{4} \cup C_{5}$. The same is true for any honest node in  {$C^\voted_{3}$}.

\textbf{{The induction step} :} At any level of the tree {$\Tree$}, if  honest node  {$j\in C_\mu$} has voted, it implies that at least {$\frac{2(N_{2\mu}+N_{2\mu+1})}{3}$}  {of} nodes from its child committees  {$C_{2\mu}$ and $C_{2\mu+1}$} have voted for {the} block $b$  {in} view $v$, or {the $C_\mu$}  is a leaf committee ({see} Algorithm \ref{alg:approve-block}). %Let $R(C_p)$ and $L(C_p)$ denote the right and left child committees of $C_p$, respectively.
 {Now} by recursively applying this  {argument}  to all committees in  {the sub-tree $\Tree_\mu $ induced by $C_\mu$}  until   leaf {committees are reached, we have proved the first part. } 

{We note that since the committee tree $\Tree$ was generated by the  Algorithm \ref{alg:form-tree} this implies that the sizes of any two committees may differ by at most one node. The latter rules out  a scenario when  some of  the child committees  contain \emph{only} Byzantine nodes, but the tree $\Tree$ is robust.  
}

%\new{ Let us take, without loss of generality,  comm. $C_2$ and $C_3$. Suppose that  the number of honest nodes in $C_2 \cup C_3$ is   greater than or equal to  $2(N_2+N_3)/3$, but this  also includes the possibility that  all  honest nodes are in $C_2$ and all Byzantine nodes are in $C_3$.   The latter is possible because from   $N_2 \geq 2(N_2+N_3)/3$ follows $N_2\geq 2N_3$, i.e. the number of honest nodes in $C_2$ is twice the number of Byzantine nodes in $C_3$.  If all nodes are Byzantine in $C_3$ this may prevent the collection of  votes from the sub-tree $\Tree_3$ induced by the node $C_3$ and hence the number of votes collected  can be smaller than $2N/3 +1$.}

 {Let us now prove the second part.}  We know that a node $i \in C_{1}$ only  {sends} its vote to the leader if it {has} receive {d at least $\frac{2(N_{2}+N_{3})}{3}$} of votes from {the committees} $C_{2}$ and $C_{3}$. {Furthermore, } when  node $j$ receives  {the} $QC$ for  block $b$, then  {this implies that at least $\frac{2\sum_{\mu=1}^3N_\mu}{3}+1$, where $N_\mu=\vert C_\mu\vert$,}  of nodes in { $\cup_{\mu=1}^3C_\mu$} have voted. {Now} by combining {the number of nodes which voted from} the base case and {induction step of the first proof  we have that } the total number of nodes in $\Tree$ that have voted    {, $N^\voted$,  is } given by

\begin{equation} 
{\begin{aligned}
N^\voted & =  \vert\cup_{\mu=1}^3C^\voted_\mu\vert +\sum_{\mu=2}^{\tilde{K}} \vert C^\voted_{2\mu}\cup C^\voted_{2\mu+1} \vert \\
\end{aligned} }
\end{equation}
{from which follows}

\begin{equation} 
{\begin{aligned}
N^\voted &\geq \frac{2\sum_{\mu=1}^3N_\mu}{3}+1  + \sum_{\mu=2}^{\tilde{K}}   \frac{2(N_{2\mu}+N_{2\mu+1})}{3}\\
\end{aligned} }
\end{equation}
{and hence}
\begin{equation} 
\begin{aligned}
 { N^\voted} &\geq     \frac{2N}{3}+1  
\end{aligned} 
\label{eq:N-voted-lb}
\end{equation}

We note that for $N$ nodes with  at most $M<\frac{1}{3}N$ Byzantine nodes  {all child committees in} the tree $\Tree$  {are} robust w.h.p. (see section \ref{section:failures})   and hence above is true w.h.p. 

%\new{\underline{AM:} 
\end{proof}

\begin{lemma}
For any {two} valid {$QC$s, the} $qc_{1}$ {and} $qc_{2}${,} when {the}  $qc_{1}.block$ conflicts with  {the} $qc_{2}.block$ then $qc_{1}.view \not = qc_{2}.view$.
\label{Lemma: Conflicting-QCs}
\end{lemma}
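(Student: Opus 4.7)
The plan is to argue by contradiction, using Lemma \ref{Lemma: Two-third-votes} together with the standard quorum–intersection pigeonhole argument and the non-equivocation behavior of correct nodes enforced by Algorithm \ref{alg:approve-block}.

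First I would assume the negation: that there exist two valid quorum certificates $qc_1$ and $qc_2$ with $qc_1.view = qc_2.view = v$ whose blocks $b_1 = qc_1.block$ and $b_2 = qc_2.block$ conflict (in particular, $b_1 \neq b_2$). By Lemma \ref{Lemma: Two-third-votes}, the existence of each valid $qc_k$ guarantees (w.h.p.) that a set $S_k \subseteq \setN$ of at least $\frac{2N}{3}+1$ nodes cast a vote for $b_k$ in view $v$.

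Next I would invoke an inclusion–exclusion style bound on $S_1 \cap S_2$. Since $S_1, S_2 \subseteq \setN$ and $\lvert \setN\rvert = N$, we have
\begin{equation}
\lvert S_1 \cap S_2\rvert \;\geq\; \lvert S_1\rvert + \lvert S_2\rvert - N \;\geq\; 2\!\left(\tfrac{2N}{3}+1\right) - N \;=\; \tfrac{N}{3} + 2.
\end{equation}
Because the number of Byzantine nodes satisfies $M < \tfrac{N}{3}$, the intersection $S_1 \cap S_2$ must contain at least one correct node $h$. Thus $h$ cast votes for both $b_1$ and $b_2$ during the same view $v$.

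Finally I would derive the contradiction from the voting discipline of a correct node. In Algorithm \ref{alg:approve-block}, before emitting a vote for a block with view $v$, the node checks $block.view > \textit{highest\_voted\_view}$ and, after sending its vote, calls \emph{increment\_voted\_view}$(block.view)$, which sets $\textit{highest\_voted\_view} \gets v$. Consequently a correct node can pass the guard at most once for any given view number and hence cannot produce votes for two distinct blocks in view $v$, contradicting the existence of $h \in S_1 \cap S_2$. I expect the main (and only real) obstacle to be making the probabilistic qualifier clean: the lemma should inherit the ``w.h.p.'' caveat from Lemma \ref{Lemma: Two-third-votes}, since the robustness of the committee tree generated by Algorithm \ref{alg:form-tree} is itself a high-probability event analyzed in Section \ref{section:failures}, and the conclusion $qc_1.view \neq qc_2.view$ holds conditionally on that event.
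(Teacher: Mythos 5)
Your proof is correct and follows essentially the same route as the paper: a quorum-intersection pigeonhole argument yielding an overlap of at least $\frac{N}{3}+2$ nodes, which exceeds the Byzantine budget $M<\frac{N}{3}$ and therefore forces a correct node to have voted twice in the same view, contradicting the guard in Algorithm \ref{alg:approve-block}. Your inclusion--exclusion formulation is a slightly cleaner way to arrive at the same count the paper derives, and the explicit note that the conclusion inherits the w.h.p.\ qualifier from Lemma \ref{Lemma: Two-third-votes} is a worthwhile clarification the paper leaves implicit.
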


\begin{proof}
{We prove this by contradiction. Assume that there are two $QC$s such that $qc_1.view == qc_2.view$ and  $N^{v_1}$ number of nodes have voted for $qc_1$. The latter implies that  $N^{v_2}=N-N^{v_1}$ number of nodes have voted for $qc_2$, but from  the Lemma \ref{Lemma: Two-third-votes} follows that $N^{v_1}\geq\frac{2}{3}N+1$ and hence $N^{v_2}\leq\frac{1}{3}N-1$. However, by the same lemma $N^{v_2}\geq\frac{2}{3}N+1$ which can be only true if at least $\frac{2}{3}N+1-(\frac{1}{3}N-1)= \frac{1}{3}N+2$ number of nodes have voted for both $QC$s. Comparing this number  with  the number of Byzantine nodes $M<\frac{1}{3}N$ we infer that  even if all Byzantine nodes have voted for both $QC$s we will have at least $\frac{1}{3}N+2 -\left(\frac{1}{3}N-1\right) =3$  honest nodes which voted for both  $QC$s. However, an honest node never votes twice in the same view, so our assumption is incorrect. Hence, this Lemma is correct for any $M < \frac{1}{3}N$}.  
\end{proof}

\begin{figure*}[!t]
%\vspace*{-7mm}
\setlength{\unitlength}{1mm}
%\begin{center}{
%\hspace*{10mm}
\begin{picture}(230, 55)
\put(0,0){\includegraphics[height=50\unitlength]{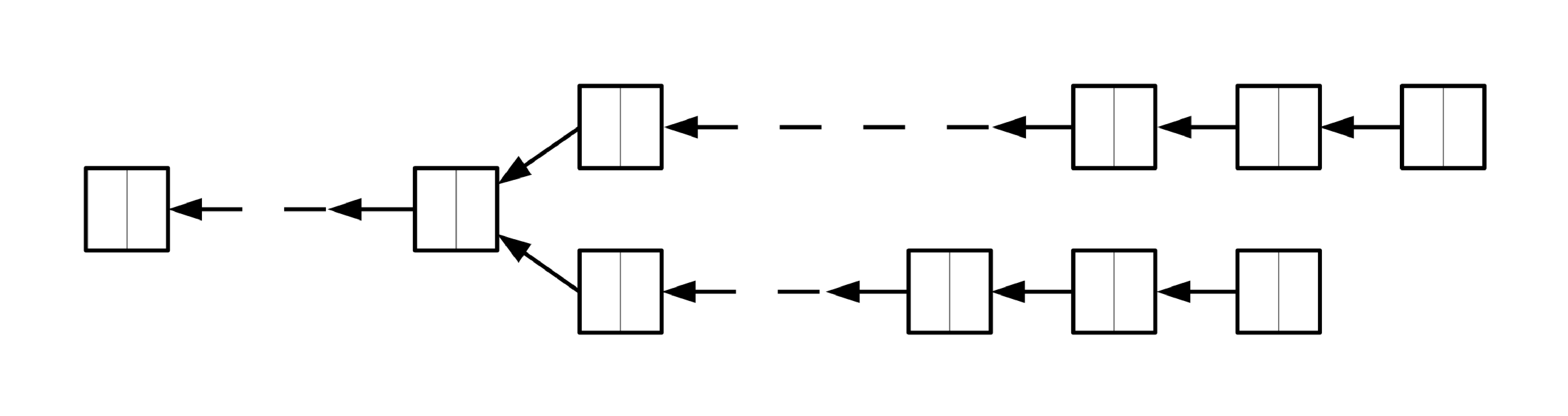}}%, width=100\unitlength
\put(129,34){$QC\;\;a$} \put(149,34){$QC\;a^\prime$} \put(168.5,34){$QC\;a^{\prime\prime}$}
\put(109,14){$QC\;\;b$} \put(129,14){$QC\;b^\prime$} \put(149,14){$QC\;b^{\prime\prime}$}
\end{picture}
%}\end{center}
%\vspace*{-6mm}
\caption{The blocks $a$ and $b$ {in the blockchain} are conflicting blocks, but  block $b$  was committed first.}
\label{fig:fork} 
\end{figure*}

\begin{lemma}
If {the} blocks  $a$ and $b$ are in conflict, then they cannot be committed by an honest node.
\label{Lemma: Forking-Lemma}
\end{lemma}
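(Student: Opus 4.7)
The approach is a proof by contradiction. Suppose conflicting blocks $a$ and $b$ are both committed, with $b.view \le a.view$ without loss of generality. When the views coincide, each committed block carries a valid $QC$ (the two-chain commit rule requires a block on top that certifies the committed block), and Lemma \ref{Lemma: Conflicting-QCs} rules out two $QC$s on conflicting blocks at the same view, giving an immediate contradiction. So one can restrict attention to $b.view < a.view$. The two-chain commit rule applied to $b$ guarantees the existence of a direct child $b'$ with $b'.parent=b$ and $b'.view = b.view+1$ for which a valid $QC(b')$ exists (carried by the block that triggered the commit of $b$).

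The core of the plan is to prove, by induction on $v \ge b.view+1$, the invariant that every valid $QC$ at view $v$ sits on a block that extends $b$. For the base case $v = b.view+1$, Lemma \ref{Lemma: Conflicting-QCs} forbids a second $QC$ at the same view on a block conflicting with $b'$, so any $QC$ at this view is $QC(b')$, which extends $b$. For the inductive step at view $v > b.view+1$, let $c$ be a block at view $v$ with a $QC$. By Lemma \ref{Lemma: Two-third-votes} more than $2N/3$ nodes voted for $c$, so at least one honest voter deemed $c$ safe. The safety predicate splits into two cases: in the happy path $c.qc$ is a $QC$ at view $v-1$, which by the induction hypothesis is on a block extending $b$, so $c$ extends $b$ too.

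The hard part will be the unhappy-path case, where $c.qc = c.AggQC.high\_qc$, $c.AggQC.view = v-1$, and the $Aggregated\_qc$ is assembled from $new\_view$ messages of more than two-thirds of the root committee and its children. I would establish that $AggQC.high\_qc.view \ge b.view$ by a quorum-intersection argument inside the root-plus-children pool. Since $QC(b')$ exists, more than two-thirds of that pool voted for $b'$; every honest such voter observed $b'$ and its embedded $QC(b)$, so its local $high\_qc$ has view at least $b.view$. The $Aggregated\_qc$ draws on a second two-thirds supermajority of the same pool, and by Definition \ref{def:robust-child-comm} together with the committee-sizing analysis of Section \ref{section:failures} this pool is two-thirds honest with high probability, so the two quorums intersect in at least one honest node. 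That node's $high\_qc$ enters the aggregate and forces $AggQC.high\_qc.view \ge b.view$; by the induction hypothesis (for view $>b.view$) or by equality with $QC(b)$ (at view $=b.view$), $c.qc$ is on a block extending $b$, and hence so is $c$.

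Finally, I would apply the invariant to $a$ itself: since $a$ is committed it carries a $QC$, and $a.view > b.view$, so $a$ must extend $b$, contradicting the assumption that $a$ and $b$ are in conflict. Every probabilistic clause in the argument traces back to the robustness of the committee tree produced by Algorithm \ref{alg:form-tree}, exactly as in Lemma \ref{Lemma: Two-third-votes}, so the lemma holds with high probability in the same sense.
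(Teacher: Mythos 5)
Your overall strategy matches the paper's in spirit: a contradiction argument that reduces to showing no conflicting block can pass the safety predicate, with the happy-path case dispatched by Lemma~\ref{Lemma: Conflicting-QCs} and the unhappy-path case reduced to establishing $AggQC.high\_qc.view \ge b.view$. Your induction on views is essentially an unrolled form of the paper's ``first conflicting $QC$'' (argmin) construction, and your handling of the base case, the same-view case, and the happy-path inductive step are all sound. However, there is a genuine gap in your unhappy-path argument, precisely at the step you flagged as ``the hard part.''

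You establish $AggQC.high\_qc.view \ge b.view$ by a quorum-intersection argument \emph{inside the root-plus-children pool}, asserting that the $AggQC$ draws on a second two-thirds supermajority of ``the same pool'' that certified $b'$. That premise does not hold in this protocol: upon receiving a $timeout\_qc$, every node calls $rebuild\_overlay\_from\_timeout\_qc$ (Algorithm~\ref{alg:receive-timeout-qc}), so the root committee and its children in the view where the $AggQC$ is assembled are, in general, a freshly re-sampled set of nodes, disjoint from or only coincidentally overlapping the root-plus-children pool that produced $QC(b')$. Worse, the block $c$ in your inductive step may sit arbitrarily many views above $b'$, with several intervening view changes and overlay rebuilds. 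Two $>2/3$ quorums drawn from \emph{different} pools of size $O(\log N)$ need not intersect at all, so your pigeonhole conclusion does not follow. The paper's proof (while terse) instead reasons at the \emph{network} level: Lemma~\ref{Lemma: Two-third-votes} gives that more than $\frac{2N}{3}$ of \emph{all} nodes voted for $b'$ and so carry a $high\_qc$ with view $\ge b.view$, and a Lemma-\ref{Lemma: Two-third-votes}-style argument applied to the $new\_view$ propagation shows that the $AggQC$ draws (indirectly, through the tree's bottom-up maximization of $high\_qc$) on another network-wide supermajority; these two global quorums intersect in at least one honest node, whose $high\_qc$ survives the propagation. To repair your proof you must (i) replace the committee-local intersection with a network-wide one, and (ii) argue explicitly that the $\max$ operation in Algorithm~\ref{alg: approve-newview} actually carries an honest witness's $high\_qc$ up through the \emph{new} overlay to the $AggQC$ -- a propagation step your current write-up does not address.
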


\begin{proof}
%This lemma {also} can {be}  proved by contradiction.  We know that $a$ and $b$ cannot be in the same view {by  the argument of Lemma \ref{Lemma: Conflicting-QCs}}.  {Let us} assume that  block $a$ is committed by  honest node during view $v$ through {the} two-chain (one-direct chain) $a, a^{\prime}, a^{\prime\prime}$. 
%Similarly {we assume that the } block $b$ is committed through  {the} two-chain $b, b^{\prime}, b^{\prime\prime}$ {(one-direct chain)}. {Furthermore,} both two-direct chains  {have} their QCs. {Finally,}  we also assume{, without the loss of generality,} that $a$ is added  {to} the chain in the higher view than $b^\prime$,  {i.e.} $a^\prime.QC.view>b^{\prime\prime}.QC.view$.
This lemma can also be proven by contradiction. We begin by observing that block $a$ and block $b$ cannot both belong to the same view, as demonstrated in the argument of Lemma \ref{Lemma: Conflicting-QCs}. Let us assume that block $a$ is committed by an honest node during view $v$ through a two-chain (one-direct chain) sequence $a, a^{\prime}, a^{\prime\prime}$. Similarly, let's assume that block $b$ is committed through a two-chain sequence $b, b^{\prime}, b^{\prime\prime}$ (one-direct chain). Additionally, both of these two-chain sequences have their respective Quorum Certificates (QCs). Without loss of generality, we also assume that block $a$ is added to the chain at a higher view than $b^{\prime}$, meaning $a^\prime.QC.view > b^{\prime\prime}.QC.view$.

%Let $v_{s}$ be the lowest view number higher than  {the} view of $b^{\prime}${,} $v_{b^{\prime}}$, such that $v_{b^{\prime}}=v_{b^{\prime\prime}}.QC.view$ during which  {the} $QC$, $qc_{s}$ is formed and {the} $qc_{s}.block$ is in conflict with {the block} $b$.{The view number of  the latter is given by}  $v_b =v_{b^{\prime}}.QC.view $. {Also let us}  define the following predicate 
Let $v_{s}$ be the lowest view number higher than the view of $b^{\prime}$, denoted as $v_{b^{\prime}}$, such that $v_{b^{\prime}}=v_{b^{\prime\prime}}.QC.view$. During this view $v_s$, a Quorum Certificate $qc_{s}$ is formed, and $qc_{s}.block$ is in conflict with block $b$. The view number of the latter is given by $v_b =v_{b^{\prime}}.QC.view$. We will now define the following predicate:

$$E(qc) := (v_{b^{\prime}} < qc.view \leq v_a) \land (qc.block \textit{ conflicts with b})
$$
%{For any $qc$} 
%the %first switching point 
%first conflicting QC is {given by}\old{:}
The Quorum Certificate (QC) representing the first conflicting block with respect to block $b$ is given by:
$$qc_{s} := \argmin\limits_{qc} \{  qc.view | \textit{qc is valid} \land E(qc) \}.$$

{The}  $qc_{s}$ exist {by our assumption} and {we also assume that}  $a^{\prime}.QC= qc_{s}$. Let $i$ be the node that has voted  for  {both the} $qc_{s}$ and $b^{\prime\prime}.QC$. {We note that by definition}  $qc_{s}$  is the first conflicting block {which} has the lowest view {number} among the conflicting blocks for  $b$. Therefore, block $a$ has to be in $safe\_blocks$. {Let us} now check %consider the invocation of the $SafeBlock$ in the view $v_{s}$ by the node $i$ with {the} conflicting block $qc_{s}.block$. 
{if the  conflicting block $qc_{s}.block$ is in $safe\_blocks$.}
In Algorithm \ref{alg:approve-block} the first assertion is False. This is  {because} if a block ($a$) satisfies the condition $block.view == block.qc.view+1$ then it cannot be  {in conflict with the block} $b$, {because  $block.qc$ is the $QC$ for $b$. }
%{by} our assumption.  
{Furthermore}, let block $t$ be the parent (ancestor) of the $qc_{s}.block$.  {The definition of $qc_{s}$ implies that the} $t.view \leq b'.view$ {,} but since {the} $qc_{s}.block$ is in conflict with  {the block} $b${,} and $b$ is committed, therefore {the block} $t$   {is not the block} $b$ or $b^\prime$ {and hence} \ $t.QC.view < b^{\prime}.QC.view$. Then {by} the second  {part} of the second condition {the} $block.view == block.AggQC.view+1$ and $block.qc == AggQC.high\_qc$ is also false. This is because when a two-chain (with one direct-chain) is formed by {the blocks} $b$, $b^{\prime}$ {and} $b^{\prime\prime}$   {it implies that} at least {$2M+1$} nodes have the $QC$ ($b^{\prime}.QC.block=b$). Therefore the $high\_qc$  {,}built from $QC$s of  {$2M+1$} distinct nodes{,} collected by the new leader in the unhappy path will have $high\_qc.view \geq b.view$. 
Hence, the fact that the $i$ has voted for the $qc_{s}.block$ is incorrect {and } contradict{s} our assumption. 

\end{proof}

\begin{lemma}
\label{Lemma:Liveness correct primary}

{If a sufficient amount of time is elapsed after the GST such that a correct leader is elected and all correct nodes are in the same view then a  block $b$ will be committed  during this time.}

\end{lemma}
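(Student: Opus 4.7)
The plan is to show that once the post-GST synchronous window is long enough to cover two consecutive views, both led by correct leaders, with all correct nodes in the same view at the start, the happy path executes to completion twice in a row, producing a one-direct chain whose parent is then committed by the two-chain rule. So the first step I would take is to pin down the timing: after GST, every message between correct nodes is delivered within the known bound $\Delta$, and by the probabilistic reliable dissemination assumption the leader's proposal reaches every correct node w.h.p. This converts the problem into a purely event-driven responsive argument.

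Next I would trace the vote aggregation up the committee tree $\mathbb{T}$ exactly as in Algorithm \ref{alg:approve-block}. A correct leaf-committee node verifies the proposal (which is safe because the leader sets $block.view = block.qc.view+1$) and dispatches its vote to its parent committee. By the robustness of child committees (Definition \ref{def:robust-child-comm}), which holds w.h.p.\ under the committee-sizing analysis of Section \ref{section:failures}, every pair of siblings contributes at least the required $\tfrac{2(N_\nu+N_\mu)}{3}$ correct votes, so each internal correct node hits its $super\_majority\_threshold$ and in turn sends its own vote upward. Cascading this bottom-up and using that the tree has height $O(\log N)$, the root committee assembles and forwards the required evidence to the next leader within $O(\Delta \log N)$ time.

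I would then invoke Algorithm \ref{alg:propose-block} at that next leader: since it is correct and receives at least $leader\_super\_majority\_threshold$ votes from the root and its children, it builds a valid $QC$ and proposes a successor block with consecutive view number. Applying the same reasoning to the immediately following view (still within the assumed synchronous window, still with a correct leader) produces two successive blocks $B'$ and $B''$ satisfying $B''.parent = B'$ and $B''.view = B'.view+1$, i.e.\ a one-direct chain. The commit rule described in Section \ref{Section: Carnot} (new commands are executed through $B'$ when $B''$ directly extends $B'$) then commits the parent of $B'$, delivering the desired block.

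The main obstacle is the probabilistic bookkeeping. The deterministic timing argument above implicitly conditions on every internal committee visited being robust and on reliable dissemination succeeding throughout two consecutive views. I would discharge this by a union bound over the $O(K)$ committees involved, using the tail estimates from Section \ref{section:failures}, which are tuned so that the failure probability remains negligible even with $N$ in the hundreds of thousands. Once that single w.h.p.\ event is secured, the rest is the straightforward responsive propagation argument, and committment follows as soon as two consecutive correct-leader views elapse after GST.
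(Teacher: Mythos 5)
Your overall approach matches the paper's: exploit a post-GST window containing two consecutive correct-leader views, run the happy path twice, obtain a one-direct chain, and invoke the two-chain commit rule. The tree-propagation and probabilistic bookkeeping are fine and in the same spirit as the paper's union-bound analysis.

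The gap is in the final commit step. The commit rule in Carnot fires upon receipt of a block $B$ such that $B.parent=B''$, $B''.parent=B'$, and $B''.view=B'.view+1$, and it commits $B'$ --- the \emph{grandparent} of the received block. You produce only two blocks, $B'$ and $B''$, with $B''$ directly extending $B'$, and then assert this commits ``the parent of $B'$.'' That would require $B'.view = (B'.parent).view+1$, i.e.\ that $B'$ itself directly extends its own parent. You never establish this, and it is not implied by your hypotheses: $B'$ is the first correct-leader block after (possibly many) failed pre-GST views, so $B'.qc$ may be an aggregated/high QC whose view is far below $B'.view-1$. Your own paraphrase of the rule is also internally inconsistent with that conclusion (by your paraphrase $B'$ would be committed, not its parent), and it drops the requirement of receiving a \emph{third} block on top of $B''$ entirely. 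The paper avoids this trap by taking $b$ and $b'$ (the two correct-leader blocks) as the directly-chained pair and then observing that a third block $b''$ extending $b'$ will eventually be proposed by some later correct leader, at which point the commit rule commits $b$. To repair your argument, either carry the third block explicitly (as the paper does), or strengthen the hypothesis to three consecutive correct-leader views so that the direct-chain condition holds one level lower.
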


\begin{proof}
 {Given that a time interval after the GST is  sufficiently large,  it will be a time when two correct leaders are   elected consecutively,  { all child committees in} the overlay tree  {are} robust, and correct nodes are in the same view,  then the  one-direct chain $b,b^\prime$ will be formed between the blocks $b$ and $b^\prime$  proposed by  leaders.  Now the  two-chain $b,b^\prime,b^{\prime\prime}$ is required for the  block $b$ to get committed. However,  the  block $b^{\prime\prime}$ will eventually be added by another correct leader.} 
\end{proof}

\section{Analysis of failures\label{section:failures}}
\subsection{{Robustness of  committee trees}}
 A cornerstone of   the correctness proof
%,  presented  in  the section \ref{Section: Proof-of-correctness}, 
is the Lemma \ref{Lemma: Two-third-votes} which ensures that   at least $2/3$ of nodes in the  network have voted  for a proposal. The latter  relies on a  committee tree (see Figures \ref{fig:comm-tree} and \ref{fig:comm-tree-connect}) 
generated by the  Algorithm \ref{alg:form-tree}. This algorithm takes as input  the committee size  parameter,  $n$, and a set of nodes of size $N$. The committee size $n$ is computed by the Algorithm \ref{alg:comm} (see Appendix \ref{appendix:alg}) for a given number of nodes $N$, assumed fraction of Byzantine nodes, $P$, and  probability of failure  $\delta(E_1(1/3))$. The latter is the  probability that in at least one committee more than $1/3$ of nodes are Byzantine   (see Appendix \ref{appendix:struct-failure}) and it \emph{dominates} other failure probabilities  in our model (see Appendix \ref{appendix:random-part}).

\subsection{Robustness of child committees}
 The probability of  failure $\delta(E_2(1/3))$,  i.e. in at least one $C_{\mu_1} \cup C_{\mu_2}$, where the child committees $C_{\mu_1}$ and  $C_{\mu_2}$ share the same parent,  more that $1/3$ of nodes are Byzantine  (see Appendix \ref{appendix:2-child-comm-failure}), is bounded from above by $\delta(E_1(1/3))$ which is a consequence of the  Proposition \ref{Proposition:E_2-E_1-ineq} (see Appendix \ref{appendix:delta-prop}). The latter ensures that given $\delta(E_1(1/3))$  {all child committees in} a committee tree, generated by the Algorithms  \ref{alg:form-tree} and \ref{alg:comm},  {are} \emph{robust}  (see Definition  {\ref{def:robust-child-comm}} ) with the probability $\delta(E_2(1/3))\leq \delta(E_1(1/3))$.  The probability $\delta(E_2(1/3))$  can be  much smaller than   $\delta(E_1(1/3))$  as can be seen in the Figure \ref{fig:failure-prob}.  

\subsection{Logarithmic growth of committee sizes}
When $A> P$ the probability of failure $\delta(E_1(A))$, i.e. the  probability that in at least one committee more that  the  fraction  of nodes, $A$, are Byzantine   (see Appendix \ref{appendix:struct-failure}), is bounded above as follows  
\begin{equation}
\delta(E_1(A))
\leq\sum_{\mu=1}^K \rme^{-N_\mu \mathrm{D}\left(A(\mu)\vert\vert P\right)}\label{eq:delta-U-ub},
\end{equation}
where $A(\mu)=\frac{\lfloor A N_\mu\rfloor+1}{N_\mu}$, for both the hypergeometric (\ref{eq:prob-hyper}) and binomial (\ref{eq:prob-binom}) distributions of committee sizes~\cite{Mozeika2023} used in our model (see Appendix \ref{appendix:random-part}). For $N=nK+r$ nodes the  Algorithm \ref{alg:form-tree} creates the $K-r$ and $r$ committees with, respectively the sizes $n$ and $n+1$ then it is easy to show that 
$\sum_{\mu=1}^K \rme^{-N_\mu \mathrm{D}\left(A(\mu)\vert\vert P\right)}\leq K \rme^{-n\, \mathrm{D}\left(A\vert\vert P\right)}$
and hence  $\delta(E_1(A))
\leq K \rme^{-n\, \mathrm{D}\left(A\vert\vert P\right)}$ which is equivalent to  $\log(N/\delta(E_1(A))
\geq n\, \mathrm{D}\left(A\vert\vert P\right)+\log(n)$.  The latter for $n\geq n_{\min}$, where $n_{\min}$ is the \emph{minimum} number of nodes in a committee,  leads  us to  the inequality   
\begin{equation}
n\leq\frac{1}{\mathrm{D}\left(A\vert\vert P\right)}\log\left(\frac{N}{n_{\min}\,\delta(E_1(A)}\right)
%\geq n
\label{eq:n-ub}.
\end{equation}
Thus  the size of a committee grows at most \emph{logarithmically} with $N/\delta(E_1(A)$ (see  Figure \ref{fig:comm-size}). Also for $A=1/3$ and $P=A-\epsilon$, where $\epsilon\in(0,A)$,  the upper bound  in (\ref{eq:n-ub}) is given by  $\frac{1}{\mathrm{D}\left(1/3\vert\vert 1/3-\epsilon\right)}\log\left(\frac{N}{n_{\min}\,\delta(E_1(A)}\right)=\left[\frac{4(1-\epsilon)}{9 \epsilon^{2}}-\frac{19}{18}-\frac{4 \epsilon}{9}+O\! \left(\epsilon^{2}\right)\right]\log\left(\frac{N}{n_{\min}\,\delta(E_1(A)}\right)$ when $\epsilon\rightarrow0$ and hence for the probability of failure $\delta(E_1(A)$ to remain the same the size of a committee $n$ has to grow (at most) as $O(1/\epsilon^2)$ when the fraction of Byzantine nodes in the network approaches $1/3$. Therefore, {scalability of } the protocol is achieved through  {a} logarithmic growth of committee {sizes}  {with respect to} $N$.  As a consequence, nodes need to verify {only} $O(\log N)$ signatures, whereas each member of the root committee and the leader aggregate only $O(\log N)$ signatures.

\subsection{Necessary conditions for QC}
The Lemma \ref{Lemma: Two-third-votes} assumes that a QC certificate was built. The \emph{necessary} conditions for the latter are the election of  an honest leader,  { all child committees in}  committee tree {are robust}, and more than $2/3$ of nodes in the top three comm. have to be  honest, i.e.  we have so-called   ``leader super-majority''.  The latter is equivalent to the events  $E_0$ (election of Byzantine leader), $E_3(1/3)$ (failure of leader super-majority), and $E_2(1/3)$ (failure of  {child committee} robustness) not occurring (see Appendix \ref{appendix:failures} for a more precise definition). Now the probability $\Prob(E_0\cup E_3(1/3) \cup E_2(1/3))$, i.e. at least one of these events has occurred,  is bounded as follows 
\setlength{\arraycolsep}{0.0em}
\begin{eqnarray}
&&\Prob(E_0\cup E_2(1/3) \cup E_3(1/3))\nonumber\\
&&~~~~~~~~~~~~~~~\leq \Prob(E_0)+\Prob(E_2(1/3))+\Prob(E_3(1/3))\nonumber\\
&&~~~~~~~~~~~~~~~\leq P%\Prob(E_0)
+\delta(E_2(1/3))+\delta(E_3(1/3))
\label{eq:1-prob-QC-ub}
\end{eqnarray}
\setlength{\arraycolsep}{5pt}
by the union bound and hence  the probability that none of these events has occurred, i.e. necessary conditions for QC, is bounded from below as follows 
\setlength{\arraycolsep}{0.0em}
\begin{eqnarray}
&&1-\Prob(E_0\cup E_2(1/3) \cup E_3(1/3))\nonumber\\
&&~~~~~~~~~~~\geq 1-P-\delta(E_2(1/3))-\delta(E_3(1/3))\label{eq:prob-QC-lb}
\end{eqnarray}
\setlength{\arraycolsep}{5pt}
We note that $\delta(E_2(1/3))\leq \delta(E_1(1/3))$ and $\delta(E_3(1/3))\leq \delta(E_1(1/3))$ by, respectively, the Propositions \ref{Proposition:E_2-E_1-ineq} and \ref{Proposition:E_k-E_1-ineq} (see Appendix \ref{appendix:delta-prop}). Hence the lower bound in (\ref{eq:prob-QC-lb}) is greater than $1-P-2\delta(E_1(1/3))$.  Now the  probability of failure $\delta(E_1(1/3))$ can be made as small as possible  by choosing $K$ (see Figure \ref{fig:failures})  and hence the lower bound in (\ref{eq:prob-QC-lb}) is approximately $1-P$ which is greater than  $2/3$ for $P<1/3$. Thus the necessary conditions for QC are created with high probability. 

\subsection{Safety failure analysis}
The events $E_3(2/3)$ and $E_1(1/2)$, i.e., respectively,  more than $2/3$ of nodes in the top three committees and more than $1/2$ of nodes in at least one committee are Byzantine, are considered to be safety failures.  
Having  {committees with more than} $1/2$ {of nodes being} Byzantine   can jeopardize safety. {Byzantine} nodes {in such committee} might rush to approve blocks, bypassing votes from other committee members, while all nodes from  {its} sibling committee have voted.  {The latter} may {cause the subtree of }  the corrupt committee to lag behind whereas the remaining parts of the overlay tree ``move'' faster.  During {the} view change this could result in  {more than}  {$2/3$} of nodes lacking QCs for the latest committed block.  {The probability of this event, however, is very small (see Figure \ref{fig:failure-prob}).}  %\new{While  this event's probability is  minimal, the probability of consecutive occurrence of this event can be further reduced by recalculating the overlay after each view. }%Additionally, swapping adjacent committees during each view is another strategy to further diminish failure probabilities.}

The probability $\Prob(E_3(2/3) \cup E_1(1/2))$ that at least one of these events will happen is bounded from above as follows 
\setlength{\arraycolsep}{0.0em}
\begin{eqnarray}
\Prob(E_3(2/3) \cup E_1(1/2))
&\leq& \Prob(E_3(2/3))+\Prob(E_1(1/2))\nonumber\\
&&=\delta(E_3(2/3))+\delta(E_1(1/2))
\label{eq:safety-fail-prob-ub}
\end{eqnarray}
\setlength{\arraycolsep}{5pt}
by  the union bound. We note that $\delta(E_3(2/3))\leq\delta(E_3(1/3))$ and 
$\delta(E_1(1/2))\leq \delta(E_1(1/3))$ by the Proposition \ref{Proposition:E_1(A)-E_1(B)-ineq} (see Appendix \ref{appendix:delta-prop}). Also $\delta(E_3(1/3))\leq\delta(E_1(1/3))$ by the Proposition \ref{Proposition:E_k-E_1-ineq} and hence the upper bound in (\ref{eq:safety-fail-prob-ub}) is at most $2 \delta(E_1(1/3))$. However, the probabilities  $\delta(E_3(2/3))$ and $\delta(E_1(1/2))$  can be  much smaller than $2 \delta(E_1(1/3))$ as can be seen in Figure  {\ref{fig:failure-prob}}.

%\new{AM: I guess some of  this text in the bullets below can be utilized in the above section.  Please check. }
%\begin{itemize} 
%\item
 %   \new{However, this assumption can be compromised if half of the nodes in a child committee are Byzantine. In such a scenario, these nodes may approve blocks faster without waiting for votes from other members of their child committees. As a result, the subtree of this corrupt committee may be slow to receive new recent blocks and Quorum Certificates (QCs). During the view change, this can lead to more than two-thirds of nodes not having the QC for the latest committed block.}

%\new{Nonetheless, the probability of such an event occurring is minimal, as illustrated in the Figure below. To further reduce the chances of such an event, the protocol includes a mechanism to swap the two adjacent sibling committees at the end of each view. This swap aims to mitigate failure probabilities, as depicted by the dotted line in the Figure.}

%\item \new{Add a discussion about parent committee swapping?}
%\new{In the Carnot protocol, an honest node's vote within a parent committee represents two-thirds of the votes from its child committees and their respective subtrees. To ensure this representation is accurate, it is crucial that at least one vote from each child committee to the parent comes from an honest node.}

%\end{itemize}    

\begin{figure}[!t]
%\vspace*{-7mm}
\setlength{\unitlength}{1mm}
\begin{center}{
%\hspace*{10mm}
\begin{picture}(100,100)
\put(0,0){\includegraphics[height=100\unitlength]{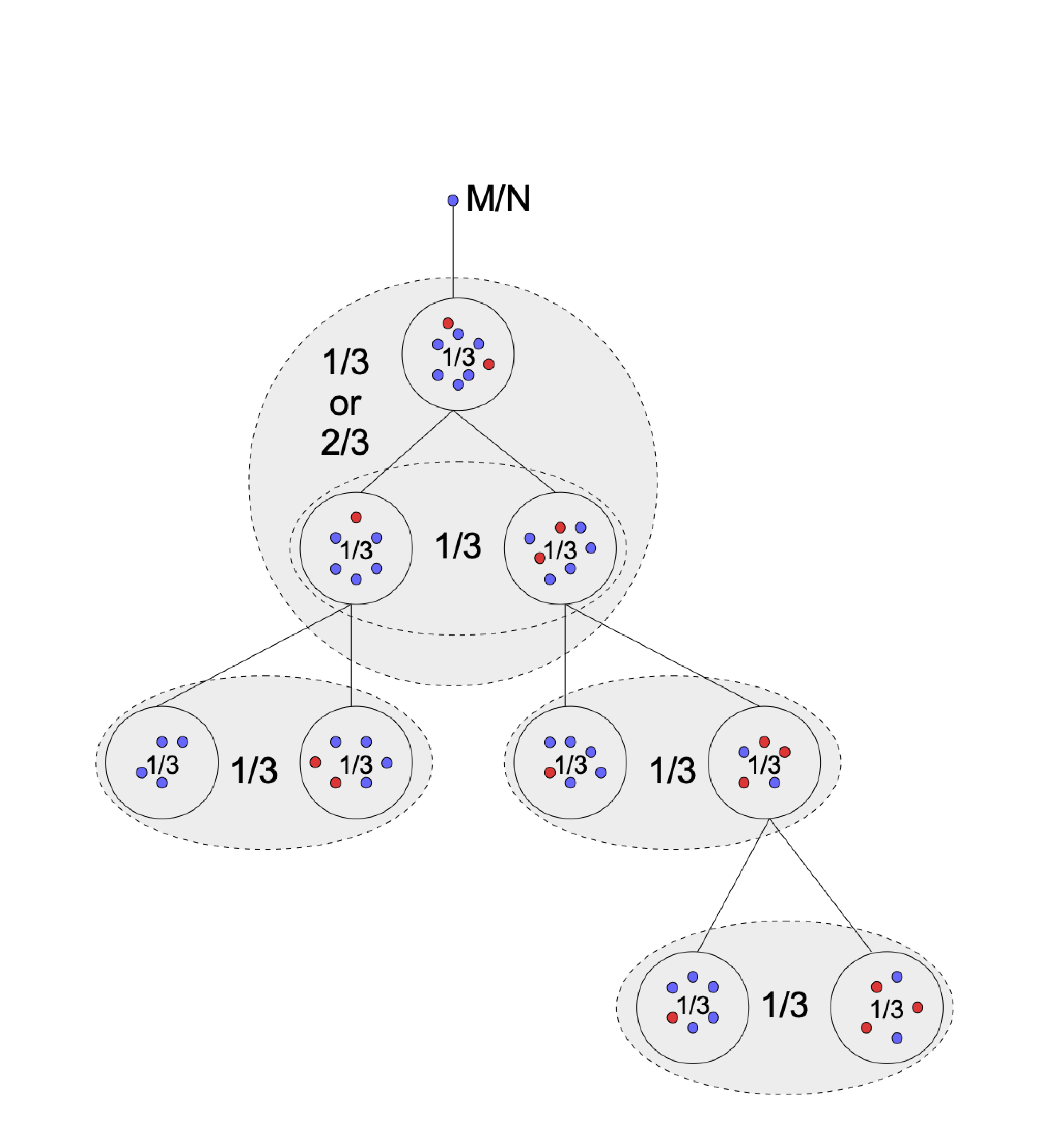}}
\end{picture}
}\end{center}
%\vspace*{-6mm}
\caption{Failures in a committee tree. The $N$ nodes (blue and red dots)  are assigned \emph{randomly}  into the \emph{odd} number,  $K$,   of committees (circles with solid boundaries). Also, there is a special node representing the  \emph{leader}  (blue node connected to the root committee).  It is assumed that  \emph{at most}  $M$  nodes are adversarial  (red dots). The committee tree failure occurs when  more than $1/3$ of nodes in at least \emph{one} of  the committees are Byzantine. The child committee failure occurs when more than $1/3$ of nodes are Byzantine  in at least \emph{one} of the committees produced by merging two child committees with a common parent (ovals with dotted boundaries). The  failure in the top three committees (large circle with dotted boundary) occurs   when   more than $1/3$  (or $2/3$) of nodes in  these   committees  are  Byzantine.   The probability of electing a Byzantine leader is $M/N$. }
\label{fig:failures} 
\end{figure}
%

%\section{Evaluation}
%\label{Section: Evaluation}

\section{ Conclusion}
\label{Section:Conclusion}
In conclusion, {we envisage that by combining instantaneous finality and  adaptable scalability the} Carnot {protocol will}  emerge as a trailblazing paradigm in the realm of consensus protocols.  The protocol not only achieves prompt finality but also accommodates  organic growth of networks {by} effectively mitigating  intricate challenges tied to {the} chain reorganization and fork vulnerabilities. Our  {future work} includes {further increase in } fault tolerance  and investiga{tion of }  eliminati{ng} the direct chain prerequisite for {a} block commitment. {In addition we would like to explore}  creation of  {the} multi-leader\ {and} multi-overlay Carnot variant  {which could} enhance both  performance and {resilience of the } protocol. {Finally}, {in future work we will also consider adding economic (PoS) layer to the Carnot to  ensure seamless operation within some  crypto-economic framework.} 
%\old{an integration initiative involves fusing the economic layer (PoS) with Carnot, ensuring seamless operation within the crypto-economic framework. The commitment to Carnot's dependability within this model encapsulates another dimension of our future endeavors.} 

\section*{Acknowledgement}
{The authors would like to thank Corey Petty, Daniel Sanchez Quiros, Augustinas Bacvinka, Giacomo Pasini, \'{A}lvaro Castro-Castilla, Daniel Kaiser and  Frederico Teixeira for their invaluable assistance and very enlightening discussions that greatly enriched this work.}

%\clearpage
%\Urlmuskip=0mu plus 1mu\relax
\appendices
\section{Details of analysis of failures\label{appendix:failures}}

\subsection{Model of random partitions\label{appendix:random-part}}
We consider $N$ nodes  distributed into $K$  committees in a random and unbiased way (see Figure \ref{fig:comm-tree}) for a given  committee sizes  $\{N_1,\ldots,N_K\}$, where  $N_\mu=\vert C_\mu\vert$ is the number of nodes in a committee  $C_\mu$.  Assuming that the sampling without replacement is used  and  that $M<N$ nodes  are  \emph{Byzantine}  gives us the   hypergeometric  distribution 
\begin{equation}
\Prob\left(N_{1}^\alpha,\ldots, N_{K}^\alpha\vert N_{1},\ldots,N_{K}; M\right)=\frac{\delta_{M;\sum_{\mu=1}^K N_{\mu}^\alpha}\prod_{\mu=1}^K {N_\mu\choose N^\alpha_\mu}}{{N\choose M}}\label{eq:prob-hyper}
\end{equation}
for  the $\{N_{1}^\alpha,\ldots, N_{K}^\alpha\}$, where $N_{\mu}^\alpha$ is the  number of Byzantine nodes  in the  committee $\mu$. Furthermore, if  $M$ is  \emph{random} {variable} from  the binomial distribution  
\begin{equation}
\Prob\left(M\vert N \right)={N\choose M}\,P^{M}  \left(1-P\right)^{N-M}\label{def:binom},
\end{equation}
where $P\in(0,1)$, then the \emph{average} of (\ref{eq:prob-hyper}) over $M$ is  the product  of binomials
\begin{equation}
\Prob\left(N_{1}^\alpha,\ldots, N_{K}^\alpha\vert N_{1},\ldots,N_{K}\right)=
\prod_{\mu=1}^K\Prob\left(N^\alpha_\mu\vert N_\mu \right)\label{eq:prob-binom}.
\end{equation}
We note that in \emph{sharding} of blockchains the number of Byzantine nodes in a \emph{single} committee  is usually  modeled with the hypergeometric~\cite{Zamani2018, Dang2019, Zhang2022} and binomial~\cite{Kokoris2018,Tennakoon2022} probability distributions which are, respectively, the marginals of the probability distributions (\ref{eq:prob-hyper}) and (\ref{eq:prob-binom}).  The latter two are special cases of a  more \emph{general} distribution $\sum_{M\geq 0}\Prob\left(N_{1}^\alpha,\ldots, N_{K}^\alpha\vert N_{1},\ldots,N_{K}; M\right)\Prob(M)$. 

\subsection{Analysis of committee tree failure\label{appendix:struct-failure}}
We are interested in the probability  of various  ``failures'' that can occur in a committee tree  (see Figure \ref{fig:failures}). The first failure we consider is  the event $E_1(A)=\cup_{\mu=1}^K\left\{N_{\mu}^\alpha\geq \lfloor AN_{\mu}\rfloor+1\right\}$, i.e. more than fraction $A$ of nodes in at least \emph{one} of  the committees are  \emph{Byzantine}. This type of failure concerns  \emph{reliability} of the  whole \emph{structure} of a committee tree and its probability is given by  
\begin{equation}
\delta(E_1(A))=\Prob\left(\cup_{\mu=1}^K\left\{N_{\mu}^\alpha\geq \lfloor AN_{\mu}\rfloor+1\right\}\right)\label{def:delta-E1},
\end{equation}
via  $\Prob(E_1(A))=1-\Prob(\neg E_1(A))$, is exactly equal to     
\begin{equation}
\delta(E_1(A)) =1-\Prob\left(N_{1}^\alpha\leq\lfloor A N_{1}\rfloor ,\ldots, N_{K}^\alpha\leq\lfloor A N_{K}\rfloor\right)\label{eq:delta-E1}.
\end{equation}
% ~\cite{Hafid2020}
We note that  above was studied in~\cite{Mozeika2023} for both  the hypergeometric (\ref{eq:prob-hyper}) and binomial (\ref{eq:prob-binom}) distributions. For the latter we obtain  
\begin{equation}
\delta(E_1(A))=1-\prod_{\mu=1}^K\left[1-\Prob(N^\alpha\geq \lfloor A N_\mu\rfloor+1\vert N_\mu)\right]\label{eq:delta-binom},
\end{equation}
where 
\begin{equation}
\Prob(N^\alpha\geq \lfloor A N_\mu\rfloor+1)=
 \sum_{N^\alpha=\lfloor AN_{\mu}\rfloor+1}^{N_\mu}\Prob\left(N^\alpha\vert N_{\mu}\right)
\label{eq:binom-tail}
\end{equation}
is the upper tail of the binomial distribution (\ref{def:binom}), and by the \emph{Theorem} 3.3 in~\cite{Mozeika2023} for $P<A(\mu)<1$, where $A(\mu)=\frac{\lfloor A N_\mu\rfloor+1}{N_\mu}$, we have the upper bound   
\setlength{\arraycolsep}{0.0em}
\begin{eqnarray}
&&\delta(E_1(A))\nonumber\\
&&~~~\leq 1-\prod_{\mu=1}^K\left[1-\frac{1}{1-r(\mu)}\frac{\rme^{-N_\mu \mathrm{D}\left(A(\mu)\vert\vert P\right)}}{)\sqrt{2\pi A(\mu)\left(1-A(\mu)\right)N_\mu}}\right]\label{eq:delta-binom-ub},
\end{eqnarray}
\setlength{\arraycolsep}{5pt}
where $\mathrm{D}\left(A\vert\vert P\right)$ with  $A,P\in(0,1)$ is the Kullback$-$Leibler  (KL) `distance'
\begin{equation}
\mathrm{D}\left(A\vert\vert P\right)=A\log\frac{A}{P}+(1-A)\log\frac{1-A}{1-P}
\end{equation}
which is $0$  when $A=P$ and is positive semi-definite when $A\neq P$~\cite{Cover2012}. Here the term $\frac{1}{1-r(\mu)}$, where $r(\mu)=\frac{P\left(1-A(\mu)\right)}{ A(\mu)\left(1-P\right)}$, is an  upper bound on the sum  $\sum_{k=0}^{\left(1-A(\mu)\right)N_\mu}r^k(\mu)$.%=(1-r^{\left(1-A(\mu)\right)N_\mu+1}(\mu))/(1-r(\mu))$. 

To estimate  the probability of failure (\ref{def:delta-E1}) for the hypergeometric distribution (\ref{eq:prob-hyper}) we will use the Boole's inequality
\begin{equation}
\Prob\left(\cup_{\mu=1}^K\left\{N_{\mu}^\alpha\geq \lfloor AN_{\mu}\rfloor+1\right\}\right)
\leq\sum_{\mu=1}^K\Prob\left(N_{\mu}^\alpha\geq \lfloor AN_{\mu}\rfloor+1\right)\label{eq:Boole-ineq}, 
\end{equation}
also known as  the \emph{union bound},  which gives us 
\begin{equation}
\delta(E_1(A)) \leq\sum_{\mu=1}^K\Prob\left(N_{\mu}^\alpha\geq \lfloor AN_{\mu}\rfloor+1\right)\label{eq:delta-hyper-ub}, 
\end{equation}
where 
\begin{equation}
\Prob\left(N_{\mu}^\alpha\geq \lfloor AN_{\mu}\rfloor+1\right)=\sum_{N^\alpha_\mu=\lfloor AN_{\mu}\rfloor+1}^{N_\mu}\frac{{N_\mu\choose N^\alpha_\mu}{N-N_\mu\choose M-N^\alpha_\mu}}{{N\choose M}}
\label{eq:hyper-tail}
\end{equation}
is the upper tail of the (univariate) hypergeometric distribution, i.e.  the marginal of  (\ref{eq:prob-hyper}). 

\begin{figure}[!t]
%\vspace*{5mm} \hspace*{35mm} 
\setlength{\unitlength}{0.57mm}
\begin{center}{
%\hspace*{1mm}
\begin{picture}(100,100)
\put(0,0){\includegraphics[height=100\unitlength,width=100\unitlength]{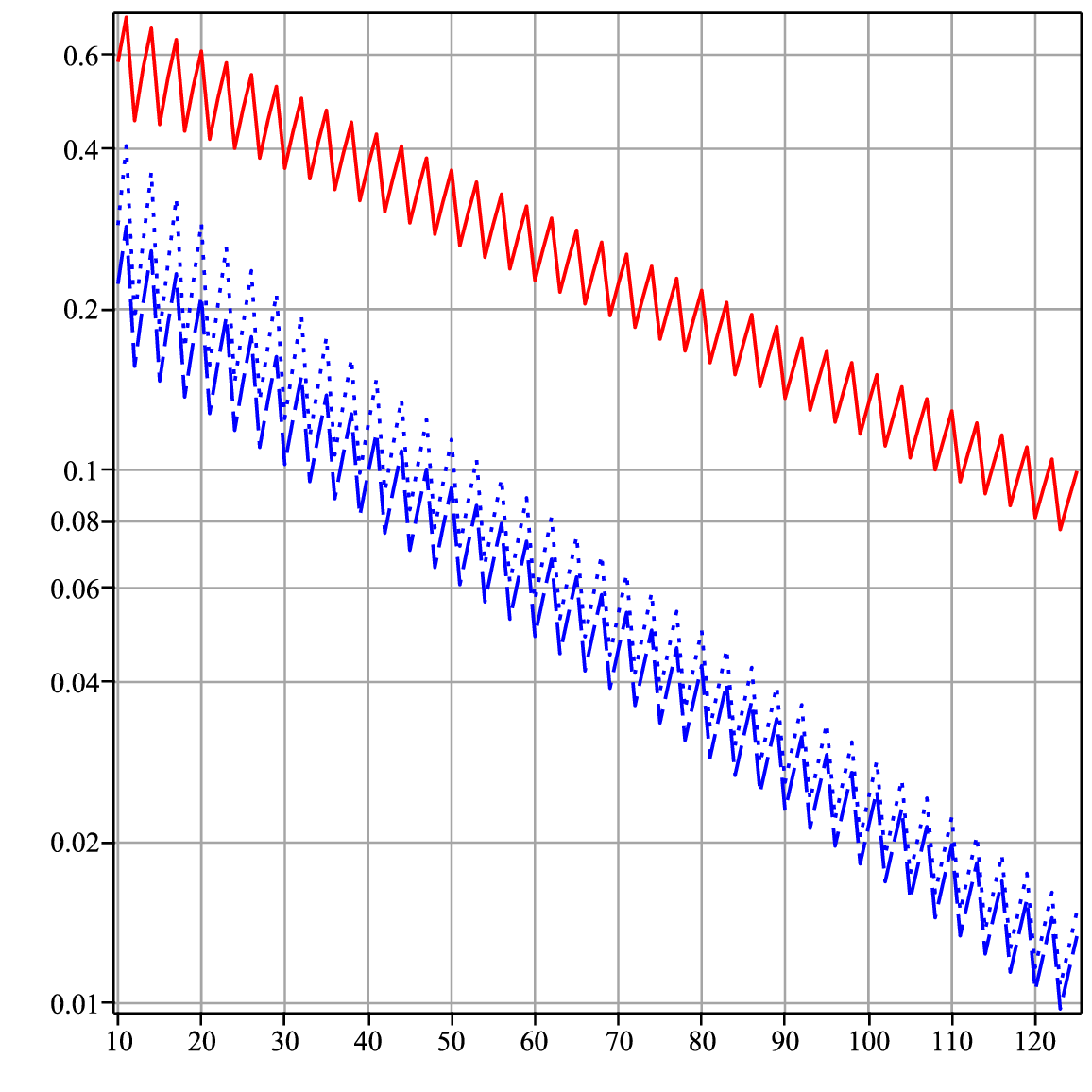}}%
\put(-10,31){\rotatebox{90}{$\Prob\left(N_{\mu}^\alpha\geq \lfloor AN_{\mu}\rfloor+1\right)$}} 
 \put(53,-5){$N_\mu$}
\end{picture}
}\end{center}
\vspace*{2mm} 
\caption{The tail of hypergeometric distribution $\Prob\left(N_{\mu}^\alpha\geq \lfloor AN_{\mu}\rfloor+1\right)$  as function of $N_{\mu}$ computed for $N=10^3$,  $M=N/4$ and  $A=1/3$. The upper bound (\ref{eq:hyper-tail-ub}) and exact numerical values  (\ref{eq:hyper-tail}) are represented, respectively, by the  blue dotted and dashed lines. The red solid line is the Hoeffding bound $\rme^{-N_\mu \mathrm{D}\left(A(\mu)\vert\vert M/N\right)}$, where $A(\mu)=\frac{\lfloor A N_\mu\rfloor+1}{N_\mu}$.} 
\label{fig:hyper-tail-ub}
\end{figure}

The numerical computation of (\ref{eq:delta-hyper-ub}) can be very inefficient because of combinatorial objects in (\ref{eq:hyper-tail}). The latter can be estimated by the Hoeffding bound~\cite{Hoeffding1963, Chvatal1979} but this bound is very loose as can be seen in Figure \ref{fig:hyper-tail-ub}.  A much tighter bound (see Figure \ref{fig:hyper-tail-ub}) is given by the 
\begin{lemma}
\label{Lemma:hyper-tail-ub} 
Suppose that $P+1/N<A$, where $P=M/N$, then 
\setlength{\arraycolsep}{0.0em}
\begin{eqnarray}
&&\Prob\left(N_{\mu}^\alpha\geq \lfloor AN_{\mu}\rfloor+1\right)\nonumber\\
&&~~~~~~~\leq \frac{{N-N_{\mu}\choose M-\left\lfloor A N_{\mu}\right\rfloor-1}{N_{\mu}\choose \left\lfloor A N_{\mu}\right\rfloor+1}}{{N\choose M}}\frac{1-r^{N_{\mu}- \left\lfloor A N_{\mu}\right\rfloor}}{1-r}
\label{eq:hyper-tail-ub},
\end{eqnarray}
\setlength{\arraycolsep}{5pt}
where $r=\frac{\left[P-\frac{\left\lfloor A N_{\mu}\right\rfloor+1}{N}\right][1-A(\mu)]      }{\left[1-P-\frac{N_{\mu}}{N}+\frac{\left\lfloor A N_{\mu}\right\rfloor+1}{N}\right]\,A(\mu)}$ and $A(\mu)=\frac{\lfloor A N_\mu\rfloor+1}{N_\mu}$. 
\end{lemma}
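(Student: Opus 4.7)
The plan is to bound the hypergeometric tail by comparing it to a geometric series, via a careful analysis of the ratio of consecutive terms. Let me write $T_k = \binom{N_\mu}{k}\binom{N-N_\mu}{M-k}/\binom{N}{M}$ for the probability mass at $k$ Byzantine nodes, and set $k^\ast = \lfloor A N_\mu\rfloor + 1$. The strategy is to write
\begin{equation}
\Prob(N_\mu^\alpha \geq k^\ast) = T_{k^\ast}\sum_{j=0}^{N_\mu - k^\ast}\frac{T_{k^\ast+j}}{T_{k^\ast}}
\end{equation}
and bound each factor $T_{k^\ast+j}/T_{k^\ast}$ by $r^j$, so that the sum collapses to the geometric partial sum $\frac{1 - r^{N_\mu - k^\ast + 1}}{1-r} = \frac{1 - r^{N_\mu - \lfloor A N_\mu\rfloor}}{1-r}$, which together with the explicit form of $T_{k^\ast}$ yields the claimed bound.

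First I would compute the exact consecutive ratio for the hypergeometric distribution,
\begin{equation}
\frac{T_{k+1}}{T_k} = \frac{(N_\mu - k)(M - k)}{(k+1)(N - N_\mu - M + k + 1)},
\end{equation}
and observe that each factor $(N_\mu - k)$ and $(M-k)$ is monotonically decreasing in $k$, while each of $(k+1)$ and $(N - N_\mu - M + k + 1)$ is monotonically increasing. Hence the ratio $T_{k+1}/T_k$ is monotonically decreasing in $k$ over the tail region $k \geq k^\ast$, so its supremum on this region is attained at $k = k^\ast$. Evaluating the ratio at $k = k^\ast$ and relaxing the denominators from $(k^\ast+1)(N-N_\mu-M+k^\ast+1)$ to the smaller quantity $k^\ast(N-N_\mu-M+k^\ast)$ gives an upper bound which, after dividing numerator and denominator by $N\cdot N_\mu$ and substituting $P = M/N$ and $A(\mu) = k^\ast/N_\mu$, rearranges exactly to the $r$ stated in the lemma. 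A telescoping product then yields $T_{k^\ast+j}/T_{k^\ast} \leq r^j$ for every $j \geq 0$.

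The main obstacle is verifying that the relaxation is legitimate, i.e.\ that $0 < r < 1$ so that the geometric series bound is meaningful and monotonicity is preserved. For this I would use the hypothesis $P + 1/N < A$, which, combined with the fact that $A(\mu) = (\lfloor A N_\mu\rfloor + 1)/N_\mu > A$, ensures $A(\mu) > P$. Since the mode of the hypergeometric distribution lies near $N_\mu P$, crossing above it at $k^\ast = N_\mu A(\mu) > N_\mu P$ guarantees that the ratio drops below unity; concretely, this condition is what forces the numerator $(M - k^\ast)(N_\mu - k^\ast)$ of the relaxed bound to be strictly smaller than its denominator $k^\ast(N - M - N_\mu + k^\ast)$. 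One must also handle the edge case $M - k^\ast < 0$ separately (in which the tail vanishes trivially), but the assumption $P + 1/N < A$ together with $k^\ast \leq N_\mu$ keeps things in the non-degenerate regime. Assembling $T_{k^\ast}$, the geometric sum, and the bound on $r$ then completes the proof.
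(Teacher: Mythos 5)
Your proposal is correct and matches the paper's argument: both factor out $T_{k^\ast}$, show that each ratio $T_{k^\ast+j}/T_{k^\ast}$ is at most $r^j$ with the same $r$ (the paper via a direct product bound on binomial-coefficient ratios, you via monotonicity of consecutive ratios followed by the relaxation of the denominator from $k^\ast+1$ and $N-N_\mu-M+k^\ast+1$ to $k^\ast$ and $N-N_\mu-M+k^\ast$), and then sum the geometric series to get $T_{k^\ast}\frac{1-r^{N_\mu-\lfloor A N_\mu\rfloor}}{1-r}$. Your observation that $r<1$ is equivalent to $P<A(\mu)$, which follows from $P+1/N<A<A(\mu)$, plays the same role as the paper's discussion of the location of the hypergeometric mode.
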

\begin{proof}
See Appendix \ref{appendix:proof}. 
\end{proof}
We note that the RHS in  (\ref{eq:hyper-tail-ub}) can be further bounded  by using~\cite{Macwilliams1977}    %
\setlength{\arraycolsep}{0.0em}
\begin{eqnarray}
&&\sqrt{\frac{1}{8P\left(1-P\right)N}}\rme^{N H\left(P\right)}\leq{N\choose M}\nonumber\\
&&~~~~~~~~~~~~~~~~~~\leq \sqrt{\frac{1}{2\pi P\left(1-P\right)N}}
\rme^{N H\left(P\right)},
\end{eqnarray}
\setlength{\arraycolsep}{5pt}
where $H(P)=-P\log(P)-(1-P)\log(1-P) $ is the  \emph{entropy}, in  the binomial coefficients which will reduce  numerical complexity of estimating (\ref{eq:hyper-tail}).

\begin{figure*}[!t]
%\vspace*{5mm} \hspace*{35mm} 
\setlength{\unitlength}{0.57mm}
%\hspace*{1mm}
\begin{center}
\begin{picture}(200,200)
\put(0,105){\includegraphics[height=100\unitlength,width=100\unitlength]{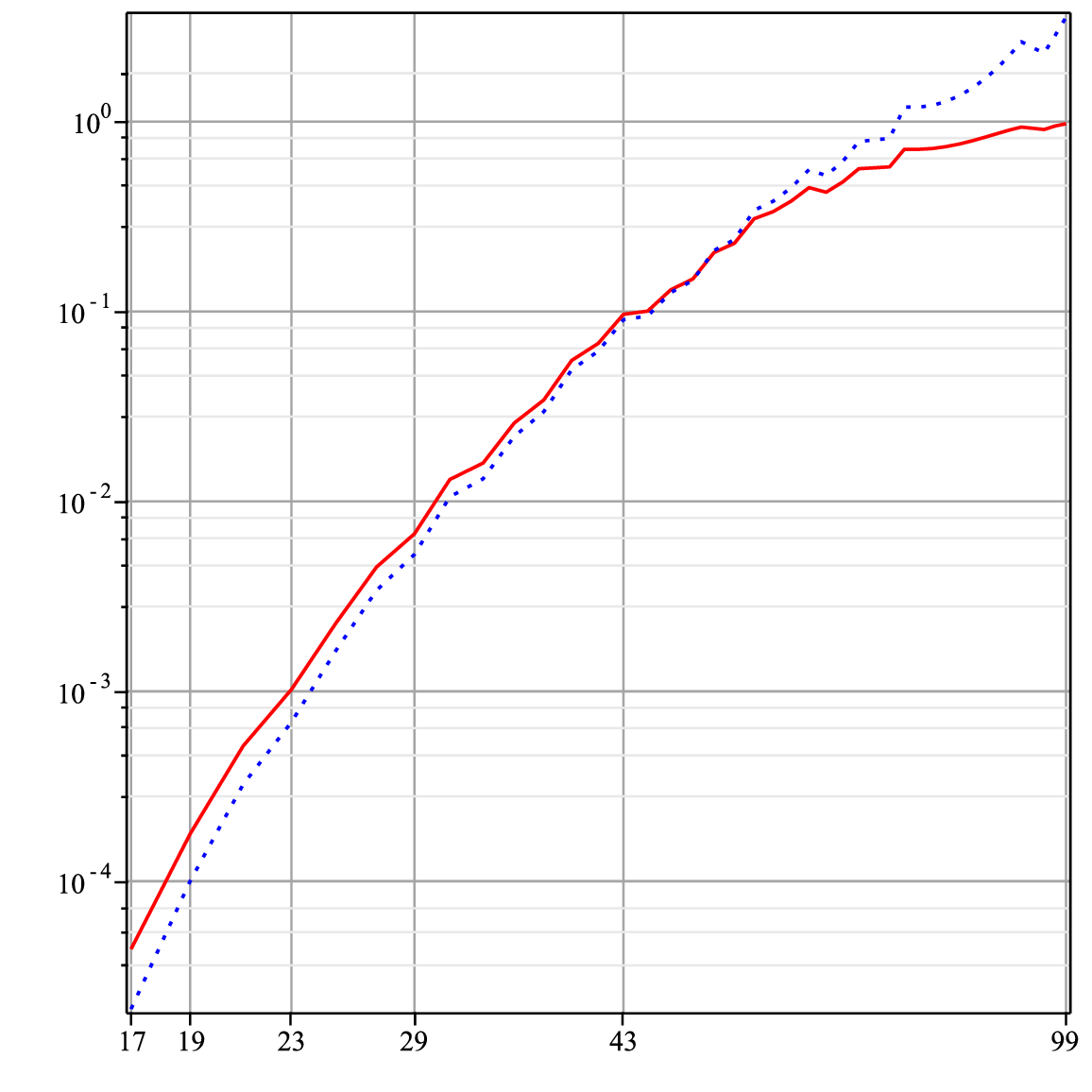}}%
\put(105,105){\includegraphics[height=100\unitlength,width=100\unitlength]{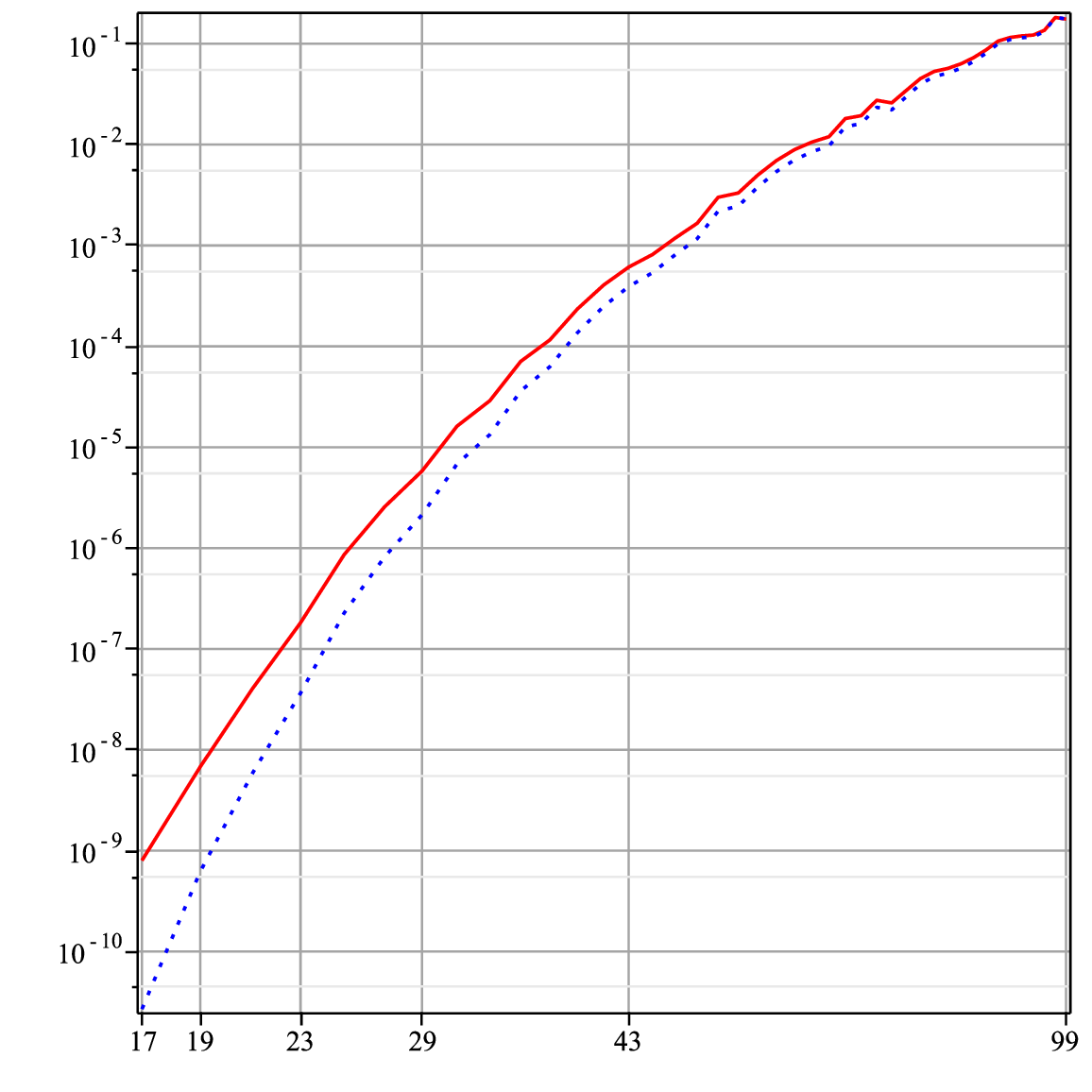}}
\put(0,-0){\includegraphics[height=100\unitlength,width=100\unitlength]{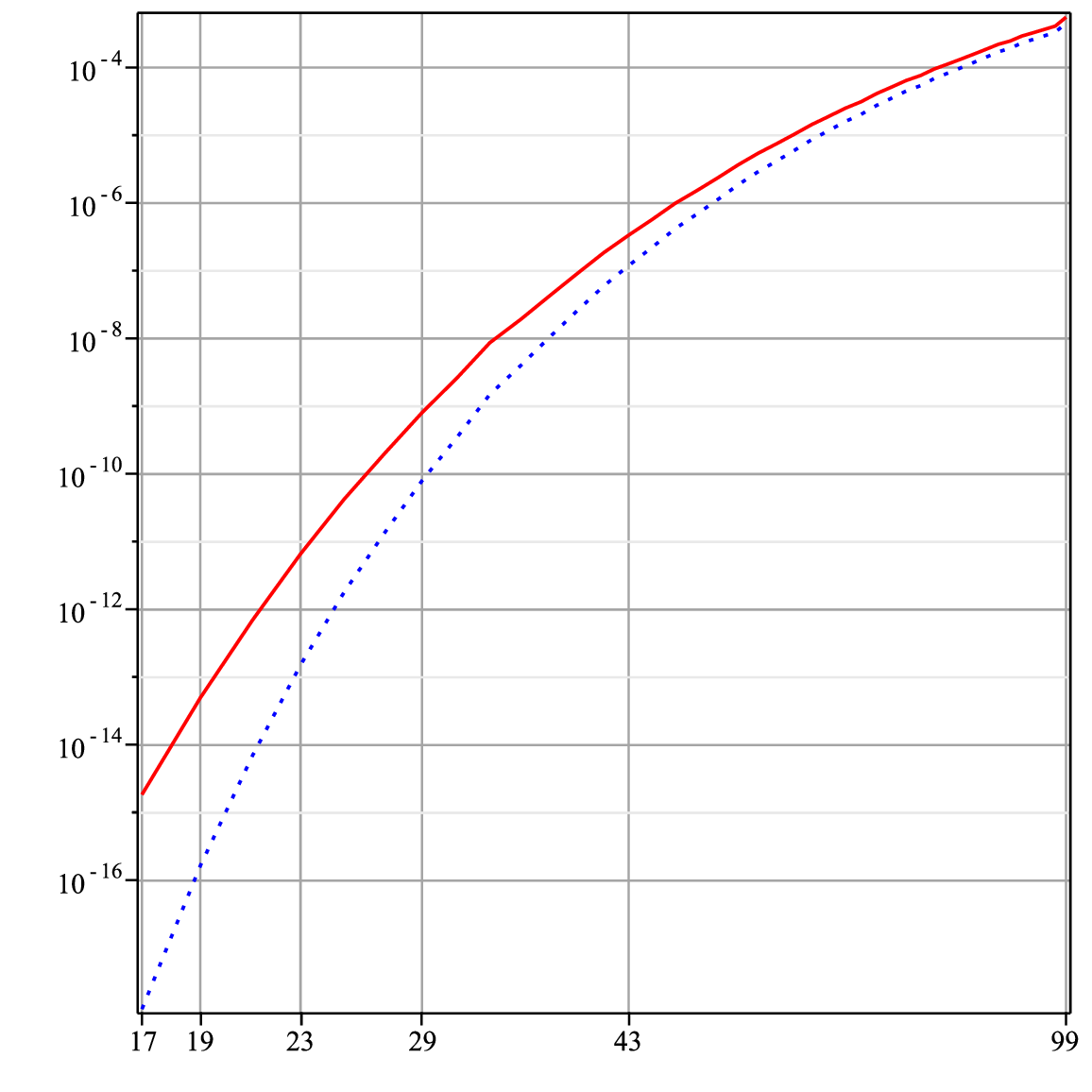}}
\put(105,0){\includegraphics[height=100\unitlength,width=100\unitlength]{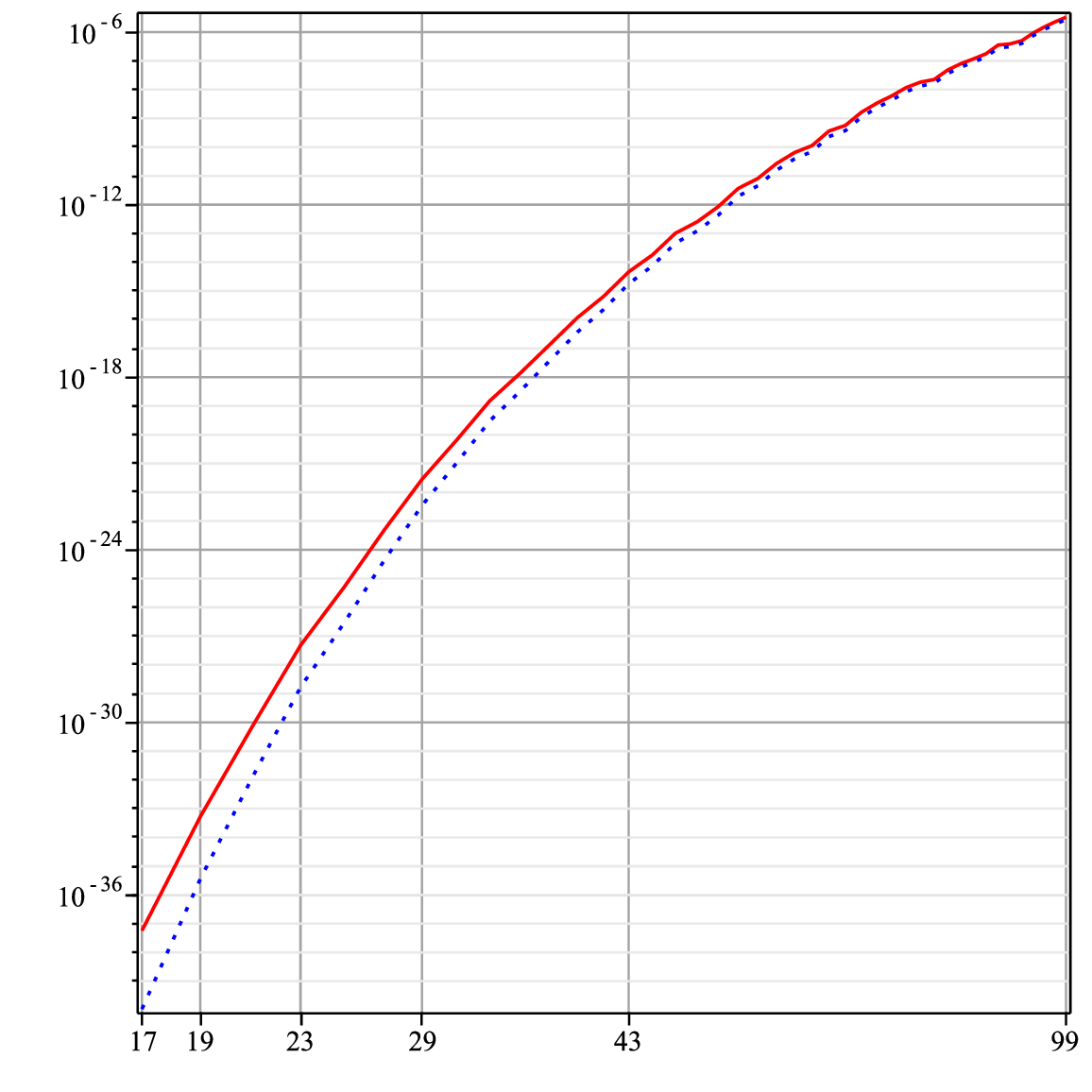}}

\put(-2,55){\rotatebox{90}{$\delta(E_3(1/3))$}} 
\put(103,55){\rotatebox{90}{$\delta(E_1(1/2))$}}
\put(-2,155){\rotatebox{90}{$\delta(E_1(1/3))$}}
\put(103,155){\rotatebox{90}{$\delta(E_2(1/3))$}}
 \put(155,-5){$K$}
 \put(50,-5){$K$}
\end{picture}
\end{center}
\vspace*{4.0mm} 
\caption{The probability of failure, $\delta$,  as a function of the number of committees, $K$, computed for the  binomial (red solid line)  and hypergeometric (blue dotted line) models of random partitions.  The number of nodes is  $N=10^4$, where $N=nK+r$, and the number of Byzantine  nodes is  \emph{exactly} (on \emph{average})  $M=N/4$ in  hypergeometric (binomial) model. %either \emph{exactly}  or on \emph{average}.  
The  $n$ and $n+1$ nodes are  assigned, respectively, to the  $K-r$ and $r$ committees.  Top left:  The probability of  failure $\delta(E_1(1/3))$ estimated by the upper bounds (\ref{eq:delta-binom-ub}) and (\ref{eq:delta-hyper-ub}). The latter uses (\ref{eq:hyper-tail-ub}). Top right: The probability of  failure $\delta(E_2(1/3))$ estimated by the upper bounds (\ref{eq:delta-E2-binom-ub}) and (\ref{eq:delta-E2-hyper-ub}). Bottom left: The probability of   failure $\delta(E_3(1/3))$ estimated by the upper bounds (\ref{eq:delta-Ek-binom-ub}) and (\ref{eq:delta-Ek-hyper-ub}). On the same range %of $K$ values
the probability of   failure $\delta(E_3(2/3))$ is at most $7.31\times 10^{-53}$. Bottom right: The probability of failure $\delta(E_1(1/2))$ estimated by the upper bounds (\ref{eq:delta-binom-ub}) and (\ref{eq:delta-hyper-ub}). The latter uses (\ref{eq:hyper-tail-ub}).
} 
\label{fig:failure-prob}
\end{figure*}
The union bound  (\ref{eq:delta-hyper-ub}) suggest that for $K<\infty$ the probability of failure $\delta(E_1(A))\rightarrow0$  if we assume that $N_\mu/N>0$ for all $\mu$ as $N\rightarrow\infty$. This follows from $\delta(E_1(A))\leq \sum_{\mu=1}^K\rme^{-N_\mu \mathrm{D}\left(A(\mu)\vert\vert M/N\right)}$ which is obtained by applying  the Hoeffding bound~\cite{Hoeffding1963, Chvatal1979} to the tail  (\ref{eq:hyper-tail}). However, outside of this asymptotic regime one can always find the number of committees $K$ and committee sizes $N_\mu$ for such that the RHS of  (\ref{eq:delta-hyper-ub}) exceeds \emph{unity}. The latter can be seen in the Figure \ref{fig:failure-prob} where we  plot the bounds (\ref{eq:delta-binom-ub}) and (\ref{eq:delta-hyper-ub}) together.  The upper bound (\ref{eq:delta-binom-ub}), which assumes the binomial distribution (\ref{eq:prob-binom}), is bounded from above by unity.  Also we expect the latter, which assumes binomial  distribution (\ref{eq:prob-binom}), to be an upper bound on  the $\delta(E_1(A))$ when  the hypergeometric distribution (\ref{eq:prob-hyper}) is assumed~\cite{Mozeika2023}.   
\begin{figure}[!t]
%\vspace*{5mm} \hspace*{35mm} 
\setlength{\unitlength}{0.57mm}
\begin{center}{
%\hspace*{1mm}
\begin{picture}(100,100)
\put(0,0){\includegraphics[height=100\unitlength,width=100\unitlength]{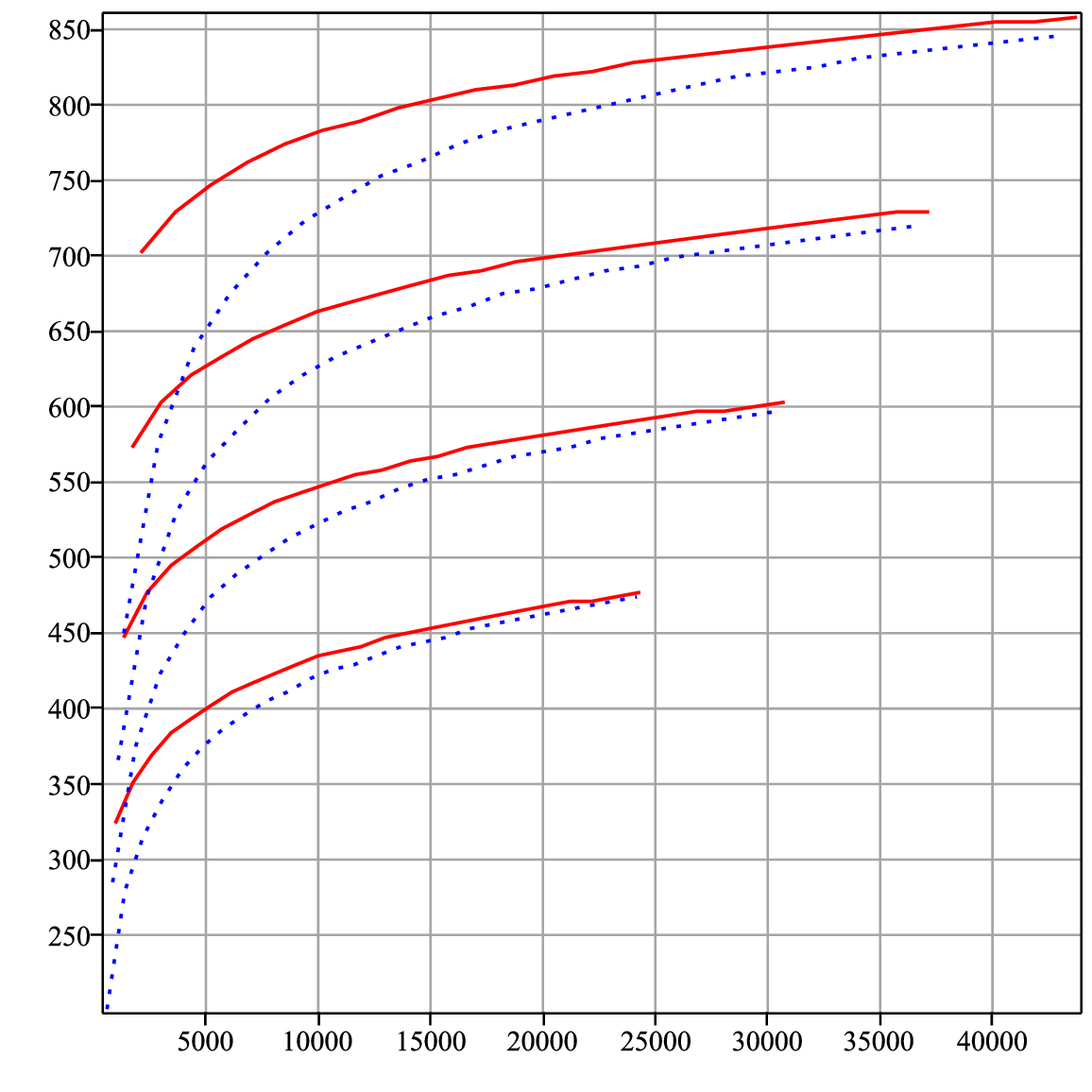}}%
\put(-2,65){$n$} 
 \put(50,-5){$N$}
\end{picture}
}\end{center}
%\vspace*{4.0mm} 
%
\caption{The committee size, $n$,  as a function of the number of nodes,  $N$,  plotted for the probability of failure $\delta(E_1(1/3))\in\{10^{-3},10^{-4},10^{-5},10^{-6}\}$ (bottom to top) and the fraction of Byzantine nodes $P=1/4$.  The red solid and blue dotted lines were computed from, respectively, the binomial (\ref{eq:delta-binom-ub})  and hypergeometric (\ref{eq:delta-hyper-ub}) upper bounds. The latter uses the bound (\ref{eq:hyper-tail-ub}).} 
\label{fig:comm-size}
\end{figure}

 For the number of nodes $N=nK$ we can use the upper bounds (\ref{eq:delta-binom-ub}) and (\ref{eq:delta-hyper-ub}) to find the committee size $n$ for the  given  failure probability $\delta(E_1(A))$ and fraction  of Byzantine nodes $P$.  We find that  the committee size  $n$  is growing  slowly with the number of nodes $N$ and  failure probability $\delta(E_1(A))$   (see Figure \ref{fig:comm-size}).  We note that the $n$ computed with (\ref{eq:delta-binom-ub}) is an upper bound for the $n$ computed with (\ref{eq:delta-hyper-ub}), which reproduces results of the previous study~\cite{Mozeika2023}, but the difference between  the two  is vanishing  as $N$ increases (see Figure \ref{fig:comm-size}). 

Finally,  the upper bounds (\ref{eq:delta-binom-ub}) and (\ref{eq:delta-hyper-ub}) can be used to find the  \emph{maximum} number of committees $K$ for a given $N$,   $\delta(E_1(A))$ and $P$. The details of an algorithm that implements the latter are  provided  in the Appendix \ref{appendix:alg}.

\subsection{Analysis of  {child committees} failure \label{appendix:2-child-comm-failure}}
  The next type of failure we consider is when  in at least one of the  committees produced by merging two child committees with a common parent (see Figure \ref{fig:failures}) the fraction of Byzantine nodes is greater than $A$.  The latter is the event  $E_2(A)=\cup_{\mu=2}^{\tilde{K}}\left\{S_{\mu}^\alpha\geq \lfloor A S_{\mu}\rfloor+1\right\}$, where  $S_{1}^\alpha=N_{1}^\alpha$, $S_{2}^\alpha=N_{2}^\alpha+N_{3}^\alpha$, $S_{3}^\alpha=N_{4}^\alpha+N_{5}^\alpha$ etc. with $\tilde{K}=\frac{K+1}{2}$, and hence  to compute  the  failure probability
\begin{equation}
\delta(E_2(A))=\Prob\left(\cup_{\mu=2}^{\tilde{K}}\left\{S_{\mu}^\alpha\geq \lfloor A S_{\mu}\rfloor+1\right\}\right)\label{def:delta-E2}
\end{equation}
we need to know the distribution of $\{S_{2}^\alpha,\ldots,S_{\tilde{K}}^\alpha\}$. The latter for the hypergeometric distribution (\ref{eq:prob-hyper}) is given   %
\begin{equation}
\Prob\left(S_{1}^\alpha,\ldots, S_{\tilde{K}}^\alpha\vert S_{1},\ldots,S_{\tilde{K}}; M\right)=\frac{\delta_{M;\sum_{\mu=1}^{\tilde{K}} S_{\mu}^\alpha}\prod_{\mu=1}^{\tilde{K}} {S_\mu\choose S^\alpha_\mu}}{{N\choose M}}\label{eq:prob-hyper-2-child}
\end{equation} 
and  for  the  binomial distribution (\ref{eq:prob-binom}) is given by 
\begin{equation}
\Prob\left(S_{1}^\alpha,\ldots, S_{\tilde{K}}^\alpha\vert S_{1},\ldots,S_{\tilde{K}}\right)=\prod_{\mu=1}^{\tilde{K}}\Prob\left(S^\alpha_\mu\vert S_\mu \right)\label{eq:prob-binom-2-child}.
\end{equation} 

The probability of failure (\ref{def:delta-E2}) is also equal to 
\begin{equation}
\delta(E_2(A))=1-\Prob\left(S_{2}^\alpha\leq\lfloor A S_{2}\rfloor ,\ldots, S_{\tilde{K}}^\alpha\leq\lfloor A S_{\tilde{K}}\rfloor\right)\label{eq:delta-E2}
\end{equation}
which gives us 
\begin{equation}
\delta(E_2(A))=1-\prod_{\mu=2}^{\tilde{K}}\sum_{S^\alpha_\mu=0}^{\lfloor A S_{\mu}\rfloor}\Prob\left(S^\alpha_\mu\vert S_\mu \right)\label{eq:delta-E2-binom-2}
\end{equation}
for the  binomial distribution (\ref{eq:prob-binom-2-child}) and hence by the \emph{Theorem} 3.3 in~\cite{Mozeika2023} for $P<A(\mu)<1$, where $A(\mu)=\frac{\lfloor A S_\mu\rfloor+1}{S_\mu}$, we have the upper bound   
\setlength{\arraycolsep}{0.0em}
\begin{eqnarray}
&&\delta(E_2(A))\nonumber\\
&&~~\leq 1-\prod_{\mu=2}^{\tilde{K}}\left[1-\frac{1}{1-r(\mu)}\frac{\rme^{-S_\mu \mathrm{D}\left(A(\mu)\vert\vert P\right)}}{)\sqrt{2\pi A(\mu)\left(1-A(\mu)\right)S_\mu}}\right]\label{eq:delta-E2-binom-ub},
\end{eqnarray}
\setlength{\arraycolsep}{5pt}
where $r(\mu)=\frac{P\left(1-A(\mu)\right)}{ A(\mu)\left(1-P\right)}$. For the hypergeometric distribution (\ref{eq:prob-hyper-2-child}) we can use the Boole's inequality 
\begin{equation}
\Prob\left(\cup_{\mu=2}^{\tilde{K}}\left\{S_{\mu}^\alpha\geq \lfloor A S_{\mu}\rfloor+1\right\}\right)\leq 
\sum_{\mu=2}^{\tilde{K}}\Prob\left(S_{\mu}^\alpha\geq \lfloor A S_{\mu}\rfloor+1\right)\label{eq:delta-E2-U-b}
\end{equation}
to obtain the upper bound  
\begin{equation}
\delta(E_2(A))\leq\sum_{\mu=2}^{\tilde{K}}\sum_{S_\mu^\alpha=\left\lfloor A S_{\mu}\right\rfloor+1}^{S_\mu}\frac{{N-S_{\mu}\choose M-S_{\mu}^\alpha}  {S_\mu\choose S_{\mu}^\alpha} }{{N\choose M}}\label{eq:delta-E2-hyper-ub}. 
\end{equation}
We note that above bound can be simplified by using the inequality (\ref{eq:hyper-tail-ub}) to bound tails of hypergeometric distributions.  The bounds (\ref{eq:delta-E2-binom-ub}) and (\ref{eq:delta-E2-hyper-ub}) are plotted in the Figure \ref{fig:failure-prob}.

\subsection{Analysis  of failures in top three committees}  
Finally, the last type of failure we consider is when the fraction of Byzantine nodes in the top three committees is greater than $B$ (see Figure \ref{fig:failures}). The latter is a   special case of the event  $E_k(B)=\Ind\left[\sum_{\nu=1}^kN_{\nu}^\alpha\geq \left\lfloor B \sum_{\nu=1}^kN_{\nu}\right\rfloor+1\right]$, where $k\geq3$, that  the total number of Byzantine  nodes in the \emph{top}  $k$ committees  exceeds the  fraction $B$ of the total number of nodes in these committees. Here,  we assumed,  without loss of generality,  that nodes of the top $k$ committees are labeled by $[k]$.%$\{1,\ldots,k\}$. 

Let us define the probability of failure 
\begin{equation}
\delta(E_k(B)) =\Prob
\left(\sum_{\nu=1}^kN_{\nu}^\alpha\geq \left\lfloor B \sum_{\nu=1}^kN_{\nu}\right\rfloor+1\right)\label{def:delta-Ek}
\end{equation}
 and consider above for the hypergeometric   (\ref{eq:prob-hyper}) and binomial (\ref{eq:prob-binom}) probability distributions. For the latter we obtain the probability  
\begin{equation}
\delta(E_k(B))
=\sum_{S_k^\alpha=\left\lfloor B S_{k}\right\rfloor+1}^{S_k}{S_k\choose S_k^\alpha}\,P^{S^\alpha_\mu}  \left[1-P\right]^{S_k-S_k^\alpha}
%=\Prob(S_k^\alpha\geq \lfloor B S_k\rfloor+1)
\label{eq:delta-Ek-binom},
\end{equation} 
where we have defined the sums  $S_k^\alpha=\sum_{\nu=1}^kN_{\nu}^\alpha$ and $S_k=\sum_{\nu=1}^kN_{\nu}$.  We note that above results  is a direct consequence of that the probability  (\ref{eq:prob-binom}) describes \emph{independent} (binomial) random variables (r.v.) and hence the  sum $S_k^\alpha$  is also  r.v. from  a   binomial distribution.  Furthermore, for the probability (\ref{eq:delta-Ek-binom}) we have the following bound   
\begin{equation}
\delta(E_k(B))\leq \frac{1}{1-r}\frac{\rme^{-S_k \mathrm{D}\left(B(k)\vert\vert P\right)}}{\sqrt{2\pi B(k)\left(1-B(k)\right)S_k}}\label{eq:delta-Ek-binom-ub},
\end{equation} 
where $B(k)=\frac{\left\lfloor B S_{k}\right\rfloor+1}{S_{k}}$ and $r=\frac{P\left(1-B(k)\right)}{ B(k)\left(1-P\right)}$, by the \emph{Theorem} 1 in~\cite{Ferrante2021}.

For the  hypergeometric  probability  distribution (\ref{eq:prob-hyper})  with  a bit of work  we obtain 
\begin{equation}
\delta(E_k(B))
= \sum_{S_k^\alpha=\left\lfloor B S_{k}\right\rfloor+1}^{S_k}\frac{{N-S_{k}\choose M-S_{k}^\alpha}  {S_k\choose S_{k}^\alpha} }{{N\choose M}}
\label{eq:delta-Ek-hyper},
\end{equation}
i.e. the tail of  hypergeometric distribution, which estimated as follows 
\begin{equation}
\delta(E_k(B))\leq \frac{{N-S_{k}\choose M-\left\lfloor B S_{k}\right\rfloor-1}{S_k\choose \left\lfloor B S_{k}\right\rfloor+1}}{{N\choose M}}\frac{1-r^{S_{k}- \left\lfloor B S_{k}\right\rfloor}}{1-r}
\label{eq:delta-Ek-hyper-ub},
\end{equation}
where $r=\frac{\left[P-\frac{\left\lfloor B S_{k}\right\rfloor+1}{N}\right][1-B(k)]      }{\left[1-P-\frac{S_{k}}{N}+\frac{\left\lfloor B S_{k}\right\rfloor+1}{N}\right]\,B(k)}$, by the    \emph{Lemma} \ref{Lemma:hyper-tail-ub}. For $k=3$ the bounds in (\ref{eq:delta-Ek-binom-ub}) and (\ref{eq:delta-Ek-hyper-ub}) are plotted in the Figure \ref{fig:failure-prob}.

\section{Analysis of failures: Proof of Lemma \ref{Lemma:hyper-tail-ub}\label{appendix:proof}}
\begin{proof}
 Let us  consider the probability (\ref{eq:hyper-tail}) as follows
\setlength{\arraycolsep}{0.0em}
\begin{eqnarray}
&&\Prob\left(N_{\mu}^\alpha\geq \lfloor AN_{\mu}\rfloor+1\right)=\sum_{N_{\mu}^\alpha=\left\lfloor A\, N_{\mu}\right\rfloor+1}^{N_{\mu}}\Prob\left(N_{\mu}^\alpha\vert N_{\mu}\right)\nonumber\\
&&=\Prob\left(x_0\vert N_{\mu}\right)\sum_{N_{\mu}^\alpha=\left\lfloor A\, N_{\mu}\right\rfloor+1}^{N_{\mu}}\frac{\Prob\left(N_{\mu}^\alpha\vert N_{\mu}\right)}{\Prob\left(x_0\vert N_{\mu}\right)}\nonumber\\
&&=\Prob\left(x_0\vert N_{\mu}\right)\sum_{\ell=0}^{N_{\mu}-\left\lfloor A\, N_{\mu}\right\rfloor-1}\!\!\frac{\Prob\left(\left\lfloor A\, N_{\mu}\right\rfloor+1+\ell\vert N_{\mu}\right)}{\Prob\left(x_0\vert N_{\mu}\right)},
\end{eqnarray}
\setlength{\arraycolsep}{5pt}
where in above we  assumed that  the  $\sup_{x}\Prob\left(x\vert N_{\mu}\right)$ is  \emph{unique} on the interval $\left[\left\lfloor A\, N_{\mu}\right\rfloor+1,N_{\mu} \right]$ and  defined
\begin{equation}
\Prob\left(x_0\vert N_{\mu}\right)=\sup_{x\in\left[\left\lfloor A\, N_{\mu}\right\rfloor+1,N_{\mu} \right]}\Prob\left(x\vert N_{\mu}\right)\label{def:sup}.
\end{equation}
Now the mode of the probability distribution $\Prob\left(N_{\mu}^\alpha\vert N_{\mu}\right)$, for non-integer $ \frac{(N_{\mu}+1)(M+1)}{N+2}$, is located at $ \left\lfloor\frac{(N_{\mu}+1)(M+1)}{N+2}\right\rfloor$. We note that $ \left\lfloor\frac{(N_{\mu}+1)(M+1)}{N+2}\right\rfloor\leq\lfloor(N_{\mu}+1)(P+1/N)\rfloor\leq\lfloor(P+1/N)N_{\mu}\rfloor+1$ and hence for $P+1/N<B$ we have that $\Prob\left(x_0\vert N_{\mu}\right)=\Prob\left(\left\lfloor A\, N_{\mu}\right\rfloor+1\vert N_{\mu}\right)$. Thus we obtain 
\setlength{\arraycolsep}{0.0em}
\begin{eqnarray}
&&\Prob\left(N_{\mu}^\alpha\geq \lfloor AN_{\mu}\rfloor+1\right)=\Prob\left(\left\lfloor A\, N_{\mu}\right\rfloor+1\vert N_{\mu}\right)\nonumber\\
&&~~~~~~~~~~~~~~~\times\sum_{\ell=0}^{N_{\mu}-\left\lfloor A\, N_{\mu}\right\rfloor-1}\frac{\Prob\left(\left\lfloor A\, N_{\mu}\right\rfloor+1+\ell\vert N_{\mu}\right)}{\Prob\left(\left\lfloor A\, N_{\mu}\right\rfloor+1\vert N_{\mu}\right)}\label{eq:delta-k-hyper-sum}.
\end{eqnarray}
\setlength{\arraycolsep}{5pt}
Let us  for $\ell>0$ and $n=\left\lfloor A\, N_{\mu}\right\rfloor+1$ consider the ratio  
\begin{equation}
\frac{\Prob\left(n+\ell\vert N_{\mu}\right)}{\Prob\left(n\vert N_{\mu}\right)}
=\frac{{N-N_{\mu}\choose M-n-\ell}{N_{\mu}\choose n+\ell}
}{ {N-N_{\mu}\choose M-n}{N_{\mu}\choose n}  }.
\end{equation}
We note that 
\begin{equation}
\frac{{N\choose n}}{{N\choose n+\ell}}
=\frac{(N-n-\ell)!(n+\ell)!}{ (N-n)!n!}\geq \frac{n^\ell}{(N-n)^\ell}
\end{equation}
and 
\begin{equation}
\frac{{N\choose n}}{{N\choose n-\ell}}
=\frac{(N-n+\ell)!(n-\ell)!}{ (N-n)!n!}\geq \frac{(N-n)^\ell }{n^\ell},
\end{equation}
hence using above we obtain the following inequality  
\setlength{\arraycolsep}{0.0em}
\begin{eqnarray}
&&\frac{\Prob\left(n+\ell\vert N_{\mu}\right)}{\Prob\left(n\vert N_{\mu}\right)}\nonumber\\
&&~~~~~~~\leq\left\{\frac{(M-n)(N_{\mu}-n)      }{(N-N_{\mu}-M+n)n}\right\}^\ell\nonumber\\
%
%&&~~~~~~~~~=\left\{\frac{[M-(\left\lfloor A\, N_{\mu}\right\rfloor+1)][N_{\mu}-(\left\lfloor A\, N_{\mu}\right\rfloor+1)]      }{[N-N_{\mu}-M+(\left\lfloor A\, N_{\mu}\right\rfloor+1)][\left\lfloor A\, N_{\mu}\right\rfloor+1]}\right\}^\ell\nonumber\\
%
&&~~~~~~~~~=\left\{\frac{\left[P-\frac{\left\lfloor A\, N_{\mu}\right\rfloor+1}{N}\right][1-A(\mu)]      }{\left[1-P-\frac{N_{\mu}}{N}+\frac{\left\lfloor A\, N_{\mu}\right\rfloor+1}{N}\right]\,A(\mu)}\right\}^\ell,
\end{eqnarray}
\setlength{\arraycolsep}{5pt}
where we defined $A(\mu)=\frac{\left\lfloor A\, N_{\mu}\right\rfloor+1}{N_{\mu}}$ and $P=M/N$. Finally, using above  in  (\ref{eq:delta-k-hyper-sum}) we obtain 
\setlength{\arraycolsep}{0.0em}
\begin{eqnarray}
&&\Prob\left(N_{\mu}^\alpha\geq \lfloor AN_{\mu}\rfloor+1\right)\nonumber\\
%&=&\Prob\left(\left\lfloor A\, N_{\mu}\right\rfloor+1\vert N_{\mu}\right)\sum_{\ell=0}^{N_{\mu}-\left\lfloor A\, N_{\mu}\right\rfloor-1}\frac{\Prob\left(\left\lfloor A\, N_{\mu}\right\rfloor+1+\ell\vert N_{\mu}\right)}{\Prob\left(\left\lfloor A\, N_{\mu}\right\rfloor+1\vert N_{\mu}\right)}\nonumber\\
%
&&~~~~~~~~~\leq \Prob\left(\left\lfloor A\, N_{\mu}\right\rfloor+1\vert N_{\mu}\right)\sum_{\ell=0}^{N_{\mu}- \left\lfloor A\, N_{\mu}\right\rfloor-1}r^\ell\nonumber\\
&&~~~~~~~~~~~=\Prob\left(\left\lfloor A\, N_{\mu}\right\rfloor+1\vert N_{\mu}\right)\frac{1-r^{N_{\mu}- \left\lfloor A\, N_{\mu}\right\rfloor}}{1-r},
%
%&&=\frac{{N-N_{\mu}\choose M-\left\lfloor A\, N_{\mu}\right\rfloor-1}{N_{\mu}\choose \left\lfloor A\, N_{\mu}\right\rfloor+1}}{{N\choose M}}\frac{1-r^{N_{\mu}- \left\lfloor A\, N_{\mu}\right\rfloor}}{1-r}
%
\end{eqnarray}
\setlength{\arraycolsep}{5pt}
where we defined 
\begin{equation}
r=\frac{\left[P-\frac{\left\lfloor A\, N_{\mu}\right\rfloor+1}{N}\right][1-A(\mu)]      }{\left[1-P-\frac{N_{\mu}}{N}+\frac{\left\lfloor A\, N_{\mu}\right\rfloor+1}{N}\right]\,A(\mu)}\label{def:r}.
\end{equation}
\end{proof}

\section{Analysis of failures: description  of  the algorithm\label{appendix:alg}}
Here we give details  of an algorithm which given the number of nodes $N$, assumed fraction of adversarial nodes parameter  $P$, a fraction of committee parameter $A$ and the probability of failure $\delta$ would give us  the \emph{maximum} number of committees $K$ and  committee sizes. The algorithm computes  the RHS of  (\ref{eq:delta-binom}) for the  
   $N=nK+r$ nodes where  the  $n$ and $n+1$ nodes are assigned, respectively, to the  $K-r$ and $r$ committees. Initially, all $N$ nodes are in one committee, and in subsequent iterations, the number of committees $K$ is increased by two until the probability  (\ref{eq:delta-binom}) is equal to  or less than $\delta$. We note that instead of (\ref{eq:delta-binom}) the upper bound (\ref{eq:delta-binom-ub}) can be used. The latter  has  lower numerical complexity but leads to slightly larger committee sizes. Furthermore, if  we assume  the hypergeometric distribution   (\ref{eq:prob-hyper})  then     the upper bound (\ref{eq:delta-hyper-ub}) can be used instead of (\ref{eq:delta-binom-ub}) in the algorithm. The pseudocode of the algorithm   is provided in the Algorithm \ref{alg:comm}. 
   
   \begin{algorithm}[tb]
\caption{The algorithm to compute minimal sizes of committees resilient to $\delta$ failure rate.}
\label{alg:comm}
\begin{algorithmic}[1]
\Require{$N$, $\delta$, $A$, $P$}
\Ensure{$K$, $n$, $r$, $\mathrm{Prob}$}
\State $K \gets 1$
\State $n \gets N$
\State $r \gets 0$
\State $m \gets 0$
\State $\mathrm{Prob} \gets 0$
\Repeat
\State //Save values of  $K$, $n$, $r$ and $\mathrm{Prob}$.
  \State $K_{-1} \gets K$
  \State $n_{-1} \gets n$
  \State $r_{-1} \gets r$
  \State $Prob_{-1} \gets Prob$
  \State //Compute next (odd) $K$.
  \State $m \gets m + 1$
  \State $K \gets 2m + 1$
  \State //Compute remainder, $r$, and quotient, $n$, when $N$ is divided by $K$.
  \State $r\gets rem(N,K)$
  \State $n\gets quot(N,K)$
  \If{$r > 0$}
  \State //Compute CDF of the Binomial(n,P) and Binomial(n+1,P).
    \State $\mathrm{Prob}_0 \gets \Prob(X \leq \lfloor An \rfloor \, \vert \, n, P)$
    \State $\mathrm{Prob}_1 \gets \Prob(X \leq \lfloor A(n+1) \rfloor \, \vert \, n+1, P)$
    \State //Compute probability of failure.
    \State $\Prob \gets 1 - \mathrm{Prob}_0^{K-r}\mathrm{Prob}_1^r$
  \Else
    \State $\mathrm{Prob}_0 \gets \Prob(X \leq \lfloor An \rfloor \, \vert \, n, P)$
    \State $Prob \gets 1 - \mathrm{Prob}_0^K$
  \EndIf
\Until{$\mathrm{Prob} > \delta$}
\State $K \gets K_{-1}$
\State $n \gets n_{-1}$
\State $r \gets r_{-1}$
\State $\mathrm{Prob} \gets \mathrm{Prob}_{-1}$
\end{algorithmic}
\end{algorithm}

\section{Analysis of failures:  properties of  $\delta(E_k(A))$ \label{appendix:delta-prop}}
In this section we consider  properties of the failure probability  $\delta(E_k(A))$ which hold  
for \emph{any} probability distribution $\Prob\left(N_{1}^\alpha,\ldots, N_{K}^\alpha\vert N_{1},\ldots,N_{K}\right)$. First, we show that the probability  $\delta(E_k(A))$, which is defined for $k=1$ in  (\ref{def:delta-E1}), for $k=2$ in  (\ref{def:delta-E2}) and for $k\geq3$ in (\ref{def:delta-Ek}),  is a monotonic non-increasing function of $A$. 
\begin{prop}
For the $A,B\in[0,1]$ such that $A\leq B$ and $k\geq1$ the probability of failure  $\delta(E_k(B)\leq \delta(E_k(A)$.
\label{Proposition:E_1(A)-E_1(B)-ineq}
 \end{prop}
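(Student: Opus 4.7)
The plan is to establish the inequality by showing that for each $k\geq 1$ the event $E_k(B)$ is a \emph{subset} of $E_k(A)$ whenever $A\leq B$, and then invoke monotonicity of probability measures. The key observation is that the floor function is non-decreasing, so $A\leq B$ implies $\lfloor A x\rfloor\leq \lfloor B x\rfloor$ for every nonnegative real $x$, and hence the threshold $\lfloor B x\rfloor+1$ is at least $\lfloor A x\rfloor+1$. This turns a high-Byzantine event with threshold $B$ into a high-Byzantine event with the lower threshold $A$.

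First I would handle the case $k=1$. Recall that $E_1(A)=\cup_{\mu=1}^K\{N_\mu^\alpha\geq \lfloor A N_\mu\rfloor+1\}$. Since $A\leq B$ yields $\lfloor A N_\mu\rfloor+1\leq \lfloor B N_\mu\rfloor+1$ for every $\mu$, each elementary event $\{N_\mu^\alpha\geq \lfloor B N_\mu\rfloor+1\}$ is contained in $\{N_\mu^\alpha\geq \lfloor A N_\mu\rfloor+1\}$, and taking unions preserves the inclusion, giving $E_1(B)\subseteq E_1(A)$. Monotonicity of $\Prob$ then yields $\delta(E_1(B))\leq \delta(E_1(A))$.

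Next I would treat $k=2$ in exactly the same manner, but with the merged sizes $S_\mu=N_{2\mu-2}+N_{2\mu-1}$ (and corresponding Byzantine counts $S_\mu^\alpha$) playing the role of $N_\mu$; the same floor-monotonicity argument gives $E_2(B)\subseteq E_2(A)$. For $k\geq 3$ the event $E_k(B)$ is the single indicator $\{S_k^\alpha\geq \lfloor B S_k\rfloor+1\}$, and again $\lfloor B S_k\rfloor+1\geq \lfloor A S_k\rfloor+1$ forces $E_k(B)\subseteq E_k(A)$.

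I do not anticipate a serious obstacle: the statement is a pure measure-theoretic monotonicity fact, independent of whether the underlying distribution on $(N_1^\alpha,\ldots,N_K^\alpha)$ is hypergeometric, binomial, or of the more general mixture form discussed in Appendix \ref{appendix:random-part}. The only subtlety worth flagging is that the argument relies solely on the non-decreasing nature of $x\mapsto \lfloor A x\rfloor$ in $A$, which holds at every real $x\geq 0$; no continuity or strict inequality assumption on $A$ or $B$ is needed, so the inequality extends uniformly to the boundary cases $A=0$ and $B=1$.
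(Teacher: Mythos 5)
Your proof is correct and takes essentially the same route as the paper's: both arguments reduce to the observation that floor monotonicity in $A$ makes $E_k(B)$ a subset of $E_k(A)$, the paper phrasing this via the complementary sums $1-\delta(E_k(A))\leq 1-\delta(E_k(B))$ while you state the event inclusion directly and invoke monotonicity of $\Prob$. The two presentations are trivially equivalent, and your observation that the argument is distribution-free matches the paper's framing of the appendix.
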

 \begin{proof}
 Let  us consider the $k=1$ case as follows  
\setlength{\arraycolsep}{0.0em}
\begin{eqnarray}
&&1-\delta(E_1(A))\nonumber\\
&&~~~=\Prob\left(N_{1}^\alpha\leq\lfloor A N_{1}\rfloor ,\ldots, N_{K}^\alpha\leq\lfloor A N_{K}\rfloor\right)\nonumber\\
%
%&=&\sum_{N_{1}^\alpha}\cdots \sum_{N_{K}^\alpha}\Prob\left(N_{1}^\alpha,\ldots, N_{K}^\alpha\vert N_{1},\ldots,N_{K}; M\right)\nonumber\\
%
%&&~~~~~\times\prod_{\mu=1}^K\Ind\left[N^{\alpha}_{\mu}\leq\lfloor AN_{\mu}\rfloor\right]\nonumber\\
%
&&~~~=\sum_{N_{1}^\alpha=0}^{\lfloor AN_{1}\rfloor}\cdots \sum_{N_{K}^\alpha=0}^{\lfloor AN_{K}\rfloor}\Prob\left(N_{1}^\alpha,\ldots, N_{K}^\alpha\vert N_{1},\ldots,N_{K}; M\right)\nonumber\\
&&~~~~~\leq
\sum_{N_{1}^\alpha=0}^{\lfloor BN_{1}\rfloor}\cdots \sum_{N_{K}^\alpha=0}^{\lfloor BN_{K}\rfloor}\Prob\left(N_{1}^\alpha,\ldots, N_{K}^\alpha\vert N_{1},\ldots,N_{K}; M\right)\nonumber\\
&&~~~~~~~~~~=1-\delta(E_1(B))
\label{eq:delta-E1(A)-E1(B)-ineq}
\end{eqnarray}
\setlength{\arraycolsep}{5pt}
%
%where in above $\lfloor AN_{\mu}\rfloor\leq\lfloor BN_{\mu}\rfloor$
and hence $\delta(E_1(B))\leq\delta(E_1(A))$. We note that the same argument can be used to show that $\delta(E_k(B))\leq\delta(E_k(A))$ for $k\geq2$. 
 \end{proof}

 Second, for a fixed $A$ we establish how the $\delta(E_k(A))$ of different $k$ are related to which other in the following two propositions. 

\begin{prop}
For $A\in[0,1]$ the probability of failure $ \delta(E_2(A)\leq \delta(E_1(A)$. 
\label{Proposition:E_2-E_1-ineq}
 \end{prop}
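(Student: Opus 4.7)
The plan is to establish the stronger, purely combinatorial claim that $E_2(A) \subseteq E_1(A)$ as events on the sample space of $(N_1^\alpha, \ldots, N_K^\alpha)$, after which the proposition follows immediately from monotonicity of probability and holds for \emph{any} underlying distribution, whether the hypergeometric (\ref{eq:prob-hyper-2-child}), the binomial (\ref{eq:prob-binom-2-child}), or the general mixture. This would also make the result a twin of Proposition \ref{Proposition:E_1(A)-E_1(B)-ineq}, whose proof likewise proceeds by an event-containment argument before taking expectations.

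First I would unpack the definitions. Recall that $E_1(A) = \cup_{\mu=1}^{K}\{N_\mu^\alpha \geq \lfloor A N_\mu \rfloor + 1\}$ and $E_2(A) = \cup_{\mu=2}^{\tilde K}\{S_\mu^\alpha \geq \lfloor A S_\mu \rfloor + 1\}$, where each $S_\mu$ sums exactly two sibling committees of the tree, say $C_i$ and $C_j$ with $\{i,j\} \subseteq \{1,\ldots,K\}$. Suppose $E_2(A)$ holds, witnessed by a pair $(i,j)$ with $N_i^\alpha + N_j^\alpha \geq \lfloor A(N_i + N_j)\rfloor + 1$. I want to conclude that at least one of the individual exceedance events $\{N_i^\alpha \geq \lfloor A N_i\rfloor + 1\}$ or $\{N_j^\alpha \geq \lfloor A N_j\rfloor + 1\}$ must already occur, so that the outcome lies in $E_1(A)$.

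The key step, and the only real piece of work, is the elementary floor inequality $\lfloor x \rfloor + \lfloor y \rfloor \leq \lfloor x + y \rfloor$ for all reals $x,y$, which I would verify in one line via the fractional-part decomposition $x = \lfloor x\rfloor + \{x\}$. The contrapositive of the desired implication is then immediate: if \emph{both} $N_i^\alpha \leq \lfloor A N_i \rfloor$ and $N_j^\alpha \leq \lfloor A N_j \rfloor$, then
\[
N_i^\alpha + N_j^\alpha \;\leq\; \lfloor A N_i \rfloor + \lfloor A N_j \rfloor \;\leq\; \lfloor A(N_i + N_j) \rfloor,
\]
which contradicts the witness for $E_2(A)$. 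Hence at least one individual committee in the pair must exceed its own threshold, placing the outcome in $E_1(A)$.

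Taking probabilities under any joint distribution on $(N_1^\alpha,\ldots,N_K^\alpha)$ then gives $\delta(E_2(A)) = \Prob(E_2(A)) \leq \Prob(E_1(A)) = \delta(E_1(A))$, as required. The main (and essentially only) obstacle is making the floor arithmetic airtight; once $\lfloor x\rfloor + \lfloor y\rfloor \leq \lfloor x+y\rfloor$ is in hand the argument is purely set-theoretic, requires no concentration estimates, and is uniform in $A$ and in the choice of distribution on $(N_\mu^\alpha)$.
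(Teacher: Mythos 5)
Your proof is correct and mirrors the paper's argument: both hinge on the floor inequality $\lfloor AN_i\rfloor+\lfloor AN_j\rfloor\leq\lfloor A(N_i+N_j)\rfloor$, which the paper packages as the indicator inequality (\ref{eq:1-ineq}) and you package as the event containment $E_2(A)\subseteq E_1(A)$ via the contrapositive. The observation that this is distribution-agnostic matches the paper's framing of Appendix~\ref{appendix:delta-prop}, so there is no substantive difference.
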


\begin{proof}
First,  we consider
\setlength{\arraycolsep}{0.0em}
\begin{eqnarray}
&&1-\delta(E_2(A))\nonumber\\
&&~~~~~=\Prob\left(S_{2}^\alpha\leq\lfloor A S_{2}\rfloor ,\ldots, S_{\tilde{K}}^\alpha\leq\lfloor A S_{\tilde{K}}\rfloor\right)
%\sum_{\mu=2}^{\tilde{K}}\Prob\left(S_{\mu}^\alpha\geq \lfloor A S_{\mu}\rfloor+1\right)
\nonumber\\
&&~~~~~~~\geq 
\Prob\left(S_{1}^\alpha\leq\lfloor A S_{1}\rfloor ,\ldots, S_{\tilde{K}}^\alpha\leq\lfloor A S_{\tilde{K}}\rfloor\right)\nonumber\\
&&~~~~~=\sum_{N_{1}^\alpha}\cdots \sum_{N_{K}^\alpha}\Prob\left(N_{1}^\alpha,\ldots, N_{K}^\alpha\vert N_{1},\ldots,N_{K}; M\right)\nonumber\\
&&~~~~~\times\Ind\left[N^{\alpha}_{1}\leq \lfloor AN_{1}\rfloor\right]\nonumber\\
&&~~~~~\times\prod_{\mu=2}^{\tilde{K}}\Ind\left[N^{\alpha}_{2\mu}+N^{\alpha}_{2\mu+1}\leq \lfloor A(N_{2\mu}+N_{2\mu+1})\rfloor\right]\label{eq:delta-E2-ineq},
\end{eqnarray}
\setlength{\arraycolsep}{5pt}
where in above we used the definition  (\ref{eq:delta-E2}). Second, we show that  
\setlength{\arraycolsep}{0.0em}
\begin{eqnarray}
&&\Ind\left[N^{\alpha}_{2\mu}+N^{\alpha}_{2\mu+1}\leq
\lfloor A(N_{2\mu}+N_{2\mu+1})\rfloor\right]\nonumber\\
&&~~~~~\geq \Ind\left[N^{\alpha}_{2\mu}\leq
\lfloor AN_{2\mu}\rfloor\right]
\Ind\left[N^{\alpha}_{2\mu+1}\leq
\lfloor AN_{2\mu+1}\rfloor\right]\label{eq:1-ineq}
\end{eqnarray}
\setlength{\arraycolsep}{5pt}
Here we only need to prove that  when the LHS in above is $0$ then the  RHS  can not be $1$. Let us assume that the latter  is not true then $N^{\alpha}_{2\mu}+N^{\alpha}_{2\mu+1}>
\lfloor A(N_{2\mu}+N_{2\mu+1})\rfloor$, $N^{\alpha}_{2\mu}\leq
\lfloor AN_{2\mu} \rfloor$ and $N^{\alpha}_{2\mu+1}\leq
\lfloor AN_{2\mu+1}\rfloor$. However, the last two inequalities imply that  $N^{\alpha}_{2\mu}+N^{\alpha}_{2\mu+1}\leq
\lfloor AN_{2\mu} \rfloor+\lfloor AN_{2\mu+1}\rfloor$, but $\lfloor AN_{2\mu} \rfloor+\lfloor AN_{2\mu+1}\rfloor\leq  \lfloor A(N_{2\mu}+N_{2\mu+1})\rfloor$ and hence $$N^{\alpha}_{2\mu}+N^{\alpha}_{2\mu+1}\leq  \lfloor A(N_{2\mu}+N_{2\mu+1})\rfloor <N^{\alpha}_{2\mu}+N^{\alpha}_{2\mu+1}$$ which is not possible.  Thus the inequality (\ref{eq:1-ineq}) is true. Now using the latter in   (\ref{eq:delta-E2-ineq}) gives us 
\setlength{\arraycolsep}{0.0em}
\begin{eqnarray}
&&1-\delta(E_2(A))\nonumber\\
%
%&&~~~~~~~\geq \Prob\left(S_{1}^\alpha\leq\lfloor A S_{1}\rfloor ,\ldots, S_{\tilde{K}}^\alpha\leq\lfloor A S_{\tilde{K}}\rfloor\right)\nonumber\\
%
&&~~~~~\geq\sum_{N_{1}^\alpha}\cdots \sum_{N_{K}^\alpha}\Prob\left(N_{1}^\alpha,\ldots, N_{K}^\alpha\vert N_{1},\ldots,N_{K}; M\right)\nonumber\\
&&~~~~~~~\times\Ind\left[N^{\alpha}_{1}\leq \lfloor AN_{1}\rfloor\right]\prod_{\mu=2}^{\tilde{K}}
\Ind\left[N^{\alpha}_{2\mu}\leq
\lfloor AN_{2\mu}\rfloor\right]\nonumber\\
&&~~~~~~~\times
\Ind\left[N^{\alpha}_{2\mu+1}\leq
\lfloor AN_{2\mu+1}\rfloor\right]\nonumber\\
&&~~~~~~~~~=\Prob\left(N_{1}^\alpha\leq\lfloor A N_{1}\rfloor ,\ldots, N_{K}^\alpha\leq\lfloor A N_{K}\rfloor\right)\nonumber\\
&&~~~~~~~~~=1-\delta(E_1(A))\label{eq:delta-E2-E1-ineq}
\end{eqnarray}
\setlength{\arraycolsep}{5pt}
and hence $\delta(E_2(A))\leq\delta(E_1(A))$. 
\end{proof}

\begin{prop}
 For $A\in[0,1]$ and $k\geq3$ the probability of failure  $ \delta(E_k(A)\leq \delta(E_1(A)$.  
\label{Proposition:E_k-E_1-ineq}
 \end{prop}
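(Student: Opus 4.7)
My plan is to mirror the approach used in the proof of Proposition~\ref{Proposition:E_2-E_1-ineq} and show the event inclusion $E_k(A) \subseteq E_1(A)$, from which $\delta(E_k(A)) \leq \delta(E_1(A))$ follows immediately by monotonicity of probability for \emph{any} underlying distribution on $(N_1^\alpha,\ldots,N_K^\alpha)$.

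The heart of the argument is the implication: if $\sum_{\nu=1}^{k} N_\nu^\alpha \geq \lfloor A \sum_{\nu=1}^{k} N_\nu \rfloor + 1$, then there exists some $\nu \in \{1,\ldots,k\}$ such that $N_\nu^\alpha \geq \lfloor A N_\nu \rfloor + 1$. I would establish this by contrapositive. Suppose, to the contrary, that $N_\nu^\alpha \leq \lfloor A N_\nu \rfloor$ holds for every $\nu \in \{1,\ldots,k\}$. Then
$$\sum_{\nu=1}^{k} N_\nu^\alpha \;\leq\; \sum_{\nu=1}^{k} \lfloor A N_\nu \rfloor \;\leq\; \Bigl\lfloor A \sum_{\nu=1}^{k} N_\nu \Bigr\rfloor,$$
where the last inequality follows by iterating $k-1$ times the elementary fact $\lfloor x \rfloor + \lfloor y \rfloor \leq \lfloor x + y \rfloor$, valid for all real $x,y$. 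This contradicts the assumed strict inequality and proves the implication. This is essentially the same contrapositive used for the $k=2$ case in Proposition~\ref{Proposition:E_2-E_1-ineq}, now inductively extended.

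Translating into indicator form, as in the cited proof, I would write
$$\Ind\!\left[\sum_{\nu=1}^{k} N_\nu^\alpha \leq \Bigl\lfloor A \sum_{\nu=1}^{k} N_\nu \Bigr\rfloor\right] \;\geq\; \prod_{\nu=1}^{k} \Ind\!\left[N_\nu^\alpha \leq \lfloor A N_\nu \rfloor\right] \;\geq\; \prod_{\mu=1}^{K} \Ind\!\left[N_\mu^\alpha \leq \lfloor A N_\mu \rfloor\right],$$
the second inequality using only that indicators take values in $\{0,1\}$ and $k \leq K$. Taking expectations with respect to $\Prob(N_1^\alpha,\ldots,N_K^\alpha \vert N_1,\ldots,N_K;M)$ gives $1 - \delta(E_k(A)) \geq 1 - \delta(E_1(A))$, which is the desired conclusion.

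There is no substantive obstacle here: the proof is a direct structural extension of the $k=2$ argument and is distribution-agnostic (it uses neither the hypergeometric nor the binomial structure). The only care-point is the floor subadditivity step iterated to general $k \geq 3$, which is standard; and noting that the top-$k$ committees form a subset of all $K$ committees, so $E_k(A)$ being a subset of the event ``some $\mu \in [k]$ fails'' is in turn a subset of $E_1(A)$ (``some $\mu \in [K]$ fails'').
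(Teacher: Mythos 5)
Your proof is correct and takes essentially the same approach as the paper: the key ingredient in both is the superadditivity of the floor function, $\sum_{\nu=1}^k \lfloor A N_\nu \rfloor \leq \lfloor A \sum_{\nu=1}^k N_\nu \rfloor$, which yields the indicator inequality $\Ind[\sum_{\nu=1}^k N_\nu^\alpha \leq \lfloor A\sum_\nu N_\nu\rfloor] \geq \prod_{\nu=1}^k \Ind[N_\nu^\alpha \leq \lfloor A N_\nu\rfloor]$, followed by padding the product out to all $K$ committees and averaging over the (arbitrary) joint distribution. Your opening framing as the event inclusion $E_k(A)\subseteq E_1(A)$ is a cleaner high-level statement of the same argument, but the mechanics are identical to the paper's.
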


\begin{proof}
First, from  the definition (\ref{def:delta-Ek}) follows the equality 
\setlength{\arraycolsep}{0.0em}
\begin{eqnarray}
1-\delta(E_k(A))&=&\Prob
\left(\sum_{\nu=1}^kN_{\nu}^\alpha\leq \lfloor A \sum_{\nu=1}^kN_{\nu}\rfloor\right)\nonumber\\
&=&\sum_{N_{1}^\alpha}\cdots \sum_{N_{K}^\alpha}\Prob\left(N_{1}^\alpha,\ldots, N_{K}^\alpha\vert N_{1},\ldots,N_{K}; M\right)\nonumber\\
&&~~~~~\times\Ind\left[\sum_{\nu=1}^kN_{\nu}^\alpha\leq \lfloor A \sum_{\nu=1}^kN_{\nu}\rfloor\right].
\label{eq:1-delta-E_k}
\end{eqnarray}
\setlength{\arraycolsep}{5pt}
Second, in order to show that 
\setlength{\arraycolsep}{0.0em}
\begin{eqnarray}
\Ind\left[\sum_{\nu=1}^kN_{\nu}^\alpha\leq \lfloor A \sum_{\nu=1}^kN_{\nu}\rfloor\right]
&\geq &\prod_{\nu=1}^k\Ind\left[N^{\alpha}_{\nu}\leq
\lfloor AN_{\nu}\rfloor\right]\label{eq:1-k-ineq}
\end{eqnarray}
\setlength{\arraycolsep}{5pt}
we assume that  $ \sum_{\nu=1}^kN_{\nu}^\alpha> \lfloor A \sum_{\nu=1}^kN_{\nu}\rfloor$ and $N^{\alpha}_{\nu}\leq
\lfloor AN_{\nu}\rfloor $ for all $\nu\in[k]$.  This implies that $\sum_{\nu=1}^kN^{\alpha}_{\nu}\leq
\sum_{\nu=1}^k\lfloor AN_{\nu}\rfloor$, but $\sum_{\nu=1}^k\lfloor AN_{\nu}\rfloor\leq \lfloor A \sum_{\nu=1}^kN_{\nu}\rfloor$ and $$\sum_{\nu=1}^kN^{\alpha}_{\nu}\leq \lfloor A \sum_{\nu=1}^kN_{\nu}\rfloor <\sum_{\nu=1}^kN_{\nu}^\alpha$$ which can not be true. Hence   (\ref{eq:1-k-ineq}) is correct and using this inequality in (\ref{eq:1-delta-E_k}) gives us 
\setlength{\arraycolsep}{0.0em}
\begin{eqnarray}
1-\delta(E_k(A))&\geq&\sum_{N_{1}^\alpha}\cdots \sum_{N_{K}^\alpha}\Prob\left(N_{1}^\alpha,\ldots, N_{K}^\alpha\vert N_{1},\ldots,N_{K}; M\right)\nonumber\\
&&~~~~~\times\prod_{\nu=1}^k\Ind\left[N^{\alpha}_{\nu}\leq
\lfloor AN_{\nu}\rfloor\right]\nonumber\\
&&\geq\sum_{N_{1}^\alpha}\cdots \sum_{N_{K}^\alpha}\Prob\left(N_{1}^\alpha,\ldots, N_{K}^\alpha\vert N_{1},\ldots,N_{K}; M\right)\nonumber\\
&&~~~~~\times\prod_{\mu=1}^K\Ind\left[N^{\alpha}_{\mu}\leq
\lfloor AN_{\mu}\rfloor\right]\nonumber\\
&&~~~=1-\delta(E_1(A))\label{eq:1-delta-E_k-ineq}
\end{eqnarray}
\setlength{\arraycolsep}{5pt}
from which follows $ \delta(E_k(A)\leq \delta(E_1(A)$.  
 \end{proof}

%\section{Analysis of failures: details of simulation\label{appendix:simulation}}
%\bibliographystyle{IEEEtran}
%\bibliography{new}

% Generated by IEEEtran.bst, version: 1.12 (2007/01/11)

\end{document}